\DeclareSymbolFont{frenchscript}{OMS}{ztmcm}{m}{n}
\DeclareMathSymbol{\Bi}{\mathord}{frenchscript}{66}  
\DeclareMathSymbol{\Ri}{\mathord}{frenchscript}{82}  
\DeclareMathSymbol{\U}{\mathord}{frenchscript}{85}   
\DeclareMathSymbol{\HC}{\mathord}{frenchscript}{67}  
\DeclareMathSymbol{\Lab}{\mathord}{frenchscript}{76} 
\DeclareMathSymbol{\pow}{\mathord}{frenchscript}{80} 
\newfont{\fsc}{eusm10 scaled 1100}      
\newfont{\fscs}{eusm10 scaled 800}      
\def\processfont#1{\textrm{\fsc #1}}
\def\AA{\processfont{A}}                             
\def\NN{\processfont{N}}                             
\def\NNs{\textrm{\fscs N}}                           
\def\SS{\processfont{S}}                             
\def\TT{\processfont{T}}                             
\def\FF{\processfont{F}}                             
\def\MM{\processfont{M}}                             
\def\PP{\processfont{P}}                             
\def\sk{\mathfrak{s}}                                
\def\tk{\mathfrak{t}}                                
\spnewtheorem{observation}{Observation}{\bfseries}{\itshape}
\newcommand{\weg}[1]{}
\def\moverlay{\mathpalette\mov@rlay}
\def\mov@rlay#1#2{\leavevmode\vtop{%
   \baselineskip\z@skip \lineskiplimit-\maxdimen
   \ialign{\hfil$\m@th#1##$\hfil\cr#2\crcr}}}
\newcommand{\charfusion}[3][\mathord]{
    #1{\ifx#1\mathop\vphantom{#2}\fi
        \mathpalette\mov@rlay{#2\cr#3}
      }
    \ifx#1\mathop\expandafter\displaylimits\fi}
\newcommand{\dcup}{\charfusion[\mathbin]{\cup}{\mbox{\Large$\cdot$}}}
\newcommand{\rec}[1]{\langle #1 \rangle}             
\newcommand{\plat}[1]{\raisebox{0pt}[0pt][0pt]{#1}}  
\def\precond#1{{\vphantom{#1}}^\bullet #1}             
\def\postcond#1{{#1}^\bullet}                          
\newcommand{\denote}[1]{[\hspace{-1.4pt}[#1]\hspace{-1.4pt}]}  
\newcommand{\A}{C}                                   
\newcommand{\B}{D}                                   
\newcommand{\E}{P}                                   
\newcommand{\F}{Q}                                   
\newcommand{\SC}{{S}}                                
\newcommand{\RS}{{\cal S}}                           
\newcommand{\SX}{S}                                  
\newcommand{\tu}{t}                                  
\newcommand{\T}{{\rm T}}                             
\newcommand{\IN}{
    \ensuremath{%
        \mathop{\rm I\mkern-2.5mu N}%
        \nolimits%
    }%
}
\def\restrictedto{\mathop\upharpoonright}            
\newcommand{\obis}[2]{\mathrel{_{#1}\,		     
	\raisebox{.3ex}{$\underline{\makebox[.7em]{$\leftrightarrow$}}$}
                  \,_{#2}}}
\newcommand{\bis}[1]{\obis{}{#1}}		     
\newcommand{\we}{I}                                  
\newcommand{\Thm}[1]{Theorem~\ref{thm:#1}}
\newcommand{\Prop}[1]{Proposition~\ref{prop:#1}}
\newcommand{\Def}[1]{Definition~\ref{df:#1}}
\newcommand{\Sect}[1]{Section~\ref{sec:#1}}
\newcommand{\SSect}[1]{Section~\ref{ssec:#1}}
\newcommand{\Fig}[1]{Figure~\ref{fig:#1}}
\newcommand{\Tab}[1]{Table~\ref{tab:#1}}
\def\comesfrom{\@transition\leftarrowfill}
\def\goesto{\@transition\rightarrowfill}
\def\ngoesto{\@transition\nrightarrowfill}
\def\@transition#1{\@@transition{#1}}
\newbox\@transbox
\newbox\@arrowbox
\newbox\@downbox
\def\@@transition#1#2%
\wd\@transbox{#1}
\@transbox\hbox{$\mathop{\box\@arrowbox}\limits^{\box\@transbox}$}
\def\nrightarrowfill{$\m@th\mathord-\mkern-6mu%
  \cleaders\hbox{$\mkern-2mu\mathord-\mkern-2mu$}\hfill
  \mkern-6mu\mathord\not\mkern-2mu\mathord\rightarrow$}
\newcommand{\ar}[1]{\mathrel{\goesto{#1}}}           
\begin{document}

\titlerunning{Structure Preserving Bisimilarity}
\title{Structure Preserving Bisimilarity,\texorpdfstring{\\}{}
Supporting an Operational Petri Net Semantics of CCSP}
\authorrunning{R.J.\ van Glabbeek}
\author{Rob J. van Glabbeek \inst{1}\inst{2}
}
\institute{
NICTA\thanks{NICTA is funded by the Australian Government through the
    Department of Communications and the Australian Research Council
    through the ICT Centre of Excellence Program.}, Sydney, Australia\and
Computer Science and Engineering, UNSW, Sydney, Australia
}
\maketitle

\setcounter{footnote}{0}
\begin{abstract}
In 1987 Ernst-R\"udiger Olderog provided an operational Petri net
semantics for a subset of CCSP, the union of Milner's CCS and Hoare's CSP\@.
It assigns to each process term in the subset a labelled, safe place/transition net.
To demonstrate the correctness of the approach, Olderog established agreement\linebreak[3]
(1) with the standard interleaving semantics of CCSP up to strong bisimulation equivalence, and
(2) with standard denotational interpretations of CCSP operators in terms of Petri nets
up to a suitable semantic equivalence that fully respects the causal structure of nets.
For the latter he employed a linear-time semantic equivalence, namely having the same causal nets.

\hspace{2em}%
This paper strengthens (2), employing a novel branching-time version of this
semantics---\emph{structure preserving bisimilarity}---that moreover preserves inevitability.
I establish that it is a congruence for the operators of CCSP\@.
\end{abstract}

\section{Introduction}

The system description languages CCS and CSP have converged to one
theory of processes which---following a suggestion of
M.\  Nielsen---was called ``CCSP'' in \cite{Old87}.
The standard semantics of this language is in
terms of labelled transition systems modulo strong bisimilarity, or
some coarser semantic equivalence. In the case of CCS, a labelled
transition system is obtained by taking as states the closed CCS
expressions, and as transitions those that are derivable from a
collection of rules by induction on the structure of these expressions \cite{Mi90ccs};
this is called a \emph{(structural) operational semantics}~\cite{Pl04}.
The semantics of CSP was originally given in quite a different way
\cite{BHR84,Ho85}, but \cite{OH86} provided an operational semantics
of CSP in the same style as the one of CCS, and showed its consistency
with the original semantics.

Such semantics abstract from concurrency relations between actions by reducing concurrency
to interleaving. An alternative semantics, explicitly modelling concurrency relations,
requires models like Petri nets \cite{Re85} or event structures \cite{NPW81,Wi87a}.
In \cite{Wi87a,LG91} non-interleaving semantics for variants of CCSP are
given in terms of event structures. However, infinite event structures
are needed to model simple systems involving loops, whereas Petri
nets, like labelled transition systems, offer finite representations
for some such systems. Denotational semantics in terms of Petri nets of the
essential CCSP operators are given in \cite{GM84,Wi84,GV87}---see
\cite{Old91} for more references. Yet a satisfactory denotational
Petri net semantics treating recursion has to my knowledge not been proposed. 

Olderog \cite{Old87,Old91} closed this gap by giving an
operational net semantics in the style of \cite{Pl04,Mi90ccs} for a
subset of CCSP including recursion---to be precise: \emph{guarded} recursion.
To demonstrate the correctness of his approach, Olderog proposed two
fundamental properties such a semantics should have, and established
that both of them hold \cite{Old91}:\vspace{-1ex}
\begin{itemize}
\item \emph{Retrievability}. The standard interleaving semantics for
  process terms should be retrievable from the net semantics.
\item \emph{Concurrency}. The net semantics should represent the intended
  concurrency of process terms.
\vspace{-1ex}
\end{itemize}
The second requirement was not met by an earlier operational net
semantics from \cite{DDM87}.

To formalise the first requirement, Olderog notes that a Petri net induces
a labelled transition system through the firing relation between
markings---the \emph{interleaving case graph}---and requires that the
interpretation of any CCSP expression as a state in a labelled
transition system through the standard interleaving semantics of CCSP
should be strongly bisimilar to the interpretation of this expression
as a marking in the interleaving case graph induced by its net semantics.

To formalise the second requirement, he notes that the intended
concurrency of process terms is clearly represented in the standard
denotational semantics of CCSP operators \cite{GM84,Wi84,GV87}, and
thus requires that the result of applying a CCSP operator to its
arguments according to this denotational semantics yields a similar
result as doing this according to the new operational semantics.
The correct representation of recursion follows from the correct
representation of the other operators through the observation that a
recursive call has the very same interpretation as a Petri net as its unfolding.

A crucial parameter in this formalisation is the meaning of ``similar''.
A logical choice would be semantic equivalence according to one of the
non-interleaving equivalences found in the literature, where a finer
or more discriminating semantics gives a stronger result.
To match the concurrency requirement, this equivalence should
\emph{respect concurrency}, in that it only identifies nets which
display the same concurrency relations.
In this philosophy, the semantics of a CCSP expression is not so much
a Petri net, but a semantic equivalence class of Petri nets, i.e.\ a
Petri net after abstraction from irrelevant differences between nets.
For this idea to be entirely consistent, one needs to require that the
chosen equivalence is a congruence for all CCSP constructs, so that
the meaning of the composition of two systems, both represented as
equivalence classes of nets, is independent of the choice of
representative Petri nets within these classes.

Instead of selecting such an equivalence, Olderog instantiates
``similar'' in the above formalisation of the second requirement
with \emph{strongly bisimilar}, a new relation between nets that
should not be confused with the traditional relation of strong
bisimilarity \mbox{between} labelled transition systems. As shown in
\cite{ABS91}, strong bisimilarity fails to be an equivalence: it is
reflexive and symmetric, but not transitive.

As pointed out in \cite[Page 37]{Old91} this general shortcoming of
strong bisimilarity ``does not affect the purpose of this relation'' in
that book: there it ``serves as an auxiliary notion in proving that
structurally different nets are causally equivalent''.
Here \emph{causal equivalence} means having the same causal nets,
where \emph{causal nets} \cite{Pe77,Re13} model concurrent
computations or executions of Petri nets.
So in effect Olderog does choose a semantic equivalence on Petri nets,
namely having the same concurrent computations as modelled by causal nets.
This equivalence fully respects concurrency.

\subsection{Structure preserving bisimilarity}\label{sec:spectrum}

The contribution of the present paper is a strengthening of this
choice of a semantic equivalence on Petri nets.  I propose the novel
\emph{structure preserving bisimulation} equivalence on Petri nets,
and establish that the result of applying a CCSP operator to its
arguments according to the standard denotational semantics yields a
structure preserving bisimilar result as doing this according to
Olderog's operational semantics.  The latter is an immediate
consequence of the observation that structure preserving bisimilarity
between two nets is implied by Olderog's strong bisimilarity.
\vspace{-2ex}

\begin{figure}[htb]
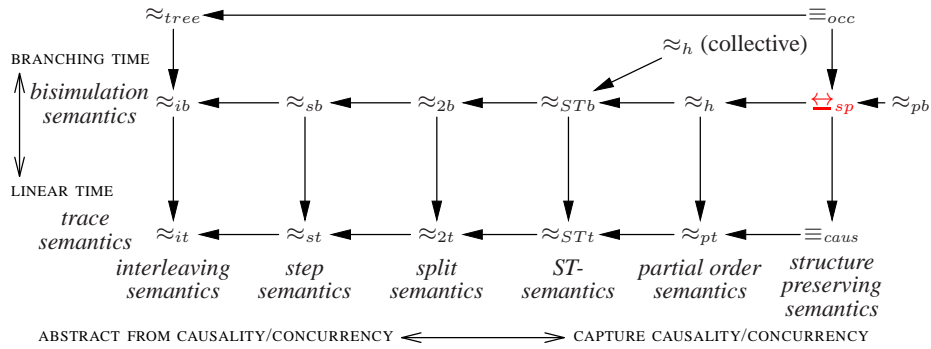

\expandafter\ifx\csname graph\endcsname\relax
   \csname newbox\expandafter\endcsname\csname graph\endcsname
\fi
\ifx\graphtemp\undefined
  \csname newdimen\endcsname\graphtemp
\fi
\expandafter\setbox\csname graph\endcsname
 =\vtop{\vskip 0pt\hbox{%
    \graphtemp=\baselineskip
    \multiply\graphtemp by -1
    \divide\graphtemp by 2
    \advance\graphtemp by .5ex
    \advance\graphtemp by 0.534in
    \rlap{\kern 0.552in\lower\graphtemp\hbox to 0pt{\hss \raisebox{-2pt}[0pt][0pt]{\textit{bisimulation}}\hss}}%
    \graphtemp=\baselineskip
    \multiply\graphtemp by 1
    \divide\graphtemp by 2
    \advance\graphtemp by .5ex
    \advance\graphtemp by 0.534in
    \rlap{\kern 0.552in\lower\graphtemp\hbox to 0pt{\hss \textit{semantics}\hss}}%
    \graphtemp=\baselineskip
    \multiply\graphtemp by -1
    \divide\graphtemp by 2
    \advance\graphtemp by .5ex
    \advance\graphtemp by 1.224in
    \rlap{\kern 0.529in\lower\graphtemp\hbox to 0pt{\hss \textit{trace}\hss}}%
    \graphtemp=\baselineskip
    \multiply\graphtemp by 1
    \divide\graphtemp by 2
    \advance\graphtemp by .5ex
    \advance\graphtemp by 1.224in
    \rlap{\kern 0.529in\lower\graphtemp\hbox to 0pt{\hss \raisebox{2pt}[0pt][0pt]{\textit{semantics}}\hss}}%
    \graphtemp=\baselineskip
    \multiply\graphtemp by -1
    \divide\graphtemp by 2
    \advance\graphtemp by .5ex
    \advance\graphtemp by 1.454in
    \rlap{\kern 0.989in\lower\graphtemp\hbox to 0pt{\hss \raisebox{-2pt}[0pt][0pt]{\textit{interleaving}}\hss}}%
    \graphtemp=\baselineskip
    \multiply\graphtemp by 1
    \divide\graphtemp by 2
    \advance\graphtemp by .5ex
    \advance\graphtemp by 1.454in
    \rlap{\kern 0.989in\lower\graphtemp\hbox to 0pt{\hss \textit{semantics}\hss}}%
    \graphtemp=\baselineskip
    \multiply\graphtemp by -1
    \divide\graphtemp by 2
    \advance\graphtemp by .5ex
    \advance\graphtemp by 1.454in
    \rlap{\kern 1.678in\lower\graphtemp\hbox to 0pt{\hss \raisebox{-2pt}[0pt][0pt]{\textit{step}}\hss}}%
    \graphtemp=\baselineskip
    \multiply\graphtemp by 1
    \divide\graphtemp by 2
    \advance\graphtemp by .5ex
    \advance\graphtemp by 1.454in
    \rlap{\kern 1.678in\lower\graphtemp\hbox to 0pt{\hss \textit{semantics}\hss}}%
    \graphtemp=\baselineskip
    \multiply\graphtemp by -1
    \divide\graphtemp by 2
    \advance\graphtemp by .5ex
    \advance\graphtemp by 1.454in
    \rlap{\kern 2.368in\lower\graphtemp\hbox to 0pt{\hss \raisebox{-2pt}[0pt][0pt]{\textit{split}}\hss}}%
    \graphtemp=\baselineskip
    \multiply\graphtemp by 1
    \divide\graphtemp by 2
    \advance\graphtemp by .5ex
    \advance\graphtemp by 1.454in
    \rlap{\kern 2.368in\lower\graphtemp\hbox to 0pt{\hss \textit{semantics}\hss}}%
    \graphtemp=\baselineskip
    \multiply\graphtemp by -1
    \divide\graphtemp by 2
    \advance\graphtemp by .5ex
    \advance\graphtemp by 1.454in
    \rlap{\kern 3.057in\lower\graphtemp\hbox to 0pt{\hss \raisebox{-2pt}[0pt][0pt]{\textit{ST-}}\hss}}%
    \graphtemp=\baselineskip
    \multiply\graphtemp by 1
    \divide\graphtemp by 2
    \advance\graphtemp by .5ex
    \advance\graphtemp by 1.454in
    \rlap{\kern 3.057in\lower\graphtemp\hbox to 0pt{\hss \textit{semantics}\hss}}%
    \graphtemp=\baselineskip
    \multiply\graphtemp by -1
    \divide\graphtemp by 2
    \advance\graphtemp by .5ex
    \advance\graphtemp by 1.454in
    \rlap{\kern 3.747in\lower\graphtemp\hbox to 0pt{\hss \raisebox{-2pt}[0pt][0pt]{\textit{partial order}}\hss}}%
    \graphtemp=\baselineskip
    \multiply\graphtemp by 1
    \divide\graphtemp by 2
    \advance\graphtemp by .5ex
    \advance\graphtemp by 1.454in
    \rlap{\kern 3.747in\lower\graphtemp\hbox to 0pt{\hss \textit{semantics}\hss}}%
    \graphtemp=\baselineskip
    \multiply\graphtemp by -2
    \divide\graphtemp by 2
    \advance\graphtemp by .5ex
    \advance\graphtemp by 1.500in
    \rlap{\kern 4.437in\lower\graphtemp\hbox to 0pt{\hss \raisebox{-2pt}[0pt][0pt]{\textit{structure}}\hss}}%
    \graphtemp=.5ex
    \advance\graphtemp by 1.500in
    \rlap{\kern 4.437in\lower\graphtemp\hbox to 0pt{\hss \textit{preserving}\hss}}%
    \graphtemp=\baselineskip
    \multiply\graphtemp by 2
    \divide\graphtemp by 2
    \advance\graphtemp by .5ex
    \advance\graphtemp by 1.500in
    \rlap{\kern 4.437in\lower\graphtemp\hbox to 0pt{\hss \raisebox{2pt}[0pt][0pt]{\textit{semantics}}\hss}}%
    \graphtemp=.5ex
    \advance\graphtemp by 1.224in
    \rlap{\kern 0.989in\lower\graphtemp\hbox to 0pt{\hss $\approx_{it}$\hss}}%
    \graphtemp=.5ex
    \advance\graphtemp by 0.534in
    \rlap{\kern 0.989in\lower\graphtemp\hbox to 0pt{\hss $\approx_{ib}$\hss}}%
    \graphtemp=.5ex
    \advance\graphtemp by 0.075in
    \rlap{\kern 0.989in\lower\graphtemp\hbox to 0pt{\hss $\approx_{tree}$\hss}}%
    \graphtemp=.5ex
    \advance\graphtemp by 1.224in
    \rlap{\kern 1.678in\lower\graphtemp\hbox to 0pt{\hss $\approx_{st}$\hss}}%
    \graphtemp=.5ex
    \advance\graphtemp by 0.534in
    \rlap{\kern 1.678in\lower\graphtemp\hbox to 0pt{\hss $\approx_{sb}$\hss}}%
    \graphtemp=.5ex
    \advance\graphtemp by 1.224in
    \rlap{\kern 2.368in\lower\graphtemp\hbox to 0pt{\hss $\approx_{2t}$\hss}}%
    \graphtemp=.5ex
    \advance\graphtemp by 0.534in
    \rlap{\kern 2.368in\lower\graphtemp\hbox to 0pt{\hss $\approx_{2b}$\hss}}%
    \graphtemp=.5ex
    \advance\graphtemp by 1.224in
    \rlap{\kern 3.057in\lower\graphtemp\hbox to 0pt{\hss $\approx_{STt}$\hss}}%
    \graphtemp=.5ex
    \advance\graphtemp by 0.534in
    \rlap{\kern 3.057in\lower\graphtemp\hbox to 0pt{\hss $\approx_{STb}$\hss}}%
    \graphtemp=.5ex
    \advance\graphtemp by 1.224in
    \rlap{\kern 3.747in\lower\graphtemp\hbox to 0pt{\hss $\approx_{pt}$\hss}}%
    \graphtemp=.5ex
    \advance\graphtemp by 0.534in
    \rlap{\kern 3.747in\lower\graphtemp\hbox to 0pt{\hss $\approx_{h}$\hss}}%
    \graphtemp=.5ex
    \advance\graphtemp by 0.236in
    \rlap{\kern 3.931in\lower\graphtemp\hbox to 0pt{\hss $\approx_{h}$ (collective)\hss}}%
    \graphtemp=.5ex
    \advance\graphtemp by 1.224in
    \rlap{\kern 4.437in\lower\graphtemp\hbox to 0pt{\hss $\equiv_{\it caus}$\hss}}%
    \graphtemp=.5ex
    \advance\graphtemp by 0.534in
    \rlap{\kern 4.437in\lower\graphtemp\hbox to 0pt{\hss \color{red}$\bis{sp}$\hss}}%
    \graphtemp=.5ex
    \advance\graphtemp by 0.075in
    \rlap{\kern 4.437in\lower\graphtemp\hbox to 0pt{\hss $\equiv_{occ}$\hss}}%
    \graphtemp=.5ex
    \advance\graphtemp by 0.534in
    \rlap{\kern 4.851in\lower\graphtemp\hbox to 0pt{\hss $\approx_{pb}$\hss}}%
    \special{sh 1.000}%
    \special{pn 1}%
    \special{pa 1014 1049}%
    \special{pa 989 1149}%
    \special{pa 964 1049}%
    \special{pa 1014 1049}%
    \special{fp}%
    \special{pn 8}%
    \special{pa 989 609}%
    \special{pa 989 1049}%
    \special{fp}%
    \special{sh 1.000}%
    \special{pn 1}%
    \special{pa 1703 1049}%
    \special{pa 1678 1149}%
    \special{pa 1653 1049}%
    \special{pa 1703 1049}%
    \special{fp}%
    \special{pn 8}%
    \special{pa 1678 609}%
    \special{pa 1678 1049}%
    \special{fp}%
    \special{sh 1.000}%
    \special{pn 1}%
    \special{pa 2393 1049}%
    \special{pa 2368 1149}%
    \special{pa 2343 1049}%
    \special{pa 2393 1049}%
    \special{fp}%
    \special{pn 8}%
    \special{pa 2368 609}%
    \special{pa 2368 1049}%
    \special{fp}%
    \special{sh 1.000}%
    \special{pn 1}%
    \special{pa 3082 1049}%
    \special{pa 3057 1149}%
    \special{pa 3032 1049}%
    \special{pa 3082 1049}%
    \special{fp}%
    \special{pn 8}%
    \special{pa 3057 609}%
    \special{pa 3057 1049}%
    \special{fp}%
    \special{sh 1.000}%
    \special{pn 1}%
    \special{pa 3772 1049}%
    \special{pa 3747 1149}%
    \special{pa 3722 1049}%
    \special{pa 3772 1049}%
    \special{fp}%
    \special{pn 8}%
    \special{pa 3747 609}%
    \special{pa 3747 1049}%
    \special{fp}%
    \special{sh 1.000}%
    \special{pn 1}%
    \special{pa 4462 1049}%
    \special{pa 4437 1149}%
    \special{pa 4412 1049}%
    \special{pa 4462 1049}%
    \special{fp}%
    \special{pn 8}%
    \special{pa 4437 609}%
    \special{pa 4437 1049}%
    \special{fp}%
    \special{sh 1.000}%
    \special{pn 1}%
    \special{pa 1014 360}%
    \special{pa 989 460}%
    \special{pa 964 360}%
    \special{pa 1014 360}%
    \special{fp}%
    \special{pn 8}%
    \special{pa 989 149}%
    \special{pa 989 360}%
    \special{fp}%
    \special{sh 1.000}%
    \special{pn 1}%
    \special{pa 4462 360}%
    \special{pa 4437 460}%
    \special{pa 4412 360}%
    \special{pa 4462 360}%
    \special{fp}%
    \special{pn 8}%
    \special{pa 4437 149}%
    \special{pa 4437 360}%
    \special{fp}%
    \special{sh 1.000}%
    \special{pn 1}%
    \special{pa 1249 100}%
    \special{pa 1149 75}%
    \special{pa 1249 50}%
    \special{pa 1249 100}%
    \special{fp}%
    \special{pn 8}%
    \special{pa 4299 75}%
    \special{pa 1249 75}%
    \special{fp}%
    \special{sh 1.000}%
    \special{pn 1}%
    \special{pa 1226 1249}%
    \special{pa 1126 1224}%
    \special{pa 1226 1199}%
    \special{pa 1226 1249}%
    \special{fp}%
    \special{pn 8}%
    \special{pa 1540 1224}%
    \special{pa 1226 1224}%
    \special{fp}%
    \special{sh 1.000}%
    \special{pn 1}%
    \special{pa 1916 1249}%
    \special{pa 1816 1224}%
    \special{pa 1916 1199}%
    \special{pa 1916 1249}%
    \special{fp}%
    \special{pn 8}%
    \special{pa 2230 1224}%
    \special{pa 1916 1224}%
    \special{fp}%
    \special{sh 1.000}%
    \special{pn 1}%
    \special{pa 2606 1249}%
    \special{pa 2506 1224}%
    \special{pa 2606 1199}%
    \special{pa 2606 1249}%
    \special{fp}%
    \special{pn 8}%
    \special{pa 2891 1224}%
    \special{pa 2606 1224}%
    \special{fp}%
    \special{sh 1.000}%
    \special{pn 1}%
    \special{pa 3324 1249}%
    \special{pa 3224 1224}%
    \special{pa 3324 1199}%
    \special{pa 3324 1249}%
    \special{fp}%
    \special{pn 8}%
    \special{pa 3609 1224}%
    \special{pa 3324 1224}%
    \special{fp}%
    \special{sh 1.000}%
    \special{pn 1}%
    \special{pa 3985 1249}%
    \special{pa 3885 1224}%
    \special{pa 3985 1199}%
    \special{pa 3985 1249}%
    \special{fp}%
    \special{pn 8}%
    \special{pa 4270 1224}%
    \special{pa 3985 1224}%
    \special{fp}%
    \special{sh 1.000}%
    \special{pn 1}%
    \special{pa 1226 559}%
    \special{pa 1126 534}%
    \special{pa 1226 509}%
    \special{pa 1226 559}%
    \special{fp}%
    \special{pn 8}%
    \special{pa 1540 534}%
    \special{pa 1226 534}%
    \special{fp}%
    \special{sh 1.000}%
    \special{pn 1}%
    \special{pa 1916 559}%
    \special{pa 1816 534}%
    \special{pa 1916 509}%
    \special{pa 1916 559}%
    \special{fp}%
    \special{pn 8}%
    \special{pa 2230 534}%
    \special{pa 1916 534}%
    \special{fp}%
    \special{sh 1.000}%
    \special{pn 1}%
    \special{pa 2606 559}%
    \special{pa 2506 534}%
    \special{pa 2606 509}%
    \special{pa 2606 559}%
    \special{fp}%
    \special{pn 8}%
    \special{pa 2891 534}%
    \special{pa 2606 534}%
    \special{fp}%
    \special{sh 1.000}%
    \special{pn 1}%
    \special{pa 3324 559}%
    \special{pa 3224 534}%
    \special{pa 3324 509}%
    \special{pa 3324 559}%
    \special{fp}%
    \special{pn 8}%
    \special{pa 3609 534}%
    \special{pa 3324 534}%
    \special{fp}%
    \special{sh 1.000}%
    \special{pn 1}%
    \special{pa 3293 449}%
    \special{pa 3192 470}%
    \special{pa 3271 404}%
    \special{pa 3293 449}%
    \special{fp}%
    \special{pn 8}%
    \special{pa 3544 300}%
    \special{pa 3282 426}%
    \special{fp}%
    \special{sh 1.000}%
    \special{pn 1}%
    \special{pa 3985 559}%
    \special{pa 3885 534}%
    \special{pa 3985 509}%
    \special{pa 3985 559}%
    \special{fp}%
    \special{pn 8}%
    \special{pa 4299 534}%
    \special{pa 3985 534}%
    \special{fp}%
    \special{sh 1.000}%
    \special{pn 1}%
    \special{pa 4675 559}%
    \special{pa 4575 534}%
    \special{pa 4675 509}%
    \special{pa 4675 559}%
    \special{fp}%
    \special{pn 8}%
    \special{pa 4713 534}%
    \special{pa 4675 534}%
    \special{fp}%
    \graphtemp=.5ex
    \advance\graphtemp by 0.305in
    \rlap{\kern 0.138in\lower\graphtemp\hbox to 0pt{\hss \makebox[0pt][l]{\sc\scriptsize branching time}\hss}}%
    \graphtemp=.5ex
    \advance\graphtemp by 0.994in
    \rlap{\kern 0.138in\lower\graphtemp\hbox to 0pt{\hss \makebox[0pt][l]{\sc\scriptsize linear time}\hss}}%
    \special{pa 159 480}%
    \special{pa 184 380}%
    \special{fp}%
    \special{pa 209 480}%
    \special{pa 184 380}%
    \special{fp}%
    \special{pa 209 819}%
    \special{pa 184 919}%
    \special{fp}%
    \special{pa 159 819}%
    \special{pa 184 919}%
    \special{fp}%
    \special{pa 184 380}%
    \special{pa 184 919}%
    \special{fp}%
    \graphtemp=.5ex
    \advance\graphtemp by 1.753in
    \rlap{\kern 1.218in\lower\graphtemp\hbox to 0pt{\hss {\sc\scriptsize abstract from causality/concurrency}\hss}}%
    \graphtemp=.5ex
    \advance\graphtemp by 1.753in
    \rlap{\kern 3.862in\lower\graphtemp\hbox to 0pt{\hss {\sc\scriptsize capture causality/concurrency}\hss}}%
    \special{pa 2285 1789}%
    \special{pa 2185 1764}%
    \special{fp}%
    \special{pa 2285 1739}%
    \special{pa 2185 1764}%
    \special{fp}%
    \special{pa 2934 1739}%
    \special{pa 3034 1764}%
    \special{fp}%
    \special{pa 2934 1789}%
    \special{pa 3034 1764}%
    \special{fp}%
    \special{pa 2185 1764}%
    \special{pa 3034 1764}%
    \special{fp}%
    \hbox{\vrule depth1.828in width0pt height 0pt}%
    \kern 5.057in
  }%
}%
\centerline{\box\graph}
\caption{A spectrum of semantic equivalences on Petri nets}\label{fig:spectrum}
\vspace{-2ex}
\end{figure}

\Fig{spectrum} shows a map of some equivalence relations on nets found
in the literature, in relation to the new structure preserving
bisimilarity, $\bis{sp}$. The equivalences become finer when moving up or to
the right; thus coarser or less discriminating when following the arrows.
The rectangle from $\approx_{it}$ to $\approx_h$ is taken
from \cite{vG90}. The vertical axis is the \emph{linear time -- branching
time spectrum}, with \emph{trace equivalence} at the bottom and
\emph{(strong) bisimulation equivalence}, or \emph{bisimilarity}, at the top.
A host of intermediate equivalences is discussed in \cite{vG01}.
The key difference is that \emph{linear time} equivalences, like trace
equivalence, only consider the set of possible executions of a process,
whereas \emph{branching time} equivalences, like bisimilarity,
additionally take into account at which point the choice between two
executions is made.
The horizontal axis indicates to what extent concurrency information is
taken into account. \emph{Interleaving} equivalences---on the
left---fully abstract from concurrency by reducing it to arbitrary
interleaving; \emph{step} equivalences additionally take into account
the possibility that two concurrent actions happen at exactly the same
moment; \emph{split} equivalences recognise the beginning and end of
actions, which here are regarded to be durational, thereby capturing
some information about their overlap in time; \emph{ST-} or
\emph{interval} equivalences fully capture concurrency information as
far as possible by considering durational actions overlapping in time;
and \emph{partial order} equivalences capture the causal links
between actions, and thereby all concurrency.
By taking the product of these two axes, one obtains a two-dimensional
spectrum of equivalence relations, with entries like
\emph{interleaving bisimulation} equivalence $\approx_{ib}$ and
\emph{partial order trace} equivalence $\approx_{pt}$. For the right upper corner
several partial order bisimulation equivalences were proposed in the
literature; according to~\cite{GG01} the \emph{history preserving bisimulation}
equivalence $\approx_h$, originally proposed by \cite{RT88}, is the
coarsest one that fully captures the interplay between causality and branching time.

The causal equivalence employed by Olderog, $\equiv_{\it caus}$, is a
linear time equivalence strictly finer than $\approx_{pt}$. Since it
preserves information about the number of preplaces of a transition,
it is specific to a model of concurrency based on Petri nets;
i.e.\ there is no obvious counterpart in terms of event structures.
I found only two equivalences in the literature that are finer than
both $\equiv_{\it caus}$ and $\approx_h$, namely \emph{occurrence net equivalence}
\cite{GV87}---$\equiv_{occ}$---and the \emph{place bisimilarity} 
$\approx_{pb}$ of \cite{ABS91}. Two nets are occurrence net
equivalent iff they have isomorphic unfoldings. The \emph{unfolding},
defined in \cite{NPW81}, associates with a given safe Petri net $N$ a
loop-free net---an \emph{occurrence net}---that combines all causal
nets of $N$, together with their branching structure.
This unfolding is similar to the unfolding of a labelled transition
system into a tree, and thus the interleaving counterpart of
occurrence net equivalence is \emph{tree equivalence} \cite{vG01},
identifying two transition systems iff their unfoldings are isomorphic.
The place bisimilarity was inspired by Olderog's strong
bisimilarity, but adapted to make it transitive, and thus an
equivalence relation. My new equivalence $\bis{sp}$ will be shown to
be strictly coarser than $\equiv_{occ}$ and $\approx_{pb}$, yet finer
than both $\equiv_{\it caus}$ and $\approx_h$.

The equivalences discussed above (without the diagonal line in \Fig{spectrum})
are all defined on safe Petri nets. Additionally, the definitions generalise
to unsafe Petri nets. However, there are two possible
interpretations of unsafe Petri nets, called the \emph{collective token}
and the \emph{individual token} interpretation \cite{vG05c}, and this
leads to two versions of history preserving bisimilarity. The history
preserving bisimilarity based on the individual token interpretation
was first defined for Petri nets in \cite{BDKP91}, under the name
\emph{fully concurrent bisimulation} equivalence. At the level of
ST-semantics the collective and individual token interpretations collapse.
The unfolding of unsafe Petri nets, and thereby occurrence net equivalence,
has been defined for the individual token interpretation only
\cite{En91,MMS97,vG05c}, and likewise causal equivalence can be easily
generalised within the individual token interpretation. The new
structure preserving bisimilarity falls in the individual token camp
as well.

\subsection{Criteria for choosing this semantic equivalence}\label{ssec:criteria}

In selecting a new semantic equivalence for reestablishing Olderog's
agreement of operational and denotational interpretations of CCSP operators,
I consider the following requirements on such a semantic equivalence
(with subsequent justifications):
\vspace{-1ex}
\begin{enumerate}
\item it should be a branching time equivalence,
\item it should fully capture causality relations and concurrency
  (and the interplay between causality and branching time),
\item it should respect \emph{inevitability} \cite{MOP89}, meaning that if two
  systems are equivalent, and in one the occurrence of a certain action
  is inevitable, then so is it in the other,
\item it should be \emph{real-time consistent} \cite{GV87}, meaning
  that for every association of execution times to actions,
  assuming that actions happen as soon as they can, the running times
  associated with computations in equivalent systems should be the same,
\item it should be \emph{preserved under action refinement} \cite{CDP87,GG01},
  meaning that if in two equivalent Petri nets the same substitutions of
  nets for actions are made, the resulting nets should again be equivalent,
\item it should be finer than Olderog's causal equivalence,
\item it should not distinguish systems whose behaviours are patently
  the same, such as Petri nets that differ only in their unreachable parts,
\item it should be a congruence for the constructs of CCSP,
\item and it should allow to establish agreement between the
  operational and denotational interpretations of CCSP operators.
\vspace{-1ex}
\end{enumerate}
Requirement 1 is the driving force behind this contribution.
It is motivated by the insight that branching time equivalences
better capture phenomena like deadlock behaviour. 
Since in general a stronger result on the agreement between
operational and denotational semantics is obtained when employing a finer
semantics, I aim for a semantics that fully captures branching time
information, and thus is at least as discriminating as interleaving bisimilarity.

Requirement 2 is an obvious choice when the goal of the project is to
capture concurrency explicitly. The combination of Requirements 1 and 2
then naturally asks for an equivalence that is at least as fine as $\approx_h$.
One might wonder, however, for what reason one bothers to define a
semantics that captures concurrency information. In the
literature, various practical reasons have been given for preferring a
semantics that (partly) respects concurrency and causality over
an interleaving semantics. Three of the more prominent of these reasons
are formulated as requirements 3, 4 and 5 above.

Requirement 3 is manifestly useful when considering liveness properties of systems.
Requirement 4 obviously has some merit when timing is an issue.
Requirement 5 is useful in system design based on stepwise refinement \cite{GG01}.

Requirement 6 is only there so that I can truthfully state to have
strengthened Olderog's agreement between the denotational and
operational semantics, which was stated in terms of causal equivalence.
This requirement will not be needed in my justification for
introducing a new semantic equivalence---and neither will Requirement~2.

Requirement 7 is hardly in need of justification.
The paper \cite{ABS91} lists as a desirable property of semantic
equivalences---one that is not met by their own proposal $\approx_{pb}$---%
that they should not distinguish nets that have isomorphic unfoldings, given
that unfolding a net should not be regarded as changing it behaviour.
When working within the individual token interpretation of nets
I will take this as a suitable formalisation of Requirement~7.

The argument for Requirement 8 has been given earlier in this introduction,
and Requirement 9 underlies my main motivation for selecting a
semantic equivalence in the first place.

\subsection{Applying the criteria}

\begin{table}
\caption{Which requirements are satisfied by the various semantic equivalences\label{requirements}}
\newcommand{\no}{$\times$}
\newcommand{\yes}{\checkmark}
\begin{tabular}{|l|ccc|cc|cc|cc|cc|ccc|c|}
\hline
\hfill \textit{Equivalence}
&&&$\!\!\!\approx_{tree}$ &&&&&&&&&&&$\!\!\equiv_{occ}$&\\
&
&
$\approx_{ib}$ &
&
&
$\approx_{sb}$ &
&
$\approx_{2b}$ &
&
$\!\!\!\!\approx_{STb}$ &
&
$\approx_{h}$ &
&
{\color{red}$\!\!\bis{sp}\!\!$} &
&
$\approx_{pb}$ \\
\textit{Requirement} &
$\approx_{it}$ &
&
&
$\approx_{st}$ &
&
$\approx_{2t}$ &
&
$\approx_{STt}\!\!\!\!$ &
&
$\approx_{pt}$ &
&
$\equiv_{\it caus}\!\!\!\!$ &
&
&
\\
\hline
1. Branching time & \no & \yes& \yes& \no & \yes& \no & \yes& \no & \yes& \no & \yes& \no & \yes& \yes& \yes
\\
2. Causality      & \no & \no & \no & \no & \no & \no & \no & \no & \no & \yes& \yes& \yes& \yes& \yes& \yes
\\
3. Inevitability  & \no & \no & \no & \no & \no & \no & \no & \no & \no & \no & \no & \no & \yes& \yes& \yes
\\
4. Real-time consistency &\no&\no&\no&\no & \no & \no & \no & \no & \yes& \no & \yes& \no & \yes& \yes& \yes
\\
5. Action refinement &\no&\no & \no & \no & \no & \no & \no & \yes& \yes& \yes& \yes& \yes?$\!\!$&\yes?$\!\!$&\yes?$\!\!$&
\\
6. Finer than $\equiv_{\it caus}$ &\no&\no&\no&\no& \no & \no & \no & \no& \no& \no& \no& \yes&\yes&\yes&\yes
\\
7. Coarser than $\equiv_{occ}$ &\yes&\yes&\yes&\yes&\yes&\yes&\yes&\yes&\yes&\yes&\yes&\yes&\yes&\yes&\no
\\
8. Congruence     & \yes & \yes &&&&&&&&&&& \yes &&
\\
9. Operat. $\equiv$ denotat. &\yes&\yes&\no&\yes&\yes&\yes&\yes&\yes&\yes&\yes&\yes&\yes&\yes&\no&
\\
\hline
\end{tabular}
\end{table}

Table~\ref{requirements} tells which of these requirements are satisfied
by the semantic equivalences from Section~\ref{sec:spectrum} (not
considering the one collective token equivalence there).
The first two rows, reporting which equivalences satisfy Requirements 1 and 2,
are well-known; these results follow directly from the definitions.
The third row, reporting on respect for inevitability, is a
contribution of this paper, and will be discussed in \SSect{inevitability},
and delivered in Sections~\ref{sec:inevitability}--\ref{sec:reactive}.

Regarding Row 4, In \cite{GV87} it is established that ST-bisimilarity is real-time
consistent. Moreover, the formal definition is such that if a semantic
equivalence $\approx$ is real-time consistent, then so is any
equivalence finer than $\approx$. Linear time equivalences are not
real-time consistent, and neither is $\approx_{2b}$ \cite{GV97}.

In \cite{GG01} it is established that $\approx_{pt}$ and
$\approx_h$ are preserved under action refinement, but interleaving and
step equivalences are not, because they do not capture enough
information about concurrency. In \cite{vG90} it is shown that
$\approx_{STt}$ and $\approx_{STb}$ are already preserved under action
refinement, whereas by \cite{GV97} split semantics are not. I conjecture that
$\equiv_{\it caus}$ and $\equiv_{occ}$ are also preserved under action
refinement, but I have not seen a formal proof. I also conjecture that
the new $\bis{sp}$ is preserved under action refinement.

Rows 6 and 7 follow as soon as I have formally established the
implications of \Fig{spectrum} (in \Sect{relating}).
As for Row 8, I will show in \Sect{congruence} that $\bis{sp}$ is a
congruence for the operators of CCSP\@. That also $\approx_{it}$ and
$\approx_{ib}$ are congruences for CCSP is well known.
The positive results in Row 9 follow from the fact that Olderog's
strong bisimilarity implies $\bis{sp}$, which will be established in \Sect{strong}.

Requirements 1 and 6 together limit the search space for suitable
equivalence relations to $\equiv_{occ}$, $\approx_{pb}$ and the new $\bis{sp}$.
When dropping Requirement 6, but keeping 2, also $\approx_h$ becomes
in scope. When also dropping 2, but keeping 4, I gain $\approx_{STb}$
as a candidate equivalence. However, both $\approx_h$ and
$\approx_{STb}$ will fall pray to Requirement 3, so also without
Requirements 2 and 6 the search space will be limited to
$\equiv_{occ}$, $\approx_{pb}$ and the new $\bis{sp}$.

Requirement 7 rules out $\approx_{pb}$, as that equivalence makes
distinctions based on unreachable parts of nets \cite{ABS91}.
The indispensable Requirement 9 rules out $\equiv_{occ}$, since that
equivalence distinguishes the operational and denotational semantics of
the CCSP expression $a0+a0$. According to the operational semantics
this expression has only one transition, whereas by the denotational
semantics it has two, and $\equiv_{occ}$ does not collapse identical
choices. The same issue plays in interleaving semantics,
where the operational and denotational transition system semantics of
CCSP do not agree up to tree equivalence. This is one of the main
reasons that bisimilarity is often regarded as the top of the linear
time -- branching time spectrum.

This constitutes the justification for the new equivalence $\bis{sp}$.
\vfill

\subsection{Inevitability}\label{ssec:inevitability}

The meaning of Requirement 3 depends on which type of progress or fairness property
one assumes to guarantee that actions that are due to occur will
actually happen. Lots of fairness assumption are mentioned in the
literature, but, as far as I can tell, they can be
classified in exactly 4 groups: \emph{progress}, \emph{justness},
\emph{weak fairness} and \emph{strong fairness} \cite{GH15a}.
These four groups form a hierarchy, in the sense that one cannot
consistently assume strong fairness while objecting to weak fairness,
or justness while objecting to progress.

Strong and weak fairness deal
with choices that are offered infinitely often. Suppose you have a
shop with only two customers $A$ and $B$ that may return to the shop
to buy something else right after they are served. Then it is unfair
to only serve customer $A$ again and again, while $B$ is continuously
waiting to be served. In case $B$ is not continuously ready to be
served, but sometimes goes home to sleep, yet always returns to wait
for his turn, it is weakly fair to always ignore customer $B$ in
favour of $A$, but not strongly fair.

Weak and strong fairness
assumptions can be made \emph{locally}, pertaining to \emph{some} repeating
choices of the modelled system but not to others, or \emph{globally},
pertaining to all choices of a given type.
Since the real world is largely unfair, strong and weak fairness
assumptions need to be made with great caution, and they will not
appear in this paper.

Justness and progress assumptions, on the other hand, come only in the
global variant, and can be safely assumed much more often.
A progress assumption says that if a system can do some action (that
is not contingent on external input) it will do an action.
In the example of the shop, if there is a customer continuously ready
to be served, and the clerk stands pathetically behind the counter
staring at the customer but not serving anyone, there is a failure of progress.
Without assuming progress, no action is inevitable, because it is
always possible that a system will remain in its initial state without
ever doing anything. Hence the concept of inevitability only makes
sense when assuming at least progress.

Justness \cite{FGHMPT13,GH15a}
says roughly that if a parallel component can make progress (not
contingent on input from outside of this component) it will do so.
Suppose the shop has two counters, each manned by a clerk, and,
whereas customer $A$ is repeatedly served at counter 1,
customer $B$ is ready to be served by counter 2, but is only stared at
by a pathetic clerk. This is not a failure of progress, as in any
state of the system someone will be served eventually. Yet it counts as a failure of justness.
In the context of Petri nets, a failure of justness can easily be
formalised as an execution, during which, from some point onwards, all preplaces
of a given transition remain marked, yet the transition never fires \cite{GH15b}.
One could argue that, when taking concurrency seriously, justness
should be assumed whenever one assumes progress.

Inevitability can be easily expressed in temporal logics like LTL
\cite{Pn77} or CTL \cite{EC82}, and it is well known that strongly
bisimilar transition systems satisfy the same temporal formulas.
This suggests that interleaving bisimilarity already respects
inevitability. However, this conclusion is warranted only when assuming
progress but not justness, or perhaps also when assuming some form of
weak or strong fairness.
The system $C := \rec{X | X\mathbin=aX+bX}$---using the CCSP syntax of
\Sect{CCSP}---repeatedly choosing between the actions $a$ and $b$, is interleaving
bisimilar to the system $D := \rec{Y|Y\mathbin=aY} \| \rec{Z|Z\mathbin=bZ}$, which in
parallel performs infinitely many $a$s and infinitely many $b$s.
Yet, when assuming justness but not weak fairness, the execution of
the action $b$ is inevitable in $D$, but not in $C$.
This shows that when assuming justness but not weak fairness,
interleaving bisimilarity does not respect inevitability. 
The paper \cite{MOP89}, which doesn't use Petri nets as system model,
leaves the precise formulation of a justness assumption for future
work---this task is undertaken in the different context of CCS in \cite{GH15a}.
Also, respect of inevitability as a criterion for judging semantic
equivalences does not occur in \cite{MOP89}, even though ``the partial
order approach'' is shown to be beneficial.

In this paper, assuming justness but not strong or weak fairness, I show that neither $\approx_h$ nor
$\equiv_{\it caus}$ respects inevitability (using infinite nets in my counterexample).
Hence, respecting concurrency appears not quite enough to respect inevitability.
Respect for inevitability, like real-time consistency, is a property
that holds for any equivalence relation finer than one for which it is known to hold already.
So also none of the ST- or interleaving equivalences respects inevitability.
I show that the new equivalence $\bis{sp}$ respects inevitability.
This makes it the coarsest equivalence of \Fig{spectrum} that does so.

\section{CCSP}\label{sec:CCSP}

CCSP is parametrised by the choice of an infinite set $Act$ of
actions, that {\we} will assume to be fixed for this paper.  Just like
the version of CSP from Hoare \cite{Ho85}, the version of CCSP used
here is a typed language, in the sense that with every CCSP process
$P$ an explicit alphabet $\alpha(P) \subseteq Act$ is associated,
which is a superset of the set of all actions the process could
possibly perform. This alphabet is exploited in the definition of the
parallel composition $P \| Q$: actions in the intersection of the
alphabets of $P$ and $Q$ are required to synchronise, whereas all
other actions of $P$ and $Q$ happen independently. Because of this,
processes with different alphabets may never be identified, even if
they can perform the same set of actions and are alike in all other
aspects.  It is for this reason that {\we} interpret CCSP in terms of
{\em typed} Petri nets, with an alphabet as extra component.

I also assume an infinite set $V$ of {\em variable names}. A {\em variable}
is a pair $X_A$ with $X \mathbin\in V$ and $A \subseteq Act$.  The syntax of
(my subset of) CCSP is given by
$$P ::= 0_A ~\mbox{\Large $\,\mid\,$}~ aP ~\mbox{\Large $\,\mid\,$}~ P+P
~\mbox{\Large $\,\mid\,$}~ P \| P ~\mbox{\Large $\,\mid\,$}~ R(P) \mbox{\Large
~$\,\mid\,$}~ X_A ~\mbox{\Large $\,\mid\,$}~ \rec{X_A|\RS}\mbox{ (with }X_A \mathbin\in V_\RS)$$
with $A \mathbin\subseteq Act$, $a \mathbin\in Act$, $R \mathbin\subseteq Act \times Act$,
$X \mathbin\in V$ and $\RS$ a {\em recursive specification}: a set of equations
$\{Y_B = \RS_{Y_B} \mid Y_B \mathbin\in V_\RS\}$ with $V_\RS \subseteq V \times Act$
(the {\em bound variables} of $\RS$) and $\RS_{Y_B}$ a CCSP expression satisfying
$\alpha(\RS_{Y_B})=B$ for all $Y_B \mathbin\in V_\RS$ (were $\alpha(\RS_{Y_B})$ is defined below).
The constant $0_A$ represents a process that is unable to perform any
action. The process $a P$ first performs the action $a$ and then
proceeds as $P$. The process $P+Q$ will behave as either $P$ or $Q$,
$ \| $ is a partially synchronous
parallel composition operator, $R$ a renaming, and
$\rec{X_A|\RS}$ represents the $X_A$-component of a solution of the
system of recursive equations $\RS$.
A CCSP expression $P$ is {\em closed} if every occurrence of a
variable $X_A$ occurs in a subexpression $\rec{Y_B|\RS}$ of $P$ with
$X_A \mathbin\in V_\RS$.

The constant 0 and the variables are indexed with an alphabet.
The alphabet of an arbitrary CCSP expression is given by:
\vspace{-1ex}
\begin{itemize}
\item $\alpha(0_A) = \alpha(X_A) = \alpha(\rec{X_A|\RS}) = A$
\item $\alpha(aP)  = \{a\} \cup \alpha(P)$
\item $\alpha(P+Q) = \alpha(P \|Q) = \alpha(P) \cup \alpha(Q)$
\item $\alpha(R(P))= \{b \mid \exists a \in \alpha(P): (a,b)\in R\}$.
\vspace{-1ex}
\end{itemize}
Substitutions of expressions for variables are allowed only if the
alphabets match. For this reason a recursive specification $\RS$ is
declared syntactically incorrect if \mbox{$\alpha(\RS_{Y_B}) \mathbin{\neq} B$} for some
$Y_B \mathbin\in V_\RS$.
The interleaving semantics of CCSP is given by the labelled transition relation
$\mathord\rightarrow \subseteq \T_{\rm CCSP}\times Act \times\T_{\rm CCSP}$
on the set $\T_{\rm CCSP}$ of closed CCSP terms, where the transitions 
{$\E\ar{a}\F$} (on arbitrary CCSP expressions) are derived from the rules of \Tab{sos CCSP}.
Here $\rec{\E|\RS}$ for $\E$ an expression and $\RS$ a recursive specification
denotes the expression $\E$ in which $\rec{Y_B|\RS_{Y_B}}$ has been substituted for the
variable $Y_B$ for all $Y_B \mathbin\in V_\RS$.

\begin{table}[t]
\caption{Structural operational interleaving semantics of CCSP}
\label{tab:sos CCSP}
\vspace{-2ex}
\normalsize
\begin{center}
\framebox{$\begin{array}{@{}c@{\qquad}c@{\;\quad}c@{}}
a\E \ar{a} \E &
\displaystyle\frac{\E\ar{a} \E'}{\E\|\F \ar{a} \E'\|\F}~~(a\mathbin{\notin}\alpha(\F)) &
\displaystyle\frac{\E \ar{a} \E'}{R(\E) \ar{b} R(\E')}~~((a,b)\mathbin\in R) \\[4ex]

\displaystyle\frac{\E\ar{a} \E'}{\E+\F \ar{a} \E'} & 
\multicolumn{2}{l}{
\displaystyle\frac{\E\ar{a} \E' ,~ \F \ar{a} \F'}{\E\|\F \ar{a} \E'\| \F'}~~
                                 (a\in\alpha(\E)\cap\alpha(\F)) } 
\\[4ex]
\displaystyle\frac{\F \ar{a} \F'}{\E+\F \ar{a} \F'} &
\displaystyle\frac{\F \ar{a} \F'}{\E\|\F \ar{a} \E\|\F'}~~(a\notin\alpha(\E)) &
\displaystyle\frac{\rec{\RS_{X_A}|\RS} \ar{a} \E'}{\rec{X_A|\RS}\ar{a}\E'}
\end{array}$}
\end{center}
\vspace{-3ex}
\end{table}

A CCSP expression is \emph{well-typed} if for any subexpression of the
form $aP$ one has $a\mathbin\in\alpha(P)$ and for any subexpression of the
form $P+Q$ one has $\alpha(P)=\alpha(Q)$. Thus $a0_{\{a\}}+bX_\emptyset$
is not well-typed, although the equivalent expression $a0_{\{a,b\}}+bX_{\{a,b\}}$ is.\linebreak[2]
A recursive specification $\rec{X_A|\RS}$ is \emph{guarded} if each occurrence of a variable
$Y_B\in V_\RS$ in a term $\RS_{Z_C}$ for some $Z_C\mathord\in V_\RS$ lays
within a subterm of $\RS_{Z_C}$ of the form $aP\!$.
Following \cite{Old91}\linebreak[3] I henceforth only consider well-typed
CCSP expressions with guarded recursion.

In Olderog's subset of CCSP, each recursive specification
has only one equation, and renamings must be functions instead of relations.
Here I allow mutual recursion and relational renaming,
where an action may be renamed into a choice of several actions---or possibly none.
This generalisation does not affect any of the proofs in \cite{Old91}.

\newcommand{\Cust}{\textsc{Cus}}
\newcommand{\Clerk}{\textsc{Clk}}
\newcommand{\CustS}{{\textsc{\scriptsize Cus}}}
\newcommand{\ClerkS}{{\textsc{\scriptsize Clk}}}
\newcommand{\enter}{\textit{enter}}
\newcommand{\buy}{\textit{buy}}
\newcommand{\leave}{\textit{leave}}
\newcommand{\serve}{\textit{serve}}
\newcommand{\serves}{\textit{serves}}
\newcommand{\sep}{~}
\newcommand{\ik}{\textrm{I}\,\textit{serves}\,A}
\newcommand{\il}{\textrm{II}\,\textit{serves}\,A}
\newcommand{\jk}{\textrm{I}\,\textit{serves}\,B}
\newcommand{\jl}{\textrm{II}\,\textit{serves}\,B}
\newcommand{\ileaves}{A\,\textit{leaves}}
\newcommand{\jleaves}{B\,\textit{leaves}}
\newcommand{\ienters}{A\,\textit{enters}}
\newcommand{\jenters}{B\,\textit{enters}}
\newcommand{\CU}{\textsc{Cu}}
\newcommand{\CL}{\textsc{Cl}}
\begin{example}
The behaviour of the customer from \SSect{inevitability} could be given by
the recursive specification $\RS_\CustS$:
$$\Cust_{\it Cu} = \enter \sep \buy \sep \leave \sep \Cust_{\it Cu}$$
indicating that the customer keeps coming back to the shop to buy more things.
Here $\enter,\buy,\leave \mathbin\in Act$ and $\Cust\mathbin\in V$.
The customer's alphabet ${\it Cu}$ is $\{\enter,\buy,\leave\}$.
Likewise, the behaviour of the store clerk could be given by the specification $\RS_\ClerkS$:
$$\Clerk_{\it Cl} = \serve \sep \Clerk_{\it Cl}$$
where ${\it Cl}\mathbin=\{\serve\}$. The CCSP processes representing the customer and
the clerk, with their reachable states and labelled transitions
between them, are displayed in \Fig{cc}.
\begin{figure}[htb]
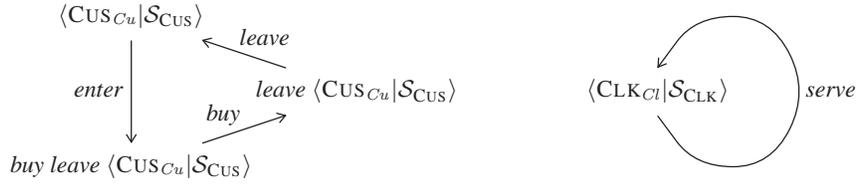

\expandafter\ifx\csname graph\endcsname\relax
   \csname newbox\expandafter\endcsname\csname graph\endcsname
\fi
\ifx\graphtemp\undefined
  \csname newdimen\endcsname\graphtemp
\fi
\expandafter\setbox\csname graph\endcsname
 =\vtop{\vskip 0pt\hbox{%
    \graphtemp=.5ex
    \advance\graphtemp by 0.128in
    \rlap{\kern 0.236in\lower\graphtemp\hbox to 0pt{\hss $\rec{\Cust_{\it Cu} | \RS_\CustS}$\hss}}%
    \graphtemp=.5ex
    \advance\graphtemp by 0.915in
    \rlap{\kern 0.236in\lower\graphtemp\hbox to 0pt{\hss $\buy\sep\leave\sep\rec{\Cust_{\it Cu} | \RS_\CustS}$\hss}}%
    \special{pn 8}%
    \special{pa 261 747}%
    \special{pa 236 786}%
    \special{fp}%
    \special{pa 211 747}%
    \special{pa 236 786}%
    \special{fp}%
    \special{pa 236 256}%
    \special{pa 236 786}%
    \special{fp}%
    \graphtemp=.5ex
    \advance\graphtemp by 0.522in
    \rlap{\kern 0.236in\lower\graphtemp\hbox to 0pt{\hss $\enter$~~~~~~~~~~~\hss}}%
    \graphtemp=.5ex
    \advance\graphtemp by 0.522in
    \rlap{\kern 1.417in\lower\graphtemp\hbox to 0pt{\hss $\leave\sep\rec{\Cust_{\it Cu} | \RS_{\CustS}}$\hss}}%
    \special{pa 997 635}%
    \special{pa 1043 647}%
    \special{fp}%
    \special{pa 1013 683}%
    \special{pa 1043 647}%
    \special{fp}%
    \special{pa 610 791}%
    \special{pa 1043 647}%
    \special{fp}%
    \graphtemp=\baselineskip
    \multiply\graphtemp by -1
    \divide\graphtemp by 2
    \advance\graphtemp by .5ex
    \advance\graphtemp by 0.719in
    \rlap{\kern 0.827in\lower\graphtemp\hbox to 0pt{\hss $\buy$~~~~~~~\hss}}%
    \special{pa 640 289}%
    \special{pa 611 253}%
    \special{fp}%
    \special{pa 656 242}%
    \special{pa 611 253}%
    \special{fp}%
    \special{pa 1044 397}%
    \special{pa 611 253}%
    \special{fp}%
    \graphtemp=\baselineskip
    \multiply\graphtemp by -1
    \divide\graphtemp by 2
    \advance\graphtemp by .5ex
    \advance\graphtemp by 0.325in
    \rlap{\kern 0.827in\lower\graphtemp\hbox to 0pt{\hss ~~~~~~~$\leave$\hss}}%
    \graphtemp=.5ex
    \advance\graphtemp by 0.522in
    \rlap{\kern 2.992in\lower\graphtemp\hbox to 0pt{\hss $\rec{\Clerk_{\it Cl} | \RS_\ClerkS}$\hss}}%
    \special{pa 3036 376}%
    \special{pa 2993 393}%
    \special{fp}%
    \special{pa 2996 346}%
    \special{pa 2993 393}%
    \special{fp}%
    \special{pa 2992 650}%
    \special{pa 3189 915}%
    \special{pa 3583 915}%
    \special{pa 3780 522}%
    \special{pa 3583 128}%
    \special{pa 3189 128}%
    \special{pa 2993 393}%
    \special{sp}%
    \graphtemp=.5ex
    \advance\graphtemp by 0.522in
    \rlap{\kern 3.898in\lower\graphtemp\hbox to 0pt{\hss $\serve$\hss}}%
    \hbox{\vrule depth1.043in width0pt height 0pt}%
    \kern 3.898in
  }%
}%
\centerline{\box\graph}
\caption{Labelled transition semantics of customer and clerk}\label{fig:cc}
\vspace{-2ex}
\end{figure}

In order to ensure that the parallel composition
synchronises the \buy-action of the customer with the \serve-action of
the clerk, I apply renaming operators $R_\CustS$ and $R_\ClerkS$ with
$R_\CustS (\buy) = \serves$ and $R_\ClerkS (\serve) = \serves$ and
leaving all other actions unchanged, where
$\serves$ is a joint action of the renamed customer and the renamed clerk.
The total CCSP specification of a store with one clerk and one
customer is $$R_\CustS(\rec{\Cust_{\it Cu} | \RS_\CustS}) \| R_\ClerkS(\rec{\Clerk_{\it Cl} | \RS_\ClerkS})$$
and the relevant part of the labelled transition system of CCSP is displayed below.
\begin{figure}[htb]
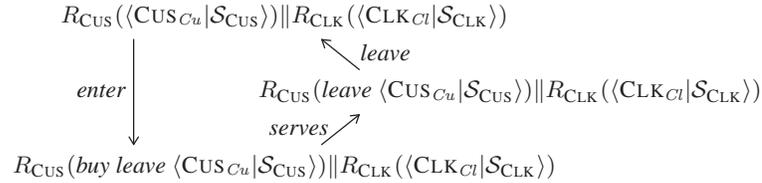

\expandafter\ifx\csname graph\endcsname\relax
   \csname newbox\expandafter\endcsname\csname graph\endcsname
\fi
\ifx\graphtemp\undefined
  \csname newdimen\endcsname\graphtemp
\fi
\expandafter\setbox\csname graph\endcsname
 =\vtop{\vskip 0pt\hbox{%
    \graphtemp=.5ex
    \advance\graphtemp by 0.118in
    \rlap{\kern 0.787in\lower\graphtemp\hbox to 0pt{\hss $R_\CustS(\rec{\Cust_{\it Cu} | \RS_\CustS}) \| R_\ClerkS(\rec{\Clerk_{\it Cl} | \RS_\ClerkS})$\hss}}%
    \graphtemp=.5ex
    \advance\graphtemp by 0.906in
    \rlap{\kern 0.787in\lower\graphtemp\hbox to 0pt{\hss $R_\CustS(\buy\sep\leave\sep\rec{\Cust_{\it Cu} | \RS_\CustS}) \| R_\ClerkS(\rec{\Clerk_{\it Cl} | \RS_\ClerkS})$\hss}}%
    \special{pn 8}%
    \special{pa 25 747}%
    \special{pa 0 786}%
    \special{fp}%
    \special{pa -24 747}%
    \special{pa 0 786}%
    \special{fp}%
    \special{pa 0 236}%
    \special{pa 0 786}%
    \special{fp}%
    \graphtemp=.5ex
    \advance\graphtemp by 0.512in
    \rlap{\kern 0.000in\lower\graphtemp\hbox to 0pt{\hss $\enter$~~~~~~~~~~~\hss}}%
    \graphtemp=.5ex
    \advance\graphtemp by 0.512in
    \rlap{\kern 1.969in\lower\graphtemp\hbox to 0pt{\hss $R_\CustS(\leave\sep\rec{\Cust_{\it Cu} | \RS_{\CustS}}) \| R_\ClerkS(\rec{\Clerk_{\it Cl} | \RS_\ClerkS})$\hss}}%
    \special{pa 1134 636}%
    \special{pa 1180 631}%
    \special{fp}%
    \special{pa 1165 675}%
    \special{pa 1180 631}%
    \special{fp}%
    \special{pa 984 787}%
    \special{pa 1180 631}%
    \special{fp}%
    \graphtemp=.5ex
    \advance\graphtemp by 0.709in
    \rlap{\kern 1.083in\lower\graphtemp\hbox to 0pt{\hss $\serves$~~~~~~~~~~~~~~\hss}}%
    \special{pa 1000 281}%
    \special{pa 985 237}%
    \special{fp}%
    \special{pa 1032 242}%
    \special{pa 985 237}%
    \special{fp}%
    \special{pa 1181 394}%
    \special{pa 985 237}%
    \special{fp}%
    \graphtemp=.5ex
    \advance\graphtemp by 0.315in
    \rlap{\kern 1.083in\lower\graphtemp\hbox to 0pt{\hss ~~~~~~~~~~~~~~~$\leave$\hss}}%
    \hbox{\vrule depth1.024in width0pt height 0pt}%
    \kern 2.205in
  }%
}%
\centerline{\box\graph}
\caption{Labelled transition semantics of the 1-customer 1-clerk store}\label{fig:store1}
\vspace{-2ex}
\end{figure}

\noindent
One possible behaviour of this system is the sequence of actions
$\enter \sep \serves \sep \leave$ $\enter$, followed by eternal stagnation.
This behaviour is ruled out by the progress assumption of \SSect{inevitability}.
The only behaviour compatible with this assumption is the infinite
sequence of actions $(\enter \sep \serves \sep \leave)^\infty$.

To model a store with two customers (A and B) and 2 clerks (I and II),
I introduce a relational renaming for each of them, defined by
$$\begin{array}{c@{\quad\!\!}l@{\,=\,}l@{}c}
R_{A}(\enter)\mathbin=\ienters & R_{A}(\buy)&\{\ik,\il\} & R_{A}(\leave)\mathbin=\ileaves \\
R_{B}(\enter)\mathbin=\jenters & R_{B}(\buy)&\{\jk,\jl\} & R_{B}(\leave)\mathbin=\jleaves \\
& R_{\rm I}(\serve)&\{\ik,\jk\} \\
& R_{\rm II}(\serve)&\{\il,\jl\}.
\end{array}$$
\hypertarget{store2}{The CCSP specification of a store with two clerks and two
customers is
\[
\big(\!R_A(\!\rec{\Cust_{\it Cu} | \RS_\CustS}\!) \| R_B(\!\rec{\Cust_{\it Cu} | \RS_\CustS}\!)\!\big) \|
\big(\!R_{\rm I}(\!\rec{\Clerk_{\it Cl} | \RS_\ClerkS}\!) \| R_{\rm II}(\!\rec{\Clerk_{\it Cl} | \RS_\ClerkS}\!)\!\big)
\]}
and the part of the labelled transition system of CCSP reachable from
that process has $3\times 3 \times 1 \times 1 = 9$ states and $6 \times 4 = 24$ transitions.
\end{example}

\section{Petri nets}\label{sec:nets}

A {\em multiset} over a set $\SX$ is a function $\A\!:\SX \rightarrow\IN$, i.e.\ $\A\in \IN^{\SX}$;
let $|\A| := \sum_{x\in X}\A(x)$;
$x \mathbin\in \SX$ is an \emph{element of} $\A$, notation $x \mathbin\in \A$, iff $\A(x) > 0$.\\
The function $\emptyset\!:\!\SX\mathbin\rightarrow\IN$, given by $\emptyset(x)\mathbin{:=}0$
for all $x \mathbin\in \SX$, is the \emph{empty} multiset over $\SX$.
For multisets $\A$ and $\B$ over $\SX$ one writes $\A \leq \B$ iff
 \mbox{$\A(x) \leq \B(x)$} for all $x \mathbin\in \SX$;
\\ $\A\cap \B$ denotes the multiset over $\SX$ with $(\A\cap \B)(x):=\textrm{min}(\A(x), \B(x))$,
\\ $\A + \B$ denotes the multiset over $\SX$ with $(\A + \B)(x):=\A(x)+\B(x)$; and
\\ the multiset $\A - \B$ is only defined if $\B\leq \A$ and then
$(\A - \B)(x):=\A(x)-\B(x)$.
A multiset $\A$ with $\A(x)\leq 1$ for all $x$ is identified with
the (plain) set $\{x\mid \A(x)\mathbin=1\}$.
The construction $\A:=\{f(x_1,...,x_n) \mid x_i\mathbin\in \B_i\}$
of a set $\A$ out of sets $\B_i$ ($i\mathbin=1,...,n$)
generalises naturally to multisets $\A$ and $\B_i$,
taking the multiplicity $\A(x)$ of an element $x$ to be
$\sum_{f(x_1,...,x_n)=x}\B_1(x_1) \cdot ... \cdot \B_n(x_n)$.

\begin{definition}\rm
  A (\emph{typed}) \emph{Petri net} is a tuple
  $N = (S, T, F, M_0, A, \ell)$ with
  \begin{list}{{\bf --}}{\leftmargin 18pt
                        \labelwidth\leftmargini\advance\labelwidth-\labelsep
                        \topsep 0pt \itemsep 0pt \parsep 0pt}
    \item $S$ and $T$ disjoint sets (of \emph{places} and \emph{transitions}),
    \item $F: ((S \times T) \cup (T \times S)) \rightarrow \IN$
      (the \emph{flow relation} including \emph{arc weights}),
    \item $M_0 : S \rightarrow \IN$ (the \emph{initial marking}),
    \item $A$ a set of \emph{actions}, the \emph{type} of the net, and
    \item \plat{$\ell: T \rightarrow A$} (the \emph{labelling function}).
    \pagebreak[3]
  \end{list}
\end{definition}

\noindent
Petri nets are depicted by drawing the places as circles and the
transitions as boxes, containing their label.
Identities of places and transitions are displayed next to the net element.
For $x,y \mathbin\in S\cup T$ there are $F(x,y)$
arrows (\emph{arcs}) from $x$ to $y$. When a Petri net represents a
concurrent system, a global state of this system is given as a \emph{marking},
a multiset $M$ of places, depicted by placing $M(s)$ dots (\emph{tokens}) in each place $s$. 
The initial state is $M_0$.

The behaviour of a Petri net is defined by the possible moves between
markings $M$ and $M'$, which take place when a transition $\tu$ \emph{fires}.  In that case,
$\tu$ consumes $F(s,\tu)$ tokens from each 
place $s$.  Naturally, this can happen only if $M$ makes all these
tokens available in the first place.  Moreover, $\tu$ produces $F(\tu,s)$ tokens
in each place $s$.  \Def{firing} formalises this notion of behaviour.

\begin{definition}\label{df:firing}\rm
Let $N = (S, T, F, M_0, A, \ell)$ be a Petri net and $x\mathbin\in S\cup T$.
The multisets $\precond{x},~\postcond{x}{:}\, S \mathord\cup T \rightarrow
\IN$ are given by $\precond{x}(y)=F(y,x)$ and
$\postcond{x}(y)=F(x,y)$ for all $y \mathbin\in S\mathord\cup T$;
for $t\mathbin\in T$, the elements of $\precond{\tu}$ and $\postcond{\tu}$ are
called \emph{pre-} and \emph{postplaces} of $\tu$, respectively.
Transition $\tu\mathbin\in T$ is \emph{enabled} from the marking
$M\mathbin\in\IN^S$---notation $M[\tu\rangle$---if $\precond{\tu} \leq M$.
In that case firing $\tu$ yields the marking 
$M':=M-\precond{\tu}+\postcond{\tu}$\linebreak[3]---notation $M[\tu\rangle M'$.
\end{definition}
A \emph{path} $\pi$ of a Petri net $N$ is an alternating sequence $M_0 \tu_1 M_1 \tu_2 M_2 \tu_3 \dots$ of markings and
transitions, starting from the initial marking $M_0$ and either being infinite or ending in a
marking $M_n$, such that $M_k [\tu_k\rangle M_{k+1}$ for all $k \,(\mathord< n)$.
A marking is \emph{reachable} if it occurs in such a path.
The Petri net $N$ is
\emph{safe} if all reachable markings $M$ are plain sets, meaning that $M(s)\leq 1$ for all places $s$.
It has \emph{bounded parallelism} \cite{GV87} if there is no reachable marking $M$
and infinite multiset of transitions $U$ such that $\sum_{t \in U}\precond{t}\leq M$.
In this paper I consider Petri nets with bounded parallelism only, and call them \emph{nets}.

\section{An operational Petri net semantics of CCSP}\label{sec:operational}

This section recalls the operational Petri net semantics of CCSP, given by Olderog \cite{Old87,Old91}.
It associates a net $\denote{P}$ with each closed CCSP expression $P$.

The standard operational semantics of CCSP, presented in \Sect{CCSP}, yields one big labelled
transition system for the entire language.\footnote{A \emph{labelled transition system} (LTS)
  is given by a set $S$ of \emph{states} and a \emph{transition relation}
  \mbox{$T\subseteq S\times \Lab \times S$} for some set of labels $\Lab$.
  The LTS generated by CCSP has $S:=\T_{\rm CCSP}$, $\Lab:=Act$ and $T:=\mathord{\rightarrow}$.
}
Each individual closed CCSP expression $P$ appears as a state in
this LTS\@.  If desired, a \emph{process graph}---an LTS enriched with an initial state---for $P$
can be extracted from this system-wide LTS by appointing $P$ as the initial state, and optionally
deleting all states and transitions not reachable from $P$. In the same vein, an operational Petri
net semantics yields one big Petri net for the entire language, but without an initial marking.
I call such a Petri net {\em unmarked}. Each
process $P\mathbin\in\T_{\rm CCSP}$ corresponds with a marking
$dex(P)$ of that net. If desired, a Petri net $\denote{P}$
for $P$ can be extracted from this system-wide net by appointing $dex(P)$ as its initial marking,
taking the type of $\denote{P}$ to be $\alpha(P)$,
and optionally deleting all places and transitions not reachable from $dex(P)$.

The set $\SC_{\rm CCSP}$ of places in the net is the smallest set including:
\begin{center}
\begin{tabular}{@{}l@{\quad}l@{\qquad\quad}l@{\quad}l@{\qquad\quad}l@{\quad}l@{}}
$0_A$ & \emph{inaction} &
$a \E$  & \emph{prefixing}&
$\mu+\nu$  & \emph{choice} \\
$\mu\|_A$ & \emph{left parallel component}&
$_A\|\mu$ & \emph{right component} &
$R(\mu)$ &  \emph{renaming} \\
\end{tabular}
\end{center}
\noindent for $A\subseteq Act$, $\E\mathbin\in\T_{\rm CCSP}$, $a\mathbin\in Act$,
$\mu,\nu\mathbin\in\SC_{\rm CCSP}$ and renamings $R$.
The mapping $dex:\T_{\rm CCSP} \rightarrow \pow(\SC_{\rm CCSP})$ decomposing and expanding a process
expression into a set of places is inductively defined by:\vspace{-2ex}
\[
\begin{array}{@{}l@{~=~}l@{\qquad\qquad\qquad}l@{~=~}l@{}}
\multicolumn{2}{c}{} &
dex(0_A) & \{0_A\} \\
dex(a P)  & \{a P\} &
dex(R(P)) & R(dex(P)) \\
dex(P+Q) & dex(P) + dex(Q) &
dex(\rec{X_A|\RS}) & dex(\rec{\RS_{X_A}|\RS}) \\
dex(P\|Q) & \multicolumn{3}{l}{dex(P)\|_A~ \cup ~_A\|dex(Q) ~\mbox{where}~ A=\alpha(P)\cap\alpha(Q).}
\end{array}
\]
Here $H\|_A$, $_A\|H$, $R(H)$ and $H+K$ for $H,K\mathbin\subseteq \SC_{\rm CCSP}$
are defined element by element; e.g.\ $R(H) = \{R(\mu) \mid \mu\mathbin\in H\}$.
The binding matters, so that $(_A\|H)\|_B\mathbin{\not=}{_A\|}(H\|_B)$.
Since {\we} deal with guarded recursion only, $dex$ is well-defined.
\begin{table}[bt]
\caption{Operational Petri net semantics of CCSP}
\label{tab:PN-CCSP}
\vspace{-2ex}
\normalsize
\begin{center}
\framebox{$\begin{array}{c@{\qquad}l}
\multicolumn{2}{c}{\{aP\} \ar{a} dex(P)} \\[2ex]
\displaystyle\frac{H \ar{a} J}{R(H) \ar{b} R(J)} ~~((a,b)\mathbin\in R) &
\qquad\displaystyle \frac{H \ar{a} J}{H\|_A \ar{a} J\|_A}~~(a\mathbin{\notin}A) \\[4ex]
\displaystyle \frac{H\dcup K \ar{a} J}{H\cup(K+dex(Q)) \ar{a} J} &
\displaystyle \frac{H \ar{a} J \qquad K \ar{a} L}{H\|_A \cup {_A\|}K  \ar{a} J\|_A \cup {_A\|}L}~~
                                 (a\mathbin\in A) \\[4ex]
\displaystyle \frac{H\dcup K \ar{a} J}{H\cup(dex(P)+K) \ar{a} J} &
\qquad\displaystyle \frac{H \ar{a} J}{_A\|H \ar{a} {_A\|}J}~~(a\mathbin{\notin}A) \\[4ex]
\end{array}$}
\end{center}
\vspace{-2ex}
\end{table}

Following \cite{Old91}, {\we} construct the unmarked Petri net $(S,T,F,Act,\ell)$ of CCSP
with $S:=\SC_{\rm CCSP}$, specifying the triple $(T,F,\ell)$ as a ternary relation
$\mathord{\rightarrow} \subseteq \IN^S\times Act\times \IN^S$.
An element \plat{$H \ar{a} J$} of this relation denotes a transition
$t\mathbin\in T$ with $\ell(t)\mathbin=a$
such that $\precond{t}\mathbin=H$ and $\postcond{t}\mathbin=J$.
The transitions \plat{$H\ar{\alpha}J$} are derived from the rules of \Tab{PN-CCSP}.

Note that there is no rule for recursion. The transitions of a recursive process $\rec{\hspace{-1pt}X_A|\RS}$
are taken care of indirectly by the decomposition $dex(\rec{X_A|\RS}) = dex(\rec{\RS_{X_A}|\RS})$,
which expands the decomposition of a recursive call into a decomposition of
an expression in which each recursive call is guarded by an action prefix.

\begin{example}
The Petri net semantics of the \hyperlink{store2}{2-customer 2-clerk store} from
\Sect{CCSP} is displayed in \Fig{store2}. It is more compact than
the 9-state 24-transition labelled transition system. The name of the
bottom-most place is 
$_{\it Ser\!}\|\, {}_{\emptyset\!}\|\, R_{\rm II}(\serve\sep\rec{\Clerk_{\it Cl} | \RS_\ClerkS})$
where $\it Ser$ is the alphabet $\{\ik,\jk,\il,\jl\}$.

A progress assumption, as discussed in \SSect{inevitability}, disallows
runs that stop after finitely many actions. So in each run some of
the actions from {\it Ser} will occur infinitely often. When assuming
strong fairness, each of those actions will occur infinitely often.
When assuming only weak fairness, it is possible that $\il$ and $\jl$
will never occur, as long as $\ik$ and $\jk$ each occur infinitely often,
for in such a run the actions $\il$ and $\jl$ are not enabled in every
state (from some point onwards).
However, it is not possible that $\jk$ and $\jl$ never occur, because
in such a run, from some point onwards, the action $\jk$ is enabled
in every state.

When assuming justness but not weak fairness, a run that bypasses any two
serving actions is possible, but a run that bypasses $\jk,\il$
and $\jl$ is excluded, because in such a run, from some point onwards,
the action $\jl$ is perpetually enabled, in the sense that both tokens
in its preplaces never move away.
\begin{figure}[hb]
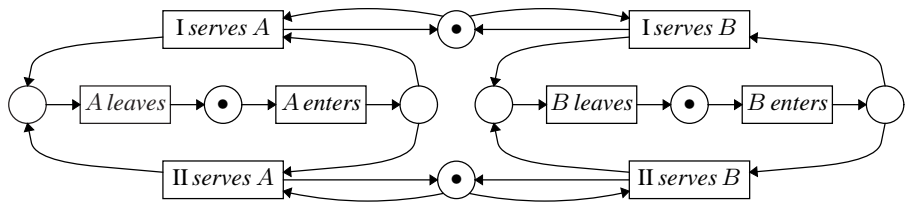

\expandafter\ifx\csname graph\endcsname\relax
   \csname newbox\expandafter\endcsname\csname graph\endcsname
\fi
\ifx\graphtemp\undefined
  \csname newdimen\endcsname\graphtemp
\fi
\expandafter\setbox\csname graph\endcsname
 =\vtop{\vskip 0pt\hbox{%
    \special{pn 8}%
    \special{ar 98 503 98 98 0 6.28319}%
    \special{pa 374 601}%
    \special{pa 846 601}%
    \special{pa 846 405}%
    \special{pa 374 405}%
    \special{pa 374 601}%
    \special{fp}%
    \graphtemp=.5ex
    \advance\graphtemp by 0.503in
    \rlap{\kern 0.610in\lower\graphtemp\hbox to 0pt{\hss $\ileaves$\hss}}%
    \special{sh 1.000}%
    \special{pn 1}%
    \special{pa 335 478}%
    \special{pa 374 503}%
    \special{pa 335 528}%
    \special{pa 335 478}%
    \special{fp}%
    \special{pn 8}%
    \special{pa 197 503}%
    \special{pa 335 503}%
    \special{fp}%
    \special{ar 1122 503 98 98 0 6.28319}%
    \graphtemp=.5ex
    \advance\graphtemp by 0.503in
    \rlap{\kern 1.122in\lower\graphtemp\hbox to 0pt{\hss $\bullet$\hss}}%
    \special{sh 1.000}%
    \special{pn 1}%
    \special{pa 984 478}%
    \special{pa 1024 503}%
    \special{pa 984 528}%
    \special{pa 984 478}%
    \special{fp}%
    \special{pn 8}%
    \special{pa 846 503}%
    \special{pa 984 503}%
    \special{fp}%
    \special{pa 1398 601}%
    \special{pa 1870 601}%
    \special{pa 1870 405}%
    \special{pa 1398 405}%
    \special{pa 1398 601}%
    \special{fp}%
    \graphtemp=.5ex
    \advance\graphtemp by 0.503in
    \rlap{\kern 1.634in\lower\graphtemp\hbox to 0pt{\hss $\ienters$\hss}}%
    \special{sh 1.000}%
    \special{pn 1}%
    \special{pa 1358 478}%
    \special{pa 1398 503}%
    \special{pa 1358 528}%
    \special{pa 1358 478}%
    \special{fp}%
    \special{pn 8}%
    \special{pa 1220 503}%
    \special{pa 1358 503}%
    \special{fp}%
    \special{ar 2146 503 98 98 0 6.28319}%
    \special{sh 1.000}%
    \special{pn 1}%
    \special{pa 2008 478}%
    \special{pa 2047 503}%
    \special{pa 2008 528}%
    \special{pa 2008 478}%
    \special{fp}%
    \special{pn 8}%
    \special{pa 1870 503}%
    \special{pa 2008 503}%
    \special{fp}%
    \special{pa 807 208}%
    \special{pa 1437 208}%
    \special{pa 1437 11}%
    \special{pa 807 11}%
    \special{pa 807 208}%
    \special{fp}%
    \graphtemp=.5ex
    \advance\graphtemp by 0.109in
    \rlap{\kern 1.122in\lower\graphtemp\hbox to 0pt{\hss $\ik$\hss}}%
    \special{sh 1.000}%
    \special{pn 1}%
    \special{pa 1474 177}%
    \special{pa 1437 149}%
    \special{pa 1478 127}%
    \special{pa 1474 177}%
    \special{fp}%
    \special{pn 8}%
    \special{pa 2146 405}%
    \special{pa 2106 208}%
    \special{pa 1441 149}%
    \special{sp}%
    \special{sh 1.000}%
    \special{pn 1}%
    \special{pa 131 371}%
    \special{pa 98 405}%
    \special{pa 82 361}%
    \special{pa 131 371}%
    \special{fp}%
    \special{pn 8}%
    \special{pa 99 401}%
    \special{pa 138 208}%
    \special{pa 807 149}%
    \special{sp}%
    \special{pa 807 995}%
    \special{pa 1437 995}%
    \special{pa 1437 798}%
    \special{pa 807 798}%
    \special{pa 807 995}%
    \special{fp}%
    \graphtemp=.5ex
    \advance\graphtemp by 0.897in
    \rlap{\kern 1.122in\lower\graphtemp\hbox to 0pt{\hss $\il$\hss}}%
    \special{sh 1.000}%
    \special{pn 1}%
    \special{pa 1478 879}%
    \special{pa 1437 857}%
    \special{pa 1474 829}%
    \special{pa 1478 879}%
    \special{fp}%
    \special{pn 8}%
    \special{pa 2146 601}%
    \special{pa 2106 798}%
    \special{pa 1441 857}%
    \special{sp}%
    \special{sh 1.000}%
    \special{pn 1}%
    \special{pa 82 645}%
    \special{pa 98 601}%
    \special{pa 131 635}%
    \special{pa 82 645}%
    \special{fp}%
    \special{pn 8}%
    \special{pa 99 605}%
    \special{pa 138 798}%
    \special{pa 807 857}%
    \special{sp}%
    \special{ar 2539 503 98 98 0 6.28319}%
    \special{pa 2815 601}%
    \special{pa 3287 601}%
    \special{pa 3287 405}%
    \special{pa 2815 405}%
    \special{pa 2815 601}%
    \special{fp}%
    \graphtemp=.5ex
    \advance\graphtemp by 0.503in
    \rlap{\kern 3.051in\lower\graphtemp\hbox to 0pt{\hss $\jleaves$\hss}}%
    \special{sh 1.000}%
    \special{pn 1}%
    \special{pa 2776 478}%
    \special{pa 2815 503}%
    \special{pa 2776 528}%
    \special{pa 2776 478}%
    \special{fp}%
    \special{pn 8}%
    \special{pa 2638 503}%
    \special{pa 2776 503}%
    \special{fp}%
    \special{ar 3563 503 98 98 0 6.28319}%
    \graphtemp=.5ex
    \advance\graphtemp by 0.503in
    \rlap{\kern 3.563in\lower\graphtemp\hbox to 0pt{\hss $\bullet$\hss}}%
    \special{sh 1.000}%
    \special{pn 1}%
    \special{pa 3425 478}%
    \special{pa 3465 503}%
    \special{pa 3425 528}%
    \special{pa 3425 478}%
    \special{fp}%
    \special{pn 8}%
    \special{pa 3287 503}%
    \special{pa 3425 503}%
    \special{fp}%
    \special{pa 3839 601}%
    \special{pa 4311 601}%
    \special{pa 4311 405}%
    \special{pa 3839 405}%
    \special{pa 3839 601}%
    \special{fp}%
    \graphtemp=.5ex
    \advance\graphtemp by 0.503in
    \rlap{\kern 4.075in\lower\graphtemp\hbox to 0pt{\hss $\jenters$\hss}}%
    \special{sh 1.000}%
    \special{pn 1}%
    \special{pa 3799 478}%
    \special{pa 3839 503}%
    \special{pa 3799 528}%
    \special{pa 3799 478}%
    \special{fp}%
    \special{pn 8}%
    \special{pa 3661 503}%
    \special{pa 3799 503}%
    \special{fp}%
    \special{ar 4587 503 98 98 0 6.28319}%
    \special{sh 1.000}%
    \special{pn 1}%
    \special{pa 4449 478}%
    \special{pa 4488 503}%
    \special{pa 4449 528}%
    \special{pa 4449 478}%
    \special{fp}%
    \special{pn 8}%
    \special{pa 4311 503}%
    \special{pa 4449 503}%
    \special{fp}%
    \special{pa 3248 208}%
    \special{pa 3878 208}%
    \special{pa 3878 11}%
    \special{pa 3248 11}%
    \special{pa 3248 208}%
    \special{fp}%
    \graphtemp=.5ex
    \advance\graphtemp by 0.109in
    \rlap{\kern 3.563in\lower\graphtemp\hbox to 0pt{\hss $\jk$\hss}}%
    \special{sh 1.000}%
    \special{pn 1}%
    \special{pa 3915 177}%
    \special{pa 3878 149}%
    \special{pa 3919 127}%
    \special{pa 3915 177}%
    \special{fp}%
    \special{pn 8}%
    \special{pa 4587 405}%
    \special{pa 4547 208}%
    \special{pa 3882 149}%
    \special{sp}%
    \special{sh 1.000}%
    \special{pn 1}%
    \special{pa 2572 371}%
    \special{pa 2539 405}%
    \special{pa 2523 361}%
    \special{pa 2572 371}%
    \special{fp}%
    \special{pn 8}%
    \special{pa 2540 401}%
    \special{pa 2579 208}%
    \special{pa 3248 149}%
    \special{sp}%
    \special{pa 3248 995}%
    \special{pa 3878 995}%
    \special{pa 3878 798}%
    \special{pa 3248 798}%
    \special{pa 3248 995}%
    \special{fp}%
    \graphtemp=.5ex
    \advance\graphtemp by 0.897in
    \rlap{\kern 3.563in\lower\graphtemp\hbox to 0pt{\hss $\jl$\hss}}%
    \special{sh 1.000}%
    \special{pn 1}%
    \special{pa 3919 879}%
    \special{pa 3878 857}%
    \special{pa 3915 829}%
    \special{pa 3919 879}%
    \special{fp}%
    \special{pn 8}%
    \special{pa 4587 601}%
    \special{pa 4547 798}%
    \special{pa 3882 857}%
    \special{sp}%
    \special{sh 1.000}%
    \special{pn 1}%
    \special{pa 2523 645}%
    \special{pa 2539 601}%
    \special{pa 2572 635}%
    \special{pa 2523 645}%
    \special{fp}%
    \special{pn 8}%
    \special{pa 2540 605}%
    \special{pa 2579 798}%
    \special{pa 3248 857}%
    \special{sp}%
    \special{ar 2343 109 98 98 0 6.28319}%
    \graphtemp=.5ex
    \advance\graphtemp by 0.109in
    \rlap{\kern 2.343in\lower\graphtemp\hbox to 0pt{\hss $\bullet$\hss}}%
    \special{sh 1.000}%
    \special{pn 1}%
    \special{pa 1481 66}%
    \special{pa 1437 50}%
    \special{pa 1470 17}%
    \special{pa 1481 66}%
    \special{fp}%
    \special{pn 8}%
    \special{ar 1879 1969 1969 1969 -1.777375 -1.369439}%
    \special{sh 1.000}%
    \special{pn 1}%
    \special{pa 2205 84}%
    \special{pa 2244 109}%
    \special{pa 2205 134}%
    \special{pa 2205 84}%
    \special{fp}%
    \special{pn 8}%
    \special{pa 1437 109}%
    \special{pa 2205 109}%
    \special{fp}%
    \special{sh 1.000}%
    \special{pn 1}%
    \special{pa 3215 17}%
    \special{pa 3248 50}%
    \special{pa 3204 66}%
    \special{pa 3215 17}%
    \special{fp}%
    \special{pn 8}%
    \special{ar 2806 1969 1969 1969 -1.772154 -1.364218}%
    \special{sh 1.000}%
    \special{pn 1}%
    \special{pa 2480 134}%
    \special{pa 2441 109}%
    \special{pa 2480 84}%
    \special{pa 2480 134}%
    \special{fp}%
    \special{pn 8}%
    \special{pa 3248 109}%
    \special{pa 2480 109}%
    \special{fp}%
    \special{ar 2343 897 98 98 0 6.28319}%
    \graphtemp=.5ex
    \advance\graphtemp by 0.897in
    \rlap{\kern 2.343in\lower\graphtemp\hbox to 0pt{\hss $\bullet$\hss}}%
    \special{sh 1.000}%
    \special{pn 1}%
    \special{pa 3204 940}%
    \special{pa 3248 956}%
    \special{pa 3215 989}%
    \special{pa 3204 940}%
    \special{fp}%
    \special{pn 8}%
    \special{ar 2806 -961 1969 1969 1.364218 1.772154}%
    \special{sh 1.000}%
    \special{pn 1}%
    \special{pa 2480 922}%
    \special{pa 2441 897}%
    \special{pa 2480 872}%
    \special{pa 2480 922}%
    \special{fp}%
    \special{pn 8}%
    \special{pa 3248 897}%
    \special{pa 2480 897}%
    \special{fp}%
    \special{sh 1.000}%
    \special{pn 1}%
    \special{pa 1470 989}%
    \special{pa 1437 956}%
    \special{pa 1481 940}%
    \special{pa 1470 989}%
    \special{fp}%
    \special{pn 8}%
    \special{ar 1879 -961 1969 1969 1.369439 1.777375}%
    \special{sh 1.000}%
    \special{pn 1}%
    \special{pa 2205 872}%
    \special{pa 2244 897}%
    \special{pa 2205 922}%
    \special{pa 2205 872}%
    \special{fp}%
    \special{pn 8}%
    \special{pa 1437 897}%
    \special{pa 2205 897}%
    \special{fp}%
    \hbox{\vrule depth1.006in width0pt height 0pt}%
    \kern 4.685in
  }%
}%
\centerline{\box\graph}
\caption{Petri net semantics of the 2-customer 2-clerk store}\label{fig:store2}
\vspace{-2ex}
\end{figure}

\end{example}

Olderog \cite{Old87,Old91} shows that the Petri net $\denote{P}$
associated to a closed CCSP
expression $P$ is safe, and that all its reachable markings are
finite; the latter implies that it has bounded parallelism.
The following result, from \cite{Old87,Old91}, shows that the standard
interleaving semantics of CCSP is retrievable from the net semantics;
it establishes a strong bisimulation relating any CCSP expression
(seen as a state in a labelled transition system) with its
interpretation as a marking in the Petri net of CCSP\@.

\begin{theorem}\rm
There exists a relation $\Bi$ between closed CCSP expressions and
markings in the unmarked Petri net of CCSP, such that
\vspace{-1ex}
\begin{itemize}
\item $P \mathrel{\Bi} dex(P)$ for each closed, well-typed CCSP expression with guarded recursion,
\item if $P\Bi M$ and $P\ar{a}P'$ then there is a marking $M'$ and
  transition $t$ with $\ell(t)=a$, $M[t\rangle M'$ and $P\Bi M'$, and
\item if $P\Bi M$ and $M[t\rangle M'$ then there is CCSP process $P'$
  with $P\ar{\ell(t)}P'$ and $P\Bi M'$.
\end{itemize}
\end{theorem}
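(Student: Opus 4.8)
The plan is to let $\Bi$ be the graph of the decomposition map, i.e.\ $\Bi:=\{(P,dex(P))\mid P \mbox{ closed, well-typed, with guarded recursion}\}$. With this choice the first bullet holds by definition, so everything reduces to showing that $\Bi$ is a strong bisimulation between the labelled transition system of \Tab{sos CCSP} and the firing relation (\Def{firing}) of the net of \Tab{PN-CCSP}. Since every pair in $\Bi$ has the shape $(P,dex(P))$, the remaining two bullets amount to a single \emph{operational correspondence}: for each closed $P$, (i) whenever $P\ar{a}P'$ there is a net transition $t$ with $\ell(t)=a$ and $dex(P)[t\rangle dex(P')$; and (ii) whenever $dex(P)[t\rangle M'$ there is a $P'$ with $P\ar{\ell(t)}P'$ and $dex(P')=M'$. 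Given (i) and (ii), the bisimulation conditions follow immediately, once I also record the routine fact that $\ar{a}$ and firing both preserve closedness, well-typedness and guardedness, so that the resulting $P'$ again lies in the domain of $\Bi$.

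I would prove (i) by induction on the derivation of $P\ar{a}P'$ from the rules of \Tab{sos CCSP}, matching each SOS rule with its counterpart in \Tab{PN-CCSP}, and (ii) by induction on the structure of $P$ (equivalently on the derivation of $t$), using the inductive definition of $dex$ to split $dex(P)$ into components and read off which net rule must have produced $t$. The base case is prefixing: $aP\ar{a}P$ matches the axiom $\{aP\}\ar{a}dex(P)$ exactly, since $dex(aP)=\{aP\}$. Renaming and the congruence-style rules propagate through $dex$ because $dex(R(P))=R(dex(P))$ and the constructors $R(\cdot)$, $\mu\|_A$ and ${}_A\|\mu$ act element-wise and distribute over the multiset operations of \Def{firing}; the choice case is handled by $dex(P+Q)=dex(P)+dex(Q)$ together with the dedicated net rules (with premise $H\dcup K\ar{a}J$) that additionally consume the discarded summand. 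The key technical point throughout is that these place constructors are injective with pairwise disjoint ranges, so that the decomposition of a reachable marking into its components, and the identification of the untouched \emph{frame} of a firing, are unambiguous.

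The main obstacle is the parallel composition case, which must simultaneously cover the three firing modes (left-only, right-only, and synchronised) while keeping the frame fixed. Here a mismatch in side conditions has to be reconciled: the SOS left rule carries $a\notin\alpha(Q)$, whereas the corresponding net rule carries $a\notin A$ with $A=\alpha(P)\cap\alpha(Q)$. To bridge this I would first prove an auxiliary \emph{alphabet-soundness} lemma---any transition $t$ with $\precond{t}\leq dex(P)$ satisfies $\ell(t)\in\alpha(P)$---by induction on $P$; then $a\in\alpha(P)$ together with $a\notin A$ is equivalent to $a\notin\alpha(Q)$, and for the synchronising rule $a\in A$ forces $a\in\alpha(P)\cap\alpha(Q)$, so the two systems line up. The only other delicate point is recursion, for which there is no net rule: the correspondence rests entirely on the defining equation $dex(\rec{X_A|\RS})=dex(\rec{\RS_{X_A}|\RS})$, so that a firing from $dex(\rec{X_A|\RS})$ is literally a firing from the decomposition of the one-step unfolding, matching the SOS rule that unfolds $\rec{X_A|\RS}$. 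Guardedness is exactly what makes this unfolding, and hence $dex$, well-defined and the induction well-founded.
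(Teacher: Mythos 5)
Your reduction fails at its very first step: the graph of $dex$ is \emph{not} a strong bisimulation, so neither half of your ``operational correspondence'' holds, and the rest of the argument has nothing to stand on. The culprit is the interaction of choice with parallelism \emph{inside} a summand. Take $P = (a0_{\{a,c\}} \| b0_{\{b\}}) + c0_{\{a,b,c\}}$, which is closed, well-typed and trivially guarded. Writing $P_1 = a0_{\{a,c\}} \| b0_{\{b\}}$ (whose synchronisation alphabet is $\{a,c\}\cap\{b\}=\emptyset$) and $P_2 = c0_{\{a,b,c\}}$, the definition of $dex$ gives the two-place marking $dex(P) = \{\, a0_{\{a,c\}}\|_\emptyset + c0_{\{a,b,c\}},\ {}_\emptyset\|b0_{\{b\}} + c0_{\{a,b,c\}} \,\}$, since $H+K$ is the set of \emph{all pairwise} sums. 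The only $a$-labelled transition enabled at $dex(P)$ is obtained from $\{a0_{\{a,c\}}\|_\emptyset\} \ar{a} \{0_{\{a,c\}}\|_\emptyset\}$ by the left choice rule with $H=\emptyset$, $K=\{a0_{\{a,c\}}\|_\emptyset\}$: its preset is $\{a0_{\{a,c\}}\|_\emptyset + c0_{\{a,b,c\}}\}$ and its postset is $\{0_{\{a,c\}}\|_\emptyset\}$. Firing it yields $M' = \{\,0_{\{a,c\}}\|_\emptyset,\ {}_\emptyset\|b0_{\{b\}} + c0_{\{a,b,c\}}\,\}$, whereas $P \ar{a} P' = 0_{\{a,c\}}\|b0_{\{b\}}$ has $dex(P') = \{\,0_{\{a,c\}}\|_\emptyset,\ {}_\emptyset\|b0_{\{b\}}\,\}$. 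The point you missed is that the choice rule fuses only the preset part $K$ with $dex(Q)$; the initial place of the parallel component of the \emph{chosen} summand that does not participate in the first action stays fused with the discarded alternative. Consequently $M'$ is not $dex(P'')$ for any term $P''$ (it is not even in the image of $dex$: one of its places has outermost constructor $\|_\emptyset$ and the other outermost $+$, a mix $dex$ never produces), so both bullets of the bisimulation property fail for your relation, in this example already at the first step.

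This is not a local repair but the actual mathematical content of the theorem, and it is precisely why the statement says ``there exists a relation $\Bi$ \ldots{} with $P \mathrel{\Bi} dex(P)$'' rather than ``$\{(P,dex(P))\}$ is a bisimulation''. The required $\Bi$ must additionally relate $P'$ to all ``contaminated'' markings such as $M'$ above, in which some places of $dex(P')$ carry residues of resolved choices, and one must prove an invariant showing that such a marking has the same one-step behaviour as $dex(P')$---for instance, here the residual $c$-alternative is disabled because its preset, built by the right choice rule as $dex(P_1)+\{c0_{\{a,b,c\}}\}$, requires \emph{both} fused places to be marked, and one of them has been consumed. Note also that the paper itself does not prove this theorem; it recalls it from \cite{Old87,Old91}, where Olderog's proof consists exactly in constructing and analysing such a larger relation. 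Your auxiliary ingredients (the alphabet-soundness lemma reconciling $a\notin\alpha(Q)$ with $a\notin A$, the treatment of recursion via $dex(\rec{X_A|\RS}) = dex(\rec{\RS_{X_A}|\RS})$, preservation of well-typedness and guardedness) are sensible, but they sit on top of a reduction that is unsound.
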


To formalise the concurrency requirement for his net semantics Olderog
defines for each $n$-ary CCSP operator $op$ an $n$-ary operation
$op_{\cal N}$ on safe Petri nets, inspired by proposals from \cite{GM84,Wi84,GV87},
and requires that
\hypertarget{concurrency_requirement}{%
$$\begin{array}{c@{\qquad}r@{~\approx~}l}
(1)& \denote{op(P_1,\dots,P_n)} & op_{\cal N}(\denote{P_1},\dots,\denote{P_n}) \\
(2)& \denote{\rec{X_A|\RS}} & \denote{\rec{\RS_{X_A}|\RS}}
\end{array}$$}
for a suitable relation $\approx$.
In fact, (2) turns out to hold taking for $\approx$ the identity relation.
He establishes (1) taking for $\approx$ a relation he calls \emph{strong bisimilarity}, whose
definition will be recalled in \Sect{strong}. When a relation $\equiv$ includes $\approx$, and
(1) holds for $\approx$, then it also holds for $\equiv$.

The operations $op_{\cal N}$ (i.e.\ $({0_A})_{\cal N}$ for $A \mathbin\subseteq Act$,
$a_{\cal N}$ for $a \mathbin\in Act$, $R_{\cal N}$ for $R \mathbin\subseteq Act \times Act$,
$\|_{\cal N}$ and $+_{\cal N}$) are defined only up to isomorphism, but
this is no problem as isomorphic nets are strongly bisimilar.
The definition is recalled below---it generalises verbatim to non-safe nets,
except that $+_{\cal N}$ is defined only for nets whose initial markings are nonempty plain sets.

\begin{definition}\rm$\!$\cite{Old91}\label{df:operators}
The net $0_A$ has type $A$ and consists of a single place, initially marked:
$({0_A})_{\cal N} := (\{0_A\},\emptyset,\emptyset,\{0_A\},A,\emptyset)$.

Given a net $N=(S,T,F,M,A,\ell)$ and $a\mathbin\in Act$, take $s_0,t_a \not\in S \cup T$.
Then the net $a_{\cal N}N$ is obtained from $N$ by the addition of the
fresh place $s_0$ and the fresh transition $t_a$, labelled $a$, such that
$\precond{t_a}=\{s_0\}$ and $\postcond{t_a}=M$. The type of $a_{\cal N}N$
will be $A\cup\{a\}$ and the initial marking $\{s_0\}$.

Given a net $N=(S,T,F,M,A,\ell)$ and a renaming operator $R(\_)$, the net $R_{\cal N}(N)$
has type $R(A):=\{b\mathbin\in Act \mid \exists a \mathbin\in A, (a,b)\mathbin\in R\}$,
the same places and initial marking as $N$, and transitions
$t_b$ for each $t\mathbin\in T$ and $b\mathbin\in Act$ with $(\ell(t),b)\mathbin\in R$.
One has $\precond{t_b}:=\precond{t}$, $\postcond{t_b}:=\postcond{t}$,
and the label of $t_b$ will be $b$.

Given two nets $N_i=(S_i,T_i,F_i,M_i,A_i,\ell_i)$ ($i\mathbin=1,2$), their parallel composition 
$N_1\|_{\cal N}N_2=(S,T,F,M,A,\ell)$ is obtained
from the disjoint union of $N_1$ and $N_2$ by the omission of all transitions $t$ of $T_1\dcup T_2$
with $\ell(t)\mathbin\in A_1\cap A_2$, and the addition of fresh transitions $(t_1,t_2)$ for
all pairs $t_i\mathbin\in T_i$ ($i\mathbin=1,2$) with $\ell_1(t_1)=\ell_2(t_2)\in A_1 \cap A_2$.
Take $\precond(t_1,t_2) = \precond{t_1} + \precond{t_2}$,
$\postcond{(t_1,t_2)} = \postcond{t_1} + \postcond{t_2}$,
$\ell(t_1,t_2)=\ell(t_1)$, and $A:=A_1 \cup A_2$.

Given nets $N_i=(S_i,T_i,F_i,M_i,A_i,\ell_i)$ with $M_i\mathbin{\neq}\emptyset$ a plain set ($i\mathbin=1,2$),
the net $N_1+_{\cal N}N_2$ with type $A_1\cup A_2$ is obtained
from the disjoint union of $N_1$ and $N_2$ by the addition of the set
of fresh places $M_1\times M_2$---this set will be the initial marking of
$N_1\mathord{+_{\cal N}}N_2$---and the addition of fresh transitions
$t_i^K$ for any $t_i\mathbin\in T_i$ and $\emptyset \mathbin{\neq} K \mathbin{\leq} \precond{t_i}\cap M_i$.
$\ell(t^K_i)\mathbin=\ell_i(t)$, $\precond{t^K_1}\mathbin= \precond{t_1}{-}K+(K\times M_2)$,
$\precond{t^K_2}\mathbin= \precond{t_2}{-}K+(M_1\times K)$ and $\postcond{(t_i^K)}\mathbin=\postcond{t_i}$.
\end{definition}

\section{Structure preserving bisimulation equivalence}\label{sec:sp-bis}

This section presents structure preserving bisimulation equivalence on nets.

\begin{definition}\rm
Given two nets $N_i\mathbin=(S_i,\!T_i,\!F_i,\!M_i,\!A_i,\!\ell_i)$, a \emph{link} is a pair $(s_1,s_2)\mathord\in S_1{\times} S_2$
of places.
A \emph{linking} $l\mathbin\in \IN^{S_1\times S_1}$ is a multiset of links;
it can be seen as a pair of markings with a bijection between them.
Let $\pi_i(l)\mathbin\in\IN^{S_i}$ be these markings, given by
$\pi_1(l)(s_1)=\sum_{s_2\in S_2}l(s_1,s_2)$ for all $s_1\mathbin\in S_1$ and $\pi_2(l)(s_2)=\sum_{s_1\in S_1}l(s_1,s_2)$
for all $s_2\mathbin\in S_2$.\linebreak[3]
A \emph{structure preserving bisimulation} (\emph{sp-bisimulation}) is a set $\Bi$ of linkings, such that
\begin{itemize}\vspace{-1ex}
\item if $c\leq l\mathbin\in \Bi$ and $\pi_1(c)\mathbin=\precond{t_1}$ for $t_1\mathbin\in T_1$ then there
  are a transition $t_2\mathbin\in T_2$ with $\ell(t_2)=\ell(t_1)$ and $\pi_2(c)=\precond{t_2}$,
  and a linking $\bar c$ such that  $\pi_1(\bar c)=\postcond{t_1}$, $\pi_2(\bar c)=\postcond{t_2}$
  and $\bar l:=l-c+\bar c\in\Bi$.
\item if $c\leq l\mathbin\in \Bi$ and $\pi_2(c)=\precond{t_2}$ then there are a $t_1$
  and a $\bar c$ with the same properties.
\vspace{-1ex}
\end{itemize}
$N_1$ and $N_2$ are \emph{structure preserving bisimilar}, notation $N_1\mathbin{\bis{sp}}N_2$,
if $A_1\mathbin= A_2$ and there is a linking $l$ in a structure preserving bisimulation with $M_1\mathbin=\pi_1(l)$ and
$M_2 \mathbin=\pi_2(l)$.
\end{definition}
Note that if $\Bi$ is an sp-bisimulation, then so is its downward closure
$\{k\mid \exists l\mathbin\in \Bi.\, k\leq l\}$. 
Moreover, if $\Bi$ is an sp-bisimulation between two nets,
then the set of those linkings $l\mathbin\in\Bi$ for which
$\pi_1(l)$ and $\pi_2(l)$ are reachable markings is also an sp-bisimulation.

If $\cal B$ is a set of a links, let $\overline{\cal B}$ be the set of \emph{all}
linkings that are multisets over $\cal B$.

\begin{proposition}\rm
Structure preserving bisimilarity is an equivalence relation.
\end{proposition}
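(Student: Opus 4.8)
The plan is to verify the three defining properties of an equivalence relation separately. Reflexivity and symmetry are routine; transitivity is the crux, and indeed the property the whole construction is built to deliver (Olderog's strong bisimilarity fails to be transitive).

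For \textbf{reflexivity} I would exhibit, for each net $N=(S,T,F,M,A,\ell)$, a canonical sp-bisimulation. Let $\Delta=\{(s,s)\mid s\in S\}$ be the diagonal set of links and take $\Bi:=\overline{\Delta}$, the set of all linkings that are multisets over $\Delta$. Every $l\in\overline{\Delta}$ satisfies $\pi_1(l)=\pi_2(l)$, and any $c\le l$ again lies over $\Delta$; so if $\pi_1(c)=\precond{t_1}$ then also $\pi_2(c)=\precond{t_1}$, and taking $t_2:=t_1$ together with $\bar c$ the diagonal linking over $\postcond{t_1}$ discharges the first clause, the second being symmetric. The diagonal linking over $M$ then witnesses $N\bis{sp}N$, and $A=A$ trivially. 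For \textbf{symmetry}, given an sp-bisimulation $\Bi$ between $N_1$ and $N_2$, I would transpose every linking: for $l\in\IN^{S_1\times S_2}$ set $l^{\rm T}(s_2,s_1):=l(s_1,s_2)$, so that $\pi_1(l^{\rm T})=\pi_2(l)$ and $\pi_2(l^{\rm T})=\pi_1(l)$. Since the two transfer clauses in the definition are mirror images of each other, $\{l^{\rm T}\mid l\in\Bi\}$ is an sp-bisimulation between $N_2$ and $N_1$, and transposing the witness converts $N_1\bis{sp}N_2$ into $N_2\bis{sp}N_1$.

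For \textbf{transitivity}, suppose $\Bi_{12}$ and $\Bi_{23}$ are sp-bisimulations witnessing $N_1\bis{sp}N_2$ and $N_2\bis{sp}N_3$. The obstacle is that composing a linking $l_{12}$ (connecting tokens of $N_1$ to tokens of $N_2$) with a linking $l_{23}$ (connecting tokens of $N_2$ to tokens of $N_3$) requires \emph{choosing} how the $N_2$-tokens of the two linkings are identified, and different choices give different composites. To record this choice I would introduce \emph{triple-linkings} $m\in\IN^{S_1\times S_2\times S_3}$, with pairwise projections $\pi_{12}(m)$, $\pi_{23}(m)$, $\pi_{13}(m)$ onto $\IN^{S_1\times S_2}$, $\IN^{S_2\times S_3}$, $\IN^{S_1\times S_3}$. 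The elementary \emph{composition lemma} needed is: whenever $\pi_2(l_{12})=\pi_1(l_{23})$ there is an $m$ with $\pi_{12}(m)=l_{12}$ and $\pi_{23}(m)=l_{23}$ --- just a per-place bijection between the two equal-sized multisets of link-ends sitting on each $s_2\in S_2$. I would then define $\Bi_{13}:=\{\pi_{13}(m)\mid \pi_{12}(m)\in\Bi_{12}\text{ and }\pi_{23}(m)\in\Bi_{23}\}$. Applying the composition lemma to the two witnesses (whose middle markings both equal $M_2$) produces a witness for $N_1\bis{sp}N_3$, and the types agree since $A_1=A_2=A_3$.

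It remains to check the transfer clauses for $\Bi_{13}$. Given $l_{13}\in\Bi_{13}$ I fix a witnessing $m$; for $c_{13}\le l_{13}=\pi_{13}(m)$ with $\pi_1(c_{13})=\precond{t_1}$, I lift $c_{13}$ to a sub-triple-linking $c\le m$ with $\pi_{13}(c)=c_{13}$ (selecting sub-multisets of triples), and put $c_{12}:=\pi_{12}(c)\le\pi_{12}(m)\in\Bi_{12}$ and $c_{23}:=\pi_{23}(c)\le\pi_{23}(m)\in\Bi_{23}$. From $\pi_1(c_{12})=\pi_1(c_{13})=\precond{t_1}$ the clause for $\Bi_{12}$ yields $t_2$ with $\ell(t_2)=\ell(t_1)$, $\pi_2(c_{12})=\precond{t_2}$, and $\bar c_{12}$ with $\pi_1(\bar c_{12})=\postcond{t_1}$, $\pi_2(\bar c_{12})=\postcond{t_2}$, $\pi_{12}(m)-c_{12}+\bar c_{12}\in\Bi_{12}$. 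Since the middle projections of $c$ coincide, $\pi_1(c_{23})=\pi_2(c_{12})=\precond{t_2}$, so the clause for $\Bi_{23}$ yields $t_3$ with $\ell(t_3)=\ell(t_2)=\ell(t_1)$ and $\bar c_{23}$ with $\pi_1(\bar c_{23})=\postcond{t_2}=\pi_2(\bar c_{12})$, $\pi_2(\bar c_{23})=\postcond{t_3}$, $\pi_{23}(m)-c_{23}+\bar c_{23}\in\Bi_{23}$. The composition lemma applied to $\bar c_{12}$ and $\bar c_{23}$ (middle markings both $\postcond{t_2}$) gives a triple-linking $\bar c$; setting $\bar m:=m-c+\bar c$ we get $\pi_{12}(\bar m)\in\Bi_{12}$ and $\pi_{23}(\bar m)\in\Bi_{23}$, hence $\bar l_{13}:=\pi_{13}(\bar m)\in\Bi_{13}$, and $\bar l_{13}=l_{13}-c_{13}+\bar c_{13}$ with $\bar c_{13}:=\pi_{13}(\bar c)$ having projections $\postcond{t_1}$ and $\postcond{t_3}$, because $\pi_{13}$ commutes with $+$ and with the subtraction $-c$ (as $c\le m$). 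The second clause is symmetric. The entire difficulty is this matching bookkeeping; once triple-linkings carry the identification of the middle tokens explicitly, every step reduces to additivity of the projections and the two elementary combinatorial lemmas.
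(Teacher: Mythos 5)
Your proposal is correct and follows essentially the same route as the paper: the diagonal linking set $\overline{\it Id}$ for reflexivity, transposition of linkings for symmetry, and composition of sp-bisimulations via multisets of triples in $\IN^{S_1\times S_2\times S_3}$ for transitivity. The only difference is one of detail: the paper merely asserts that the composite $\Bi;\Bi'$ is an sp-bisimulation, whereas you spell out the lifting, transfer, and re-projection bookkeeping that verifies it.
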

\begin{proof}
The relation $\overline{\it Id}$, with ${\it Id}$ the identity
relation on places, is an sp-bisimulation, showing that $N\bis{sp}N$ for any net $N$.

Given an sp-bisimulation $\Bi$, also $\{l^{-1}\mid l\mathbin\in \Bi\}$ is an sp-bisimulation, showing symmetry
of $\bis{sp}$.

Given linkings $h\mathbin\in \IN^{S_1\times S_3}$, $k\mathbin\in \IN^{S_1\times S_2}$ and $l\mathbin\in \IN^{S_2\times S_3}$,
write $h \in k;l$ if there is a multiset
$m\mathbin\in \IN^{S_1\times S_2\times S_3}$ of triples of places, with
$k(s_1,s_2) = \sum_{s_3\in S}  m(s_1,s_2,s_3)$,
$l(s_2,s_3) = \sum_{s_1\in S}  m(s_1,s_2,s_3)$ and
$h(s_1,s_3) = \sum_{s_2\in S}  m(s_1,s_2,s_3)$.
Now, for sp-bisimulations $\Bi$ and $\Bi'$, also $\Bi;\Bi':=\{h \mathbin\in k;l \mid
k\mathbin\in \Bi \wedge l \mathbin\in \Bi'\}$
is an sp-bisimulation, showing transitivity of $\bis{sp}$.
\qed\end{proof}

\section{Strong bisimilarity}\label{sec:strong}

As discussed in the introduction and at the end of \Sect{operational},
Olderog defined a relation of \emph{strong bisimilarity} on safe Petri nets.

\begin{definition}\rm
For ${\cal B} \subseteq S_1 \times S_2$ a binary relation between the places
of two safe nets $N_i=(S_i,T_i,F_i,M_i,A_i,\ell_i)$, write $\widehat {\cal B}$ for
the set of all linkings $l\subseteq {\cal B}$ such that
$\pi_i(l)$ is a reachable marking of $N_i$ for $i\mathbin=1,2$ and
${\cal B} \cap \big(\pi_1(l)\times\pi_2(l)\big)=l$.
Now a \emph{strong bisimulation} as defined in \cite{Old91} can be
seen as a structure preserving bisimulation of the form \plat{$\widehat{\cal B}$}.
The nets $N_1$ and $N_2$ are \emph{strongly bisimilar} if $A_1\mathbin=A_2$ and
there is a linking $l$ in a strong bisimulation with $M_1\mathbin=\pi_1(l)$ and
$M_2 \mathbin=\pi_2(l)$.
\end{definition}
This reformulation of the definition from \cite{Old91} makes
immediately clear that strong bisimilarity of two safe Petri nets
implies their structure preserving bisimilarity.
Consequently, the \hyperlink{concurrency_requirement}{concurrency requirement}
for the net semantics from Olderog, as formalised by Requirements (1)
and (2) in \Sect{operational}, holds for structure preserving bisimilarity.

\section{Compositionality}\label{sec:congruence}
\newcommand{\ola}[1]{#1^l}
\newcommand{\ora}[1]{#1^r}

In this section I show that structure preserving bisimilarity is a
congruence for the operators of CCSP, or, in other words, that these
operators are compositional up to $\bis{sp}$.

\begin{theorem}\rm\label{thm:congruence}
If \mbox{$N_1 \mathbin{\bis{sp}} N_2$},
$a\mathbin\in Act$ and $R\subseteq Act \times Act$,
then $a_{\cal N}N_1 \bis{sp} a_{\cal N}N_2$ and
$R_{\cal N}(N_2) \mathbin{\bis{sp}} R_{\cal N}(N_2)$.
If $\ola{N_1} \mathbin{\bis{sp}} \ola{N_2}$
and $\ora{N_1} \mathbin{\bis{sp}} \ora{N_2}$
then $\ola{N_1}\|_{\cal N}\ora{N_1} \mathbin{\bis{sp}} \ola{N_2}\|_{\cal N}\ora{N_2}$ and,
if the initial markings of $N_i^l$ and $N_i^r$ are nonempty sets,
$\ola{N_1}+_{\cal N}\ora{N_1} \mathbin{\bis{sp}} \ola{N_2}+_{\cal N}\ora{N_2}$.
\end{theorem}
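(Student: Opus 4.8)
The plan is to handle the four operators one by one, in each case converting the given sp-bisimulation(s) into one for the compound nets. I freely use the two remarks after the definition: every given $\Bi$ may be assumed downward closed, and I may discard all linkings whose projections are unreachable. A witness $l^L$ for $\ola{N_1}\bis{sp}\ola{N_2}$ has $\pi_1(l^L)=M_{\ola{N_1}}$ and $\pi_2(l^L)=M_{\ola{N_2}}$; when these initial markings are plain sets (as required for $+_{\cal N}$) such an $l^L$ is a \emph{bijection} between them that maps initial places to initial places. Renaming needs no new work: $R_{\cal N}(N_i)$ shares places, markings and all presets/postsets with $N_i$ and only relabels, so the same $\Bi$ serves; if $\pi_1(c)=\precond{t_1}=\precond{(t_1)_b}$ then the $\Bi$-partner $t_2$ (with $\ell(t_2)=\ell(t_1)$) yields the required $(t_2)_b$, since $(\ell(t_1),b)\in R$. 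For prefixing I adjoin to $\Bi$ the single link between the two fresh source places $s_0^1,s_0^2$ and take that as the new witness; the only new transition $t_a$ fires once, is matched by the other $t_a$, and deposits tokens in the old witness linking $l_0\in\Bi$.

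For $\|_{\cal N}$ I use $\{k\dcup m\mid k\in\Bi^L,\ m\in\Bi^R\}$, where $k\dcup m$ is the linking that keeps $k$ on the left places and $m$ on the right; the witness is $l^L\dcup l^R$. Since such linkings respect the left/right split of places, every sub-linking factors uniquely as $c=c^L\dcup c^R$. A transition of the composite is pure-left, pure-right, or a synchronisation $(t^L,t^R)$; I answer the first two by the relevant component bisimulation and the third by applying $\Bi^L$ to $c^L$ and $\Bi^R$ to $c^R$ and fusing the two responses. A synchronising pair maps to a synchronising pair, because equal types give $A_{\ola{N_1}}\cap A_{\ora{N_1}}=A_{\ola{N_2}}\cap A_{\ora{N_2}}$.

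The choice operator is the crux, its initial marking being the product $M_{\ola{N_1}}\times M_{\ora{N_1}}$ of combined places and its transitions $t_i^K$ reshaping presets across that product. I build $\Bi^+$ from three families. \emph{Uncommitted}: the sub-linkings of $l^+:=\{((m_1,m'),(l^L(m_1),l^R(m')))\mid m_1\in M_{\ola{N_1}},\,m'\in M_{\ora{N_1}}\}$, which links the two product initial markings---here I use that $l^L,l^R$ are initial-preserving bijections. \emph{Left-committed}: linkings obtained from some $k\in\Bi^L$ by keeping the links of $k$ based outside a set $Z\subseteq M_{\ola{N_1}}$ as left-to-left links, and replacing each $Z$-link $(m_1,l^L(m_1))$ by the combined links $\{((m_1,m'),(l^L(m_1),l^R(m')))\mid m'\in M_{\ora{N_1}}\}$, where $Z$ consists of not-yet-consumed initial left places on which $k$ agrees with $l^L$. \emph{Right-committed}: symmetric. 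Firing $t_1^{\precond{t_1}}$ out of $l^+$ is matched, through the $\Bi^L$-response $t_2$ of $t_1$, by $t_2^{\precond{t_2}}$---this choice transition exists because $\precond{t_2}=l^L(\precond{t_1})\subseteq M_{\ola{N_2}}$, and its preset $\precond{t_2}\times M_{\ora{N_2}}$ is exactly $\pi_2(c)$---landing in the left-committed linking with $Z=M_{\ola{N_1}}\setminus\precond{t_1}$. Thereafter the combined region only shrinks as further $t_1^K$ fire (reducing to a plain $\Bi^L$-linking once $Z=\emptyset$), while every right transition stays dead, since starting the right net would demand combined places $M_{\ola{N_1}}\times K'$ that are no longer all present.

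The genuine difficulty is confined to this last case: proving the three families closed under both transfer clauses. The delicate points are that (i) the combined part of every reachable linking stays a product over the \emph{surviving} initial places---this holds because combined places are never produced into, only consumed, hence remain plain and factor cleanly; (ii) the restrictions of $l^L,l^R$ used in the combined part keep mapping initial places to initial places, which is inherited because the combined part is only ever shrunk, never re-linked; and (iii) from a left-committed (resp.\ right-committed) linking no right (resp.\ left) transition, and no synchronisation transition, can be matched at all, so there is nothing to simulate in the blocked direction. Granting this bookkeeping, the three required conditions on $\bar c$ and $\bar l$ fall out by expanding the definitions of $\precond{t_i^K}$ and $\postcond{t_i^K}$ and combining them with the component transfers; together with the operators already dispatched above this completes the proof.
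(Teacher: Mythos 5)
Your proposal is correct and follows the paper's own proof essentially step for step: the same constructions for prefixing (adjoin the single fresh link and fall back on the old witness), renaming (reuse $\Bi$ unchanged) and parallel composition (componentwise sums of linkings, with the three transition shapes dispatched to the component bisimulations), and for choice the same three families---the uncommitted product linking, and left-/right-committed linkings combining a component-bisimulation linking with the product of the surviving witness links and the opposite witness---together with the same two key observations, namely that the product region only shrinks and factors cleanly by plainness, and that its being a proper remnant blocks every transition of the opposite summand. The one point you leave implicit in ``not-yet-consumed''---that the surviving part of the witness must be a \emph{proper} sub-linking (the paper's condition $h^l_+ \lvertneqq k^l$)---is made explicit in the paper, since your claim (iii), and with it the whole choice case, depends on it.
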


\begin{proof}
Let $N_i=(S_i,T_i,F_i,M_i,A_i,\ell_i)$ for $i\mathbin=1,2$,
and let $s_i$ and $u_i$ be the fresh place and transition introduced in
the definition of $a_{\cal N}N_i$.
From $N_1 \mathbin{\bis{sp}} N_2$ it follows that $A_1=A_2$ and hence $A_1\cup\{a\}=A_2\cup\{a\}$.

Let $\Bi$ be an sp-bisimulation containing a linking $k$ with
$M_i=\pi_i(k)$ for $i\mathbin=1,2$. Let $\Bi_a:= \Bi \cup \{h\}$,
with $h=\{(s_1,s_2)\}$. Then $h$ links the initial markings of
$a_{\cal N}N_1$ and $a_{\cal N}N_2$.
Hence it suffices to show that $\Bi_a$ is an sp-bisimulation.
So suppose $c\mathbin\leq h$ and $\pi_1(c)\mathbin=\precond{t_1}$ for some $t_1\mathbin\in T_1$.
Then $c\mathbin=h$ and $t_1 \mathbin= u_1$.
Take $t_2:= u_2$ and $\bar h:= \bar c := k$.

To show that $R_{\cal N}(N_2) \bis{sp} R_{\cal N}(N_2)$ it suffices to show that $\Bi$ also is an
sp-bisimulation between $R_{\cal N}(N_2)$ and $R_{\cal N}(N_2)$, which is straightforward.

Now let $\ola{N_i}\mathbin=(\ola{S_i},\ola{T_i},\ola{F_i},\ola{M_i},\ola{A_i},\ola{\ell_i})$
and $\ora{N_i}\mathbin=(\ora{S_i},\ora{T_i},\ora{F_i},\ora{M_i},\ora{A_i},\ora{\ell_i})$
for $i\mathbin=1,2$.\linebreak
Let $A:=\ola{A_1}\cap \ora{A_1}=\ola{A_2}\cap \ora{A_2}$.
Create the disjoint union of $\ola{N_i}$ and $\ora{N_i}$ in the definition of
$\ola{N_i}\|_{\cal N}\ora{N_i}$ by renaming all places $s$ and transitions $t$ of $\ola{N_i}$
into $s\|_A$ and $t\|_A$, and all places $s$ and transitions $t$ of $\ora{N_i}$
into ${_A}\|s$ and ${_A}\|t$.
Let $\ola{\Bi}$ and $\ora{\Bi}$ be sp-bisimulations containing linkings $\ola{k}$ and $\ora{k}\!$, respectively, with
$\ola{M_i}\mathop=\pi_i(\ola{k})$ and $\ora{M_i}\mathop=\pi_i(\ora{k})$, for $i\mathord=1,2$.
Take $\Bi:=\{(\ola{h} \|_A) + (_A\| \ora{h}) \mid \ola{h}\mathbin\in\ola{\Bi} \wedge \ora{h}\mathbin\in\ora{\Bi}\}$,
where $\ola{h}\|_A \mathbin{:=} \{(s_1\|_A,s_2\|_A)\linebreak[1] \mid (s_1,s_2)\mathbin\in \ola{h}\}$,
and $_A\| \ora{h}$ is defined likewise.
Then $\pi_i((\ola{k} \|_A) + (_A\| \ora{k})) = \pi_i(\ola{k})\|_A + {_A}\|\pi_i(\ora{k}) =
\ola{M_i}\|_A + {_A\|}\ora{M_i}$ is the initial marking of $\ola{N_i}\|_{\cal N}\ora{N_i}$ for $i\mathbin=1,2$, so
it suffices to show that $\Bi$ is an sp-bisimulation.

So suppose $c\mathbin\leq (h^l \|_A) + (_A\| h^r) \mathbin\in \Bi$
with $h^l\mathbin\in\Bi^l \wedge h^r\mathbin\in\Bi^r$ and
$\pi_1(c)\mathbin=\precond{t_1}$ for $t_1$ a transition of $N^l_1\|_{\cal N}N^r_1$.
Then $c$ has the form $(c^l \|_A) + (_A\| c^r)$ for $c^l\mathbin\leq h^l \mathbin\in \Bi^l$ and
$c^r\mathbin\leq h^r \mathbin\in \Bi^r\!$, and $t_1$ has the form (i) $t^l_1\|_A$ for $t^l_1\in T^l_1$
with $\ell^l_1(t^l_1)\mathbin{\notin} A$, or (ii) $(t^l_1\|_A,{_A}\|t^r_1)$ for $t^l_1\in T^l_1$ and $t^r_1\in T^r_1$
with $\ell^l_1(t^l_1)=\ell^r_1(t^r_1)\mathbin\in A$, or (iii) $_A\|t^r_1$ for $t^r_1\in T^r_1$
with $\ell^r_1(t^r_1)\mathbin{\notin} A$.
In case (i) one has $c^r\mathbin=\emptyset$ and $\pi_1(c^l)\mathbin=\precond{t^l_1}$, 
whereas in case (ii) $\pi_1(c^l)\mathbin=\precond{t^l_1}$ and $\pi_1(c^r)\mathbin=\precond{t^r_1}$.
I only elaborate case (ii); the other two proceed likewise.
Since $\Bi^l$ is an sp-bisimulation,
there are a transition $t_2^l$ with $\ell^l_2(t_2^l)=\ell^l_1(t^l_1)$ and $\pi_2(c^l)\mathbin=\precond{t^l_2}$,
and a linking $\bar c^l$ such that $\pi_1(\bar c^l)\mathbin=\postcond{t^l_1}$, $\pi_2(\bar c^l)\mathbin=\postcond{t^l_2}$
and $\bar h^l:=h^l-c^l+\bar c^l\in\Bi^l$.
Likewise, since $\Bi^r$ is an sp-bisimulation,
there are a transition $t_2^r$ with $\ell^r_2(t_2^r)=\ell^r_1(t^r_1)$ and $\pi_2(c^r)\mathbin=\precond{t^r_2}$,
and a linking $\bar c^r$ such that $\pi_1(\bar c^r)\mathbin=\postcond{t^r_1}$, $\pi_2(\bar c^r)\mathbin=\postcond{t^r_2}$
and $\bar h^r:=h^r-c^r+\bar c^r\mathbin\in\Bi^r$.
Take $t_2:=(t^l_2\|_A,{_A}\|t^r_2)$. This transition has the same label as $t^l_2$, $t^r_2$,
$t^l_1$, $t^r_1$ and $(t^l_1\|_A,{_A}\|t^r_1)=t_1$. Moreover,
$\pi_2(c)=\pi_2(c^l)\|_A + {_A}\|\pi_2(c^r) = \precond{t^l_2}\|_A + {_A}\|\precond{t^r_2} = \precond{t_2}$.
Take $\bar c := (\bar c^l \|_A) + (_A\| \bar c^r)$.
Then $\pi_1(\bar c)=\postcond{t_1}$, $\pi_2(\bar c)=\postcond{t_2}$
and $\bar h:= (h^l \|_A) + (_A\| h^r) -c+\bar c=  (\bar h^l \|_A) + (_A\| \bar h^r)\in \Bi$.

Let $\ola{N_i}\mathbin=(\ola{S_i},\ola{T_i},\ola{F_i},\ola{M_i},\ola{A_i},\ola{\ell_i})$
and $\ora{N_i}\mathbin=(\ora{S_i},\ora{T_i},\ora{F_i},\ora{M_i},\ora{A_i},\ora{\ell_i})$
for $i\mathbin=1,2$, with $M^l_i$ and $M^r_i$ nonempty plain sets,
but this time I assume the nets to already be disjoint, and such that all the
places and transitions added in the construction of $N_i^l+_{\cal N}N_i^r$ are fresh.
Let $\ola{\Bi}$ and $\ora{\Bi}$ be as above.
Without loss of generality I may assume that the linkings $h$ in $\ola{\Bi}$ and $\ora{\Bi}$
have the property that $\pi_i(h)$ is a reachable marking for $i\mathbin=1,2$, so that
the restriction of $\pi_i(h)$ to $\ola{M_i}$ or $\ora{M_i}$ is a plain set. Define
\[\Bi^+\mathbin{:=}\begin{array}[t]{l}
  \{h^l_\bullet + (h_+^l \otimes k^r) \mathbin{|} h_\bullet^l + h_+^l \mathbin\in\Bi^l \wedge
  h_+^l \lvertneqq k^l\} \\
  \{h_\bullet^r + (k^l \otimes h_+^r) \mathbin{|} h_\bullet^r + h_+^r \mathbin\in\Bi^r \wedge
  h_+^r \lvertneqq k^r\}
  \cup \{k^l\otimes k^r\}
\end{array}\]
where $h^l\otimes h^r := \{((s^l_1,s^r_1), (s^l_2,s^r_2)) \mid
(s^l_1,s^l_2)\mathbin\in h^l \wedge (s^r_1,s^r_2)\mathbin\in h^r\}$.
Now $\pi_i(k^l\otimes k^r)=\pi_i(k^l)\times \pi_i(k^r)=M_i^l\times M_i^r$ is the initial marking of $N^l_i+_{\cal N}N^r_i$, so
again it suffices to show that $\Bi^+$ is an sp-bisimulation.

So suppose $c \leq h^l_\bullet + (h_+^l \otimes k^r) \mathbin\in\Bi^+$
with $h_\bullet^l + h_+^l \mathbin\in\Bi^l$, $h_+^l \lvertneqq k^l$
and $\pi_1(c)\mathbin=\precond{t_1}$ for $t_1$ a transition of $N^l_1+_{\cal N}N^r_1$.

First consider the case that $c\leq h^l_\bullet$.
Then $c \leq  h^l_\bullet \leq h_\bullet^l + h_+^l \mathbin\in\Bi^l$.
Since $\Bi^l$ is an sp-bisimulation, there
  are a transition $t_2\mathbin\in T^l_2$ with $\ell^l_2(t_2)=\ell^l_1(t_1)$ and $\pi_2(c)=\precond{t_2}$,
  and a linking $\bar c$ such that $\pi_1(\bar c)=\postcond{t_1}$, $\pi_2(\bar c)=\postcond{t_2}$
  and $h_\bullet^l + h_+^r-c+ \bar c\in\Bi^l$.
Now $h^l_\bullet + (h_+^l \otimes k^r) - c + \bar c = (h^l_\bullet - c + \bar c) + (h^r_+ \otimes k_2) \mathbin\in\Bi^+$
because $(h_\bullet^l-c + \bar c) + h_+^r \mathbin\in\Bi^l$.

In the remaining case $\pi_1(c)$ contains a place $(s^l_1,s^r_1)\mathbin\in M^l_1\times M^r_1$, so
$t_1$ must have either the form
$t_{1l}^K$ with $\emptyset\neq K \leq \precond{t_1^l}\cap M^l_1$ for some $t_1^l\mathbin\in T_1^l$,
or $t_{1r}^K$ with $\emptyset\neq K \leq \precond{t_1^r}\cap M^r_1$ for some $t_1^r\mathbin\in T_1^r$.
First assume, towards a contradiction, that $t_1\mathbin=t_{1r}^K$. Then
$M_1^l {\times} K \mathbin\leq \precond{t_{1r}^K}\mathbin=\pi_1(c)\mathbin\leq
\pi_1(h_\bullet^l) + \pi_1(h^l_+\otimes k^r)$. Since the places in $M_1^l {\times} K \mathbin\subseteq
M_1^l{\times} M_1^r$ are fresh, it follows that
$M_1^l \times K \leq \pi_1(h^l_+\otimes k^r) \leq \pi_1(h^l_+)\times \pi_1(k^r) \leq \pi_1(h^l_+)\times M_1^r$,
implying that $M_1^l \leq \pi_1(h^l_+)$ and $K \leq  M_1^r$---here I use that $M_1^l\mathbin{\neq} \emptyset \mathbin{\neq} K$
and $\pi_1(h^l_+)$ and $M_1^r$ are plain sets. However, the condition
$h_+^l \lvertneqq k^l$ implies that $\pi_1(h_+^l) \mathbin{\lvertneqq} \pi_1(k^l)\mathbin=M_1^l$, yielding a contradiction.
Hence $t_1$ is of the form $t_{1l}^K$.

Since \plat{$\pi_1(c)\mathbin=\precond{t_{1l}^K}\mathbin=\precond{t_1^l}{-}K+(K\times M^r_1)$},
the linking $c$ must have the form $c_\bullet {+} c'$ with \plat{$\pi_1(c_\bullet)=\precond{t_1^l}-K$} and
$\pi_1(c')=K\times M^r_1$.
As no place in $\precond{t_1^l}-K$ can be in $M^l_1\times M^r_1 \supseteq \pi_1(h_+^l \otimes k^r)$,
it follows that $c_\bullet \leq h^l_\bullet$. Likewise, as none of the places in $K\times M^r_1$ can
be in $\pi_1(h^l_\bullet)$, it follows that $c' \leq h_+^l \otimes k^r$.
Thus $K \times M_1^r =\pi_1(c') \leq \pi_1(h^l_+\otimes k^r) \leq \pi_1(h^l_+)\times \pi_1(k^r) \leq \pi_1(h^l_+)\times M_1^r$,
implying $K\leq \pi_1(h^l_+)$---again using that $\pi_1(h^l_+)$ and $M_1^r\mathbin{\neq}\emptyset$ are plain sets.
The linking $h^l_+\otimes k^r$ has the property that its projection $\pi_1(h^l_+\otimes k^r)$ is a plain set.
Since a subset $c''$ of a such linking is completely determined by its first projection $\pi_1(c'')$,
it follows that $c'=c_+\otimes k^r$ for the unique linking $c_+\leq h^l_+$ with $\pi_1(c_+)=K$.

Now $c_\bullet + c_+ \leq h^l_\bullet + h^l_+ \mathbin\in\Bi^l$ and \vspace{-1pt}
$\pi_1(c_\bullet + c_+)\mathbin=(\precond{t_1^l}{-}K)+K\mathbin=\precond{t^l_1}$.
Since $\Bi^l$ is an sp-bisimulation, there
  are a transition $t^l_2\mathbin\in T^l_2$ with $\ell^l_2(t^l_2)\mathbin=\ell^l_1(t^l_1)$ and
  $\pi_2(c_\bullet{+} c_+)\mathbin=\precond{t^l_2}$,
  and a linking $\bar c$ such that $\pi_1(\bar c)=\postcond{t^l_1}$, $\pi_2(\bar c)=\postcond{t^l_2}$
  and $h_\bullet^l + h_+^l-(c_\bullet+ c_+) + \bar c\in\Bi^l$.
Let $L\mathbin{:=}\pi_2(c_+)$. Then $L\mathbin{\neq}\emptyset$ since $K\mathbin{\neq}\emptyset$,
$L=\pi_2(c_+)\leq \pi_2(h^l_+) \leq \pi_2(k^l) = M^l_2$ and \plat{$L=\pi_2(c_+) \leq \pi_2(c_\bullet+ c_+)=\precond{t_2^l}$}.
By \Def{operators} $N^l_2+_{\cal N}N_2^r$ has a transition $t^L_{2l}$ with
$\ell(t^L_{2l})\mathbin=\ell^l_2(t^l_2)\mathbin=\ell^l_1(t^l_1)\mathbin=\ell(t^L_{1l})$,
$\precond{t^L_{2l}}=\precond{t^l_2}{-}L+(L\times M^l_2)=\pi_2(c_\bullet+ c_+)-\pi_2(c_+)+(\pi_2(c_+)\times\pi_1(k^r))
=\pi_2(c_\bullet + (c_+\otimes k^r)) = \pi_2(c)$
and $\postcond{t^L_{2l}}=\postcond{t^l_2}=\pi_2(\bar c)$.
Moreover, $\pi_1(\bar c)\mathbin=\postcond{t^l_1}\mathbin=\postcond{t^K_1}\!$.
Finally, $h^l_\bullet + (h_+^l \otimes k^r) - c + \bar c = (h^l_\bullet - c_\bullet + \bar{c})
+ ((h_+^l\mathord-c_+)\otimes k^r)\in\Bi^+$
since $(h^l_\bullet - c_\bullet + c') + (h^l_+\mathord-c_+)\in\Bi^l$
and $h^l_+{-}c_+ \leq h^l_+ \lvertneqq k^l$.

The case supposing $c \leq h^r_\bullet + (k^r \otimes h_+^r) \mathbin\in\Bi^+$ follows by symmetry,
whereas the case $c\leq k^l\otimes k^r$ proceeds by simplification of the other two cases.
\qed
\end{proof}
\vfill

\section{Processes of nets and causal equivalence}\label{sec:processes}

A \emph{process} of a net $N$ \cite{Pe77,genrich80dictionary,GR83}
is essentially a conflict-free, acyclic net together
with a mapping function to $N$. It can be obtained by unwinding $N$,
choosing one of the alternatives in case
of conflict. It models a run, or concurrent computation, of $N$.
The acyclic nature of the process gives rise to a notion of causality
for transition firings in the original net via the mapping function.
A conflict present in the original net is represented by the existence of
multiple processes, each representing one possible way to decide the conflict.
This notion of process differs from the one used in process
algebra; there a ``process'' refers to the entire behaviour of a system,
including all its choices.

\begin{definition}\rm\label{df:process}
A \emph{causal net}%
\footnote{A causal net \cite{Pe77,Re13} is traditionally called an \emph{occurrence net} \cite{genrich80dictionary,GR83,Re85}.
  Here, following \cite{Old91}, I will
  not use the terminology ``occurrence net'' in order to avoid confusion
  with the occurrence nets of \cite{NPW81,Wi87a}; the latter extend
  causal nets with forward branching places, thereby capturing all runs
  of the represented system, together with the branching structure
  between them.}
is a net $\NN = (\SS, \TT, \FF, \MM_0, \AA, \ell_\NNs)$ satisfying
  \begin{itemize}
  \vspace{-1ex}
    \item $\forall s \in \SS. |\precond{s}| \leq\! 1\! \geq |\postcond{s}|
    \wedge\, \MM_0(s) = \left\{\begin{array}{@{}l@{\quad}l@{}}1&\mbox{if $\precond{s}=\emptyset$}\\
                                   0&\mbox{otherwise,}\end{array}\right.$
    \item $\FF$ is acyclic, i.e.,
      $\forall x \mathbin\in \SS \cup \TT. (x, x) \mathbin{\not\in} \FF^+$,
      where $\FF^+$ is the transitive closure of $\{(x,y)\mid \FF(x,y)>0\}$,
    \item and $\{t \in \TT \mid (t,u)\in \FF^+\}$ is finite for all $u\in \TT$.
  \vspace{-1ex}
  \end{itemize}
A \emph{folding} from a net $\NN = (\SS, \TT, \FF, \MM_0, \AA, \ell_\NNs)$ into a net $N = (S, T, F, M_0, A, \ell)$ is a function
$\rho:\SS \cup \TT \rightarrow S \cup T$ with $\rho(\SS) \subseteq S$ and $\rho(\TT) \subseteq T$, satisfying
  \begin{itemize}
  \vspace{-1ex}
    \item $\AA = A$ and $\ell_\NNs(t)=\ell(\rho(t))$ for all $t\in\TT$,
    \item $\rho(\MM_0) = M_0$, i.e.\ $M_0(s) = |\rho^{-1}(s) \cap \MM_0|$ for all $s\in S$, and
    \item $\forall t \in \TT, s \in S.\,
      F(s, \rho(t)) = |\rho^{-1}(s) \cap \precond{t}| \wedge
      F(\rho(t), s) = |\rho^{-1}(s) \cap \postcond{t}|$.
\footnote{For $H\subseteq \SS$, the multiset $\rho(H)\mathbin\in\IN^S$ is defined by $\rho(H)(s)=|\rho^{-1}(s)\cap H|$.
Using this, these conditions can be reformulated as $\rho(\precond{t})=\precond\rho(t)$ and
$\rho(\postcond{t})=\postcond{\rho(t)}$.}
  \vspace{-1ex} 
  \end{itemize}
A pair $\PP = (\NN, \rho)$ of a causal net $\NN$ and a folding of $\NN$ into a net $N$ is a \emph{process} of $N$.
$\PP$ is called \emph{finite} if $\TT$ is finite.
\end{definition}
Note that if $N$ has bounded parallelism, than so do all of its processes.

\begin{definition}\rm \cite{Old91}
A net $\NN$ is called a causal net \emph{of} a net $N$ if it is the
first component of a process $(\NN,\rho)$ of $N$.
Two nets $N_1$ and $N_2$ are \emph{causal equivalent}, notation $\equiv_{\it caus}$, if
they have the same causal nets.
\end{definition}
Olderog shows that his relation of strong bisimilarity is included in $\equiv_{\it caus}$ \cite{Old91},
and thereby establishes the
\hyperlink{concurrency_requirement}{concurrency requirement (1)} from
\Sect{operational} for $\equiv_{\it caus}$.

For $\NN= (\SS, \TT, \FF, \MM_0, \AA, \ell_\NNs)$ a causal net,
let $\NN^\circ := \{s \mathbin\in \SS \mid \postcond{s}\mathbin=\emptyset\}$.
The following result supports the claim that finite processes model finite runs.

\begin{proposition}\rm \cite[Theorems 3.5 and 3.6]{GR83}\label{prop:finite process marking}
$M$ is a reachable marking of a net $N$
iff $N$ has a finite process $(\NN,\rho)$ with $\rho(\NN^\circ)=M$.
Here $\rho(\NN^\circ)(s)=|\rho^{-1}(s) \cap \NN^\circ|$.
\end{proposition}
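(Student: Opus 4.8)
The statement is a biconditional, and the plan is to prove its two directions separately; they correspond to the two theorems of \cite{GR83} cited. Throughout I use that a causal net has every place with at most one pre- and one post-transition, and that a folding $\rho$ satisfies $\rho(\precond{t})=\precond{\rho(t)}$ and $\rho(\postcond{t})=\postcond{\rho(t)}$ (the reformulation noted in \Def{process}).

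For the ``only if'' direction, suppose $M$ is reachable, so it occurs as $M_n$ in a path $M_0\,t_1\,M_1\cdots t_n\,M_n$ of $N$. I would construct a finite process $(\NN,\rho)$ with $\rho(\NN^\circ)=M_n$ by induction on $n$. In the base case $\NN$ has no transitions and a set of initial places in multiplicity-preserving correspondence with the tokens of $M_0$ (so $\rho(\MM_0)=M_0$); here $\NN^\circ=\MM_0$ and $\rho(\NN^\circ)=M_0$. For the inductive step I have a finite process whose maximal places satisfy $\rho(\NN^\circ)=M_{n-1}$ together with a firing $M_{n-1}[t_n\rangle M_n$. Since $\precond{t_n}\leq M_{n-1}=\rho(\NN^\circ)$, I can choose for each place $s$ exactly $\precond{t_n}(s)$ maximal places mapping to $s$, obtaining a sub-multiset $G\leq\NN^\circ$ with $\rho(G)=\precond{t_n}$; then I add one fresh transition $\tilde t$ with $\rho(\tilde t)=t_n$ and $\precond{\tilde t}=G$, together with fresh postplaces in multiplicity-preserving correspondence with $\postcond{t_n}$. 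The new maximal places are $(\NN^\circ-G)$ plus the fresh postplaces, so their $\rho$-image is $M_{n-1}-\precond{t_n}+\postcond{t_n}=M_n$.

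For the converse, given a finite process $(\NN,\rho)$ with $\rho(\NN^\circ)=M$, I would linearise the finitely many transitions of $\NN$ into a sequence $\tilde t_1,\dots,\tilde t_m$ respecting the causal order $\FF^+$ (a topological sort). The at-most-one-consumer property makes this a legal firing sequence of $\NN$: when $\tilde t_k$ is reached all its preplaces are already marked---each is either initial or produced by an earlier $\tilde t_j$---and none has yet been consumed, since each has $\tilde t_k$ as its unique post-transition. A short argument then shows the marking reached after firing all of $\tilde t_1,\dots,\tilde t_m$ is exactly $\NN^\circ$: a place carries a token at the end iff it has no post-transition. Applying $\rho$ and using the folding equalities turns this into a firing sequence $\rho(\tilde t_1),\dots,\rho(\tilde t_m)$ of $N$ from $M_0$ reaching $\rho(\NN^\circ)=M$, so $M$ is reachable.

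I expect the main obstacle to be the bookkeeping in the ``only if'' construction rather than any conceptual difficulty: since the nets here need not be safe, only of bounded parallelism, one must select the preimage $G$ of $\precond{t_n}$ respecting multiplicities, and then verify that the single-transition extension preserves every defining clause---the branching bound $|\precond{s}|\leq 1\geq|\postcond{s}|$ (the places of $G$ leave $\NN^\circ$ precisely because they acquire the post-transition $\tilde t$), acyclicity and finiteness of $\{t\mid (t,u)\in\FF^+\}$ (the postplaces are fresh and maximal), and the three folding conditions (immediate by construction). In the converse the only delicate point is the enabledness invariant along the topological order, which rests squarely on the at-most-one-consumer property of causal-net places.
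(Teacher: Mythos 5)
The paper does not prove this proposition itself---it imports it verbatim from Goltz and Reisig \cite[Theorems 3.5 and 3.6]{GR83}---so there is no in-paper proof to compare against; your argument is correct and is essentially the standard one underlying that citation. Both directions hold up: the induction along the firing sequence, extending the process by one fresh transition whose preplaces are a multiplicity-respecting selection $G\leq\NN^\circ$ with $\rho(G)=\precond{t_n}$, and the converse via a topological sort of the finitely many transitions of $\NN$, where enabledness along the linearisation rests on the $|\postcond{s}|\leq 1$ property and the folding equalities $\rho(\precond{t})=\precond{\rho(t)}$, $\rho(\postcond{t})=\postcond{\rho(t)}$ transfer the firing sequence to $N$.
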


\noindent
A process is not required to represent a completed run of the original net.
It might just as well stop early. In those cases, some set of transitions can
be added to the process such that another (larger) process is obtained. This
corresponds to the system taking some more steps and gives rise to a natural
order between processes.

\begin{definition}\rm
  Let $\PP \mathbin= ((\SS, \TT, \FF, \MM_0, \AA, \ell), \rho)$ and $\PP' \mathbin=
  ((\SS', \TT\,', \FF\,', \MM_0', \AA', \ell'), \rho')$ be two processes of the same net.
\label{df:extension}
  $\PP'$ is a \emph{prefix} of $\PP$, notation $\PP'\!\leq \PP$, and 
  $\PP$ an \emph{extension} of $\PP'$, iff 
    $\SS'\subseteq \SS$,
    $\TT\,'\subseteq \TT$,
    $\MM_0' = \MM_0$,
    $\FF\,'=\FF\restrictedto(\SS' \mathord\times \TT\,' \mathrel\cup \TT\,' \mathord\times \SS')$
    and $\rho'=\rho\restrictedto(\SS'\cup \TT\,')$. (This implies that
    $\AA'=\AA$ and $\ell'=\ell\restrictedto \TT$.)
\end{definition}

\noindent
The requirements above imply that if $\PP'\!\leq \PP$, \plat{$(x,y)\mathbin\in
\FF^+$} and $y\mathbin\in \SS' \cup \TT\,'$ then $x\mathbin\in \SS' \cup \TT\,'\!$.
Conversely, any subset $\TT\,'\subseteq \TT$ satisfying
\plat{$(t,u)\in \FF^+ \wedge u\in \TT\,' \Rightarrow t\in \TT\,'$} uniquely determines a
prefix of $\PP$.
A process $(\NN,\rho)$ of a net $N$ is \emph{initial} if $\NN$ contains no transitions;
then $\rho(\NN^\circ)$ is the initial marking of $N$.
Any process has an initial prefix. 

\begin{proposition}\rm\cite[Theorem~3.17]{GR83}\label{prop:limit1}
If $\PP_i = ((\SS_i, \TT_i, \FF_i, {\MM_0}_i, \AA_i, \ell_i), \rho_i)$
$(i\mathbin\in\IN)$ is a chain of processes of a net $N$, satisfying $\PP_i \leq
\PP_j$ for $i\leq j$, then there exists a unique process
$\PP = ((\SS, \TT, \FF, {\MM_0}, \AA, \ell), \rho)$ of $N$ with $\SS = \bigcup_{i\in\IN}\SS_i$
and $\TT = \bigcup_{i\in\IN}\TT_i$---the \emph{limit} of this
chain---such that $\PP_i \leq \PP$ for all $i\mathbin\in\IN$.
\qed
\end{proposition}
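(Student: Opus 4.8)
The plan is to realise the limit process as the elementwise union of the $\PP_i$ and to check that each defining property---being a causal net, being a folding into $N$, and the prefix relation $\PP_i\leq\PP$---transfers from the individual $\PP_i$. Concretely I would set $\SS:=\bigcup_i\SS_i$, $\TT:=\bigcup_i\TT_i$ and $\AA:=A$; the initial markings of the $\PP_i$ all coincide by \Def{extension}, so $\MM_0$ is this common marking, carried by $\SS_0\subseteq\SS$. For the flow relation and folding the point is coherence along the chain: for $i\leq j$ the restriction clauses of $\leq$ give $\FF_i=\FF_j\restrictedto(\SS_i{\times}\TT_i\cup\TT_i{\times}\SS_i)$ and $\rho_i=\rho_j\restrictedto(\SS_i\cup\TT_i)$. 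Hence for any $(x,y)$ with $x,y\in\SS_i\cup\TT_i$ (such $i$ exists since the chain is increasing) the value $\FF_i(x,y)$ is independent of $i$, and I define $\FF(x,y)$ to be it; likewise $\rho:=\bigcup_i\rho_i$ and $\ell:=\bigcup_i\ell_i$ are well-defined functions.

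Next I would verify that $\NN=(\SS,\TT,\FF,\MM_0,\AA,\ell)$ is a causal net, the central technical point being that the relevant pre- and postsets of a fixed element \emph{stabilise} along the chain, so the local axioms can be read off from a single $\NN_i$. By the remark following \Def{extension} each $\SS_i\cup\TT_i$ is downward closed under $\FF^+$; consequently, for a place $s\in\SS_i$ its unique possible preplace-transition is already in $\TT_i$, so $\precond{s}$ is complete in $\NN_i$, giving $|\precond{s}|\leq 1$ and the clause $\MM_0(s)=1\Leftrightarrow\precond{s}=\emptyset$ inherited from $\NN_i$. Likewise every preplace of a transition $t\in\TT_i$ lies in $\SS_i$, and since a prefix contains all output places of each of its transitions, its postplaces lie in $\SS_i$ too, so $\precond{t}$ and $\postcond{t}$ are unchanged in the limit. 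The only set that genuinely grows is the postset of a place $s$ (extending the run consumes $s$ by a new transition), but as $j$ increases these postsets form an increasing chain of sets of size at most one, whose union is again of size at most one. Finally, acyclicity and the finite-predecessor condition are ``local in the past'': any cycle through, or any $\FF^+$-predecessor of, an element of $\SS_i\cup\TT_i$ stays inside that downward-closed set, on which $\FF^+$ coincides with $\FF_i^+$, so both conditions transfer verbatim from $\NN_i$.

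I would then check that $\rho$ is a folding into $N$: the conditions $\AA=A$, $\ell(t)=\ell(\rho(t))$, $\rho(\MM_0)=M_0$ and the two arc-weight equalities are each local to one transition or place, and since $\rho$, $\FF$, $\MM_0$ agree with $\rho_i$, $\FF_i$, ${\MM_0}_i$ on $\SS_i\cup\TT_i$ (using the stabilisation above), every instance reduces to the corresponding folding condition for $\rho_i$. That $\PP_i\leq\PP$ is immediate, since $\FF,\rho,\MM_0,\ell$ were defined precisely to extend $\FF_i,\rho_i,{\MM_0}_i,\ell_i$, yielding the required restriction identities. Uniqueness is a short coherence argument: if $\PP'=((\SS,\TT,\FF',\MM_0',\AA',\ell'),\rho')$ has the same places and transitions and satisfies $\PP_i\leq\PP'$ for all $i$, then for each argument pair $(x,y)$, choosing $i$ with $x,y\in\SS_i\cup\TT_i$, the relation $\PP_i\leq\PP'$ forces $\FF'(x,y)=\FF_i(x,y)=\FF(x,y)$, and similarly $\rho'=\rho$, $\MM_0'=\MM_0$, $\AA'=\AA$, $\ell'=\ell$, so $\PP'=\PP$.

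I expect the main obstacle to be the preservation of the causal-net degree and finiteness axioms under the infinite union---in particular ruling out that a place acquires a second postplace, or that a transition acquires an infinite causal past, in the limit. The former is handled by the increasing-chain-of-singletons observation, the latter by the downward-closedness of prefixes, which confines the entire past of any transition to a single $\NN_i$. Everything else---well-definedness of $\FF$ and $\rho$, the folding conditions, and uniqueness---is a routine compatibility check riding on the restriction clauses built into the definition of $\leq$.
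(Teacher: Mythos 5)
Your construction is correct, but there is no in-paper argument to compare it with: the paper states \Prop{limit1} purely as a citation of \cite[Theorem~3.17]{GR83} and supplies no proof of its own. Your elementwise-union argument is the standard one, and it has a genuine advantage over the bare citation: \Def{process} here includes the third (finite causal past) axiom, which the processes of \cite{GR83} lacked, so your explicit verification that this axiom survives the limit---via downward closure of each $\SS_i\cup\TT_i$---covers a point the citation glosses over. Two of your assertions deserve to be made explicit, though both are true and each needs only a line. First, you invoke the remark following \Def{extension} to conclude that $\SS_i\cup\TT_i$ is downward closed under the \emph{limit's} $\FF^+$; strictly, that remark applies to a pair of processes, and the limit is not yet known to be one, so the application must be routed through the chain: any $\FF$-edge $(x,y)$ with $y\in\SS_i\cup\TT_i$ already lies in some $\NN_j$ with $j\geq i$, and the remark applied to $\PP_i\leq\PP_j$ yields $x\in\SS_i\cup\TT_i$; induction along paths then gives closure under $\FF^+$, after which presets, acyclicity and finite pasts localise to a single $\NN_i$ exactly as you say. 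Second, the claim that a prefix contains all postplaces of its transitions is not among the clauses of \Def{extension}; it follows from the folding conditions of the two processes: for $t$ in the prefix and any place $s$ of $N$, the count $|\rho^{-1}(s)\cap\postcond{t}|$ computed in the prefix and in the full process both equal $F(\rho(t),s)$, the former set is contained in the latter, and both are finite, hence they coincide, so the postset of $t$ computed in the prefix equals its postset in the full process. With these two points spelled out, the remaining steps---well-definedness of $\FF$, $\rho$ and $\ell$ from the restriction clauses, the increasing-chain-of-singletons observation for postsets of places, the transfer of the folding conditions, and coherence-based uniqueness---are all sound.
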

In \cite{Pe77,genrich80dictionary,GR83} processes were
defined without the third requirement of \Def{process}.
Goltz and Reisig \cite{GR83} observed that certain
processes did not correspond with runs of systems, and proposed to
restrict the notion of a process to those that can be obtained as the
limit of a chain of finite processes \cite[end of Section~3]{GR83}.
By \cite[Theorems~3.18 and 2.14]{GR83}, for processes of finite nets
this limitation is equivalent with imposing the third bullet point of \Def{process}. 
My restriction to nets with bounded parallelism serves to recreate
this result for processes of infinite nets.

\begin{proposition}\rm\label{prop:limit2}
Any process of a net can be obtained as the limit of a chain of finite approximations.
\end{proposition}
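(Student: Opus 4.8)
The plan is to build an increasing chain $\PP_0\leq\PP_1\leq\cdots$ of finite processes whose limit, supplied by \Prop{limit1}, is $\PP$ itself. By \Def{extension} a prefix of $\PP$ is uniquely determined by a set $\TT'\subseteq\TT$ of transitions that is downward closed under $\FF^+$, so I would first reduce the whole statement to producing an increasing chain $\TT^{(0)}\subseteq\TT^{(1)}\subseteq\cdots$ of \emph{finite} such sets with $\bigcup_n\TT^{(n)}=\TT$. Two observations make this reduction clean. First, a finite process (\Def{process}) only requires $\TT$ to be finite, not $\SS$; since every prefix keeps the full initial marking (the clause $\MM_0'=\MM_0$), the approximations may retain all initial places even if $\MM_0$ is infinite. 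Second, in a causal net each place has $|\precond{s}|\leq 1$, so a place carries at most one incoming transition; hence once $\bigcup_n\TT^{(n)}=\TT$, every place of $\SS$ appears in some $\SS^{(n)}$ (it either lies in $\MM_0$, or its unique producing transition has been included), giving $\bigcup_n\SS^{(n)}=\SS$ automatically.

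For the chain itself I would stratify $\TT$ by \emph{depth}: let $d(t)$ be the length of the longest $\FF^+$-chain of transitions ending in $t$. The third clause of \Def{process} makes $\{t'\mathbin\in\TT\mid(t',t)\in\FF^+\}$ finite, so $d(t)\mathbin\in\IN$ is well defined and $d$ strictly increases along $\FF^+$. Setting $\TT^{(n)}:=\{t\mathbin\in\TT\mid d(t)\leq n\}$ then yields an increasing family of $\FF^+$-downward-closed sets with $\bigcup_n\TT^{(n)}=\TT$, and everything hinges on showing each $\TT^{(n)}$ is finite, which I would prove by induction on $n$.

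The crux---and the main obstacle---is to harness bounded parallelism at each level. The key structural fact is that in a causal net $|\postcond{s}|\leq 1$, so each place feeds at most one transition and the presets of distinct transitions are pairwise disjoint plain sets. Consequently, for any set $U$ of transitions simultaneously enabled at a reachable marking $\MM$ of $\NN$ one has $\sum_{t\in U}\precond{t}\leq\MM$, so the bounded parallelism of $\NN$---inherited from $N$ by the remark following \Def{process}---forces $U$ to be finite. In the inductive step I would fire the finite, downward-closed set $\TT^{(n)}$ (in any topological order) to reach a marking $\MM_n$, which is reachable in $\NN$; every transition $t$ with $d(t)=n+1$ has all its $\FF^+$-predecessors in $\TT^{(n)}$, so each preplace of $t$ is either in $\MM_0$ or produced by a transition of $\TT^{(n)}$ and, having $t$ as its only posttransition, is still marked in $\MM_n$. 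Thus all depth-$(n+1)$ transitions are enabled at $\MM_n$; by the argument above their set is finite, and $\TT^{(n+1)}$ is its union with the finite set $\TT^{(n)}$.

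Finally, the prefixes $\PP_n$ determined by the $\TT^{(n)}$ form a chain of finite processes with $\PP_n\leq\PP$. By \Prop{limit1} their limit is a process whose place and transition sets are $\bigcup_n\SS^{(n)}=\SS$ and $\bigcup_n\TT^{(n)}=\TT$; since the flow relation and folding of any prefix are mere restrictions of those of $\PP$, the limit coincides with $\PP$ by the uniqueness in \Prop{limit1}. Hence $\PP$ is the limit of its finite approximations $\PP_n$.
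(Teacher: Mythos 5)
Your proof is correct and follows essentially the same route as the paper's: stratify the transitions by depth (well-defined thanks to the third clause of \Def{process}), take as approximations the prefixes of depth $\leq n$, and show each is finite via bounded parallelism, using that presets of distinct transitions in a causal net are disjoint and all next-level transitions are simultaneously enabled after firing the current prefix. The only difference is presentational: you argue finiteness directly in $\NN$ and spell out the prefix/place-coverage details, where the paper phrases the same bounded-parallelism step as a contradiction inside $\PP_{i+1}$.
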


\begin{proof}
Define the \emph{depth} of a transition $u$ in a causal net as
one more than the maximum of the depth of all transitions $t$ with $t F^+ u$.
Since the set of such transitions $t$ is finite, the depth of a
transition $u$ is a finite integer. Now, given a process $\PP$, the
approximation $\PP_i$ is obtained by restricting to those transitions
in $\PP$ of depth $\leq i$, together with all their pre- and
postplaces, and keeping the initial marking.
Clearly, these approximations form a chain, with limit $\PP$.
By induction on $i$ one shows that $\PP_i$ is finite.
For $\PP_0$ this is trivial, as it has no transitions.
Now assume $\PP_i$ is finite but $\PP_{i+1}$ is not.
Executing, in $\PP_{i+1}$, all transitions of $\PP_i$ one by one leads
to a marking of $\PP_{i+1}$ in which all remaining transitions of
$\PP_{i+1}$ are enabled. As these transitions cannot have common
preplaces, this violates the assumption that $\PP_{i+1}$ has bounded
parallelism.
\qed
\end{proof}

\section{A process-based characterisation of sp-bisimilarity}\label{sec:process-based}

This section presents an alternative characterisation of
sp-bisimilarity that will be instrumental in obtaining Theorems~\ref{thm:sp-caus}
and~\ref{thm:sp-fcb}, saying that $\bis{sp}$ is a finer semantic
equivalence than $\equiv_{\it caus}$ and $\approx_h$.
This characterisation could have been presented as the original
definition; however, the latter is instrumental in showing that
$\bis{sp}$ is coarser than $\approx_{pb}$ and $\equiv_{\it occ}$, and
implied by Olderog's strong bisimilarity.

\begin{definition}\rm
A \emph{process-based sp-bisimulation} between two nets $N_1$ and
$N_2$ is a set $\Ri$ of triples $(\rho_1,\NN,\rho_2)$ with $(\NN,\rho_i)$
a finite process of $N_i$, for $i\mathbin = 1,2$, such that
\vspace{-1ex}
\begin{itemize}
\item $\Ri$ contains a triple $(\rho_1,\NN,\rho_2)$ with $\NN$ a causal net containing no transitions,
\item if $(\rho_1,\NN,\rho_2) \mathbin\in \Ri$ and $(\NN',\rho'_i)$ with $i\mathbin\in\{1,2\}$
  is a fin.\ proc.\ of $N_i$ extending $(\NN,\rho_i)$
  then $N_j$ with $j\mathbin{:=}3\mathord-i$ has a process $(\NN',\rho'_j) \geq (\NN,\rho_j)$
  such that $(\rho'_1,\NN',\rho'_2)\in \Ri$.
\end{itemize}
\end{definition}

\begin{theorem}\rm\label{thm:process-based sp}
Two nets are sp-bisimilar iff there exists a process-based sp-bisimulation between them.
\end{theorem}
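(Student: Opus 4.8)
The proof rests on a single bridge between the two notions: to a triple $(\rho_1,\NN,\rho_2)$ with $(\NN,\rho_i)$ a finite process of $N_i$, associate the \emph{cut-linking} $l_{\NN}$ that assigns to each link $(s_1,s_2)$ the number of maximal places $s\in\NN^\circ$ with $\rho_1(s)=s_1$ and $\rho_2(s)=s_2$. Since causal nets have $|\postcond s|\le 1\ge|\precond s|$, the cut $\NN^\circ$ is a plain set, and by the folding conditions $\pi_i(l_{\NN})=\rho_i(\NN^\circ)$, which by \Prop{finite process marking} is a reachable marking of $N_i$. Conversely, a linking $l$ with $\pi_i(l)=M_i$ determines a transition-free process of both nets (one place per token of $l$, folded onto the two coordinates). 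The whole argument is to show that one firing step in either formalism corresponds to extending a process by a single transition in the other.

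For the ``if'' direction, given a process-based sp-bisimulation $\Ri$, put $\Bi:=\{l_{\NN}\mid(\rho_1,\NN,\rho_2)\in\Ri\}$. To verify the first clause, suppose $c\le l_{\NN}\in\Bi$ with $\pi_1(c)=\precond{t_1}$. Realise $c$ by a subset $C\subseteq\NN^\circ$ (one place per token of $c$), so $\rho_1(C)=\precond{t_1}$, and extend $\NN$ to $\NN'$ by a fresh transition $u$ with $\precond u=C$, $\rho'_1(u)=t_1$, and fresh postplaces folded by $\rho'_1$ onto $\postcond{t_1}$. This $(\NN',\rho'_1)$ extends $(\NN,\rho_1)$, so the extension clause of $\Ri$ yields the \emph{same} $\NN'$ folded into $N_2$ by some $\rho'_2\supseteq\rho_2$ with $(\rho'_1,\NN',\rho'_2)\in\Ri$. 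Reading off $t_2:=\rho'_2(u)$ gives $\ell(t_2)=\ell(t_1)$ and $\pi_2(c)=\rho_2(C)=\precond{t_2}$; letting $\bar c$ be the cut-links of the postplaces of $u$ gives $\pi_1(\bar c)=\postcond{t_1}$, $\pi_2(\bar c)=\postcond{t_2}$, and $l_{\NN}-c+\bar c=l_{\NN'}\in\Bi$. The second clause is symmetric (extend on side $2$), and the transition-free triple of $\Ri$ provides a linking of $M_1$ and $M_2$ with $A_1=\AA=A_2$, so $N_1\bis{sp}N_2$.

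For the ``only if'' direction, given an sp-bisimulation $\Bi$ containing a linking $l_0$ of the initial markings, put $\Ri:=\{(\rho_1,\NN,\rho_2)\mid(\NN,\rho_i)\text{ a finite process of }N_i,\ l_{\NN}\in\Bi\}$. The transition-free requirement is met by the process built from $l_0$. For the extension clause, suppose $(\NN',\rho'_1)$ extends $(\NN,\rho_1)$ with $l_{\NN}\in\Bi$. Linearise the added transitions as $u_1,\dots,u_m$ respecting the causal order of $\NN'$, so that each $u_{k+1}$ is enabled at the cut reached after $u_1,\dots,u_k$. Inductively define $\rho'_2$ on the $u_k$ while maintaining that the running cut-linking lies in $\Bi$: at step $k{+}1$ the sub-linking $c$ of the current cut selected by $\precond{u_{k+1}}$ satisfies $\pi_1(c)=\precond{\rho'_1(u_{k+1})}$ (by the side-$1$ folding), so the first clause of $\Bi$ supplies a transition $t_2$, which we take as $\rho'_2(u_{k+1})$, and a $\bar c$ whose second coordinate fixes $\rho'_2$ on the postplaces, the cut-linking passing to $l-c+\bar c\in\Bi$. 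After all $m$ steps, $\rho'_2$ folds all of $\NN'$ into $N_2$, extends $\rho_2$, and has $l_{\NN'}\in\Bi$, so $(\rho'_1,\NN',\rho'_2)\in\Ri$; the case $i=2$ is symmetric.

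The main obstacle is the extension clause of this last direction when more than one transition is added at once. Two points need care: one must fix a causal linearisation so that each added transition is genuinely enabled at the running cut (so that its preplace sub-linking is $\le$ the cut-linking and carries the correct first projection), and one must assign $\rho'_2$ to the \emph{fresh} postplaces in a way consistent with the \emph{already prescribed} $\rho'_1$. The latter works precisely because the sp-bisimulation delivers $\bar c$ with $\pi_1(\bar c)=\postcond{t_1}=\rho'_1(\postcond u)$, so the postplaces can be matched to the tokens of $\bar c$ along their common first coordinate. By contrast, the ``if'' direction needs only a single-transition extension and is routine once the cut-linking correspondence is in place.
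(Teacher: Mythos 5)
Your proof is correct and takes essentially the same approach as the paper's: both directions hinge on the correspondence between a triple $(\rho_1,\NN,\rho_2)$ and the linking formed by its final cut $\NN^\circ$, with the same one-transition extension construction in the ``if'' direction and the same use of the sp-bisimulation clauses to build $\rho'_2$ in the ``only if'' direction. The only difference is presentational: where the paper disposes of multi-transition extensions with a ``without loss of generality'' remark, you carry out that reduction explicitly via a causal linearisation of the added transitions, and you realise the sub-linking $c$ as an explicit subset $C\subseteq\NN^\circ$ of the cut where the paper does so implicitly.
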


\begin{proof}
Let $\Ri$ be a process-based sp-bisimulation between nets $N_1$ and $N_2$.
Define $\Bi := \{ \{(\rho_1(\sk),\rho_2(\sk)) \mid \sk \in \NN^\circ\} \mid (\rho_1,\NN,\rho_2) \in\Ri\}$.
Then $\Bi$ is an sp-bisimulation:
\begin{itemize}
\vspace{-1ex}
\item Let $c\leq l\mathbin\in \Bi$ and $\pi_1(c)=\precond{t_1}$ for $t_1\mathbin\in T_1$.
  Then $l= \{(\rho_1(\sk),\rho_2(\sk) \mid \sk \in \NN^\circ\}$ for some $(\rho_1,\NN,\rho_2) \in\Ri$.
  Extend $\NN$ to $\NN'$ by adding a fresh transition $\tk$ and fresh places $s_i$ for $
  s\mathbin\in S_1$ and $i\mathbin\in\IN$ with $F_1(t_1,s)>i$; let $\precond{\tk}=\{\sk\in\NN^\circ \mid
  \rho_1(\sk)\mathbin\in\precond{t_1}\}$ and $\postcond{\tk}=\{s_i\mid s\mathbin\in S_1 \wedge
  i\mathbin\in\IN \wedge F_1(t_1,s)>i\}$. Furthermore, extend $\rho_1$ to $\rho_1'$ by
  $\rho'_1(\tk)\mathbin{:=} t_1$ and $\rho'_1(s_i)\mathbin{:=}s$. Then
  $\precond\rho'_1(\tk)\mathbin=\precond{t_1}\mathbin=\rho'_1(\precond{\tk})$ and
  $\postcond{\rho'_1(\tk)}=\postcond{t_1}\mathbin=\rho'_1(\postcond{\tk})$,
  so $(\NN',\rho'_1)$ is a process of $N_1$, extending $(\NN,\rho_1)$.
  Since $\Ri$ is a process-based sp-bisimulation, $N_2$ has a process $(\NN',\rho'_2) \geq (\NN,\rho_2)$
  such that $(\rho'_1,\NN',\rho'_2)\in \Ri$. Take $t_2:=\rho'_2(\tk)$.
  Then $\ell_2(t_2)=\ell_\NNs(\tk)=\ell_1(t_1)$ and $c= \{(\rho_1(\sk),\rho_2(\sk) \mid \sk \in \precond{\tk}\}$,
  so $\pi_2(c)=\{\rho_2(\sk)\mid \sk \mathbin\in \precond{\tk}\} = \rho_2(\precond{\tk}) = \rho'_2(\precond{\tk}) =
  \precond{\rho'_2(\tk)} = \precond{t_2}$. Take $c':=\{(\rho'_1(\sk),\rho'_2(\sk))\mid \sk\in\postcond{\tk}\}$.
  Then $\pi_1(c')=\postcond{t_1}$, $\pi_2(c')=\postcond{t_2}$ and
  $l':=l-c+c' = \{(\rho'_1(\sk),\rho'_2(\sk)) \mid \sk \in \NN^\circ {-} \precond{\tk} {+} \postcond{\tk}\} 
  = \{(\rho'_1(\sk),\rho'_2(\sk)) \mid \sk \in {\NN'}^\circ\} \in\Bi$.
\item The other clause follows by symmetry.\
\vspace{-1ex}
\end{itemize}
Since $\Ri$ contains a triple $(\rho_1,\NN,\rho_2)$ with $\NN$ a causal net containing no transitions,
$\Bi$ contains a linking $l := \{(\rho_1(\sk),\rho_2(\sk)) \mid \sk \in \NN^\circ$ such that
$\pi_i(l) = \rho_i(\NN^\circ) = M_i$ for $i\mathbin=1,2$, where $M_i$ is the initial marking of $N_i$.
Since $(\NN,\rho_i)$ is a process of $N_i$, $N_i$ must have the the same type as $\NN$, for $i\mathbin=1,2$.
It follows that $N_1 \bis{sp} N_2$.

Now let $\Bi$ be an sp-bisimulation between nets $N_1$ and $N_2$.
Let $\Ri := \{(\rho_1,\NN,\rho_2) \mid (\NN,\rho_i) \mbox{~is a finite process of~}N_i~(i\mathbin=1,2)\mbox{~and~}
 \{(\rho_1(\sk),\rho_2(\sk)) \mid \sk \in \NN^\circ\} \in \Bi \}$.
Then $\Ri$ is a process-based sp-bisimulation.
\begin{itemize}
\vspace{-1ex}
\item $\Bi$ must contain a linking $l$ with $\pi_i(l) = M_i$ for $i\mathbin=1,2$,
  where $M_i$ is the initial marking of $N_i$; let $l=\{(s_1^k,s_2^k)\mid k\in K\}$.
  Let $\NN$ be a causal net with places $\sk^k$ for $k\mathbin\in K$ and no transitions,
  and define $\rho_i$ for $i\mathbin=1,2$ by $\rho_i(\sk^k)=s_i^k$ for $k\mathbin\in K$.
  Then $(\NN,\rho_i)$ is an initial process of $N_i$ ($i\mathbin=1,2$) and $(\rho_1,\NN,\rho_2)\in\Ri$.
\item Suppose $(\rho_1,\NN,\rho_2) \mathbin\in \Ri$ and $(\NN',\rho'_1)$
  is a finite process of $N_1$ extending $(\NN,\rho_1)$. (The case of a finite process of $N_2$
  extending $(\NN,\rho_1)$ will follow by symmetry.)
  Then $l:=\{(\rho_1(\sk),\rho_2(\sk)) \mid \sk \mathbin\in \NN^\circ\} \mathbin\in \Bi$.
  Without loss of generality, I may assume that $\NN'$ extends $\NN$ by just one transition, $\tk$.
  The definition of a causal net ensures that $\precond{\tk}\subseteq \NN^\circ$, and the definition
  of a process gives $\rho_1'(\precond{\tk}) \mathbin= \precond{t_1}$, where $t_1:=\rho_1'(\tk)$.
  Let $c:=\{(\rho_1(\sk),\rho_2(\sk)) \mid \sk \in \precond{\tk}\}$. Then $c\leq l$ and
  $\pi_1(c)\mathbin=\rho_1(\precond{\tk})\mathbin=\rho'_1(\precond{\tk})\mathbin=\precond{t_1}$.
  Since $\Bi$ is an sp-bisimulation, there
  are a transition $t_2$ with $\ell(t_2)\mathbin=\ell(t_1)$ and $\pi_2(c)\mathbin=\precond{t_2}$,
  and a linking $c'$ such that  $\pi_1(c')\mathbin=\postcond{t_1}$, $\pi_2(c')\mathbin=\postcond{t_2}$
  and $l'\mathbin{:=}l-c+c'\mathbin\in\Bi$.
  The definition of a process gives $\rho_1'(\postcond{\tk}) \mathbin= \postcond{t_1}$.
  This makes it possible to extend $\rho_2$ to $\rho'_2$ so that $\rho_2'(\tk):=t_2$,
  $\rho'_2(\postcond{\tk})=\postcond{t_2}$ and
  $c' = \{(\rho'_1(\sk),\rho'_2(\sk)) \mid \sk \in \postcond{\tk}\}$.
  Moreover, $\rho'_2(\precond{\tk})=\rho_2(\precond{\tk})=\pi_2(c)=\precond{t_2}$.\vspace{-2pt}
  Thus $(\NN',\rho_2')$ is a finite process of $N_2$ extending $(\NN,\rho_2)$.
  Furthermore, $\{(\rho'_1(\sk),\rho'_2(\sk)) \mid \sk \in {\NN'}^\circ\} =
  \{(\rho'_1(\sk),\rho'_2(\sk)) \mid \sk \in {\NN}^\circ-\precond{\tk}+\postcond{\tk}\} =
  l-c+c' \in \Bi$. Hence $(\rho'_1,\NN',\rho'_2)\in \Ri$.
\qed
\vspace{-1ex}
\end{itemize}
\end{proof}

\section{Relating sp-bisimilarity to other semantic equivalences}\label{sec:relating}

In this section I place sp-bisimilarity in the spectrum of existing
semantic equivalences for nets, as indicated in \Fig{spectrum}.

\subsection{Place bisimilarity}

The notion of a place bisimulation, defined in \cite{ABS91}, can be
reformulated as follows.

\begin{definition}\rm
A \emph{place bisimulation} is a structure preserving bisimulation of
the form \plat{$\overline{\cal B}$} (where \plat{$\overline{\cal B}$}
is defined in \Sect{sp-bis}).
Two nets $N_i=(S_i,T_i,F_i,M_i,A_i,\ell_i)$ ($i\mathbin=1,2$)
are \emph{strongly bisimilar}, notation $N_1 \approx_{pb} N_2$, if $A_1\mathbin=A_2$ and
there is a linking $l$ in a place bisimulation with $M_1\mathbin=\pi_1(l)$ and $M_2 \mathbin=\pi_2(l)$.
\end{definition}
It follows that $\approx_{pb}$ is finer than $\bis{sp}$, in the sense
that place bisimilarity of two nets implies their structure preserving bisimilarity.

\subsection{Occurrence net equivalence}\label{ssec:occ}

Definitions of the \emph{unfolding} for various classes of Petri nets into an \emph{occurrence net}
appear in \cite{NPW81,Wi84,Wi87a,GV87,En91,MMS97,vG05c}---I will not repeat them here. In all these cases, the
definition directly implies that if an occurrence net $\NN$ results from unfolding a net $N$
then $\NN$ is safe and there exists a folding of $\NN$ into $N$ (recall \Def{process}) satisfying
  \begin{itemize}
  \vspace{-1ex}
    \item if $\MM$ is a reachable marking of $\NN$, and $t\mathbin\in T$ is a transition of $N$ with
      $\precond{t} \leq \rho(\MM)$ then there is a $\tk\mathbin\in\TT$ with $\rho(\tk)=t$.
  \end{itemize}

\begin{proposition}\rm\label{prop:unfolding}
If such a folding from $\NN$ to $N$ exists, then $\NN\bis{sp}N$.
\end{proposition}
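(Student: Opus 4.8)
The plan is to build a structure preserving bisimulation directly from the folding $\rho$, using the original definition of $\bis{sp}$ (the form that, as remarked before \Thm{process-based sp}, is the one suited to relating $\bis{sp}$ with $\equiv_{occ}$). Writing $N_1 := \NN$ and $N_2 := N$, each place $\hat s$ of $\NN$ carries a canonical link $(\hat s,\rho(\hat s))$; I would set ${\cal B} := \{(\hat s,\rho(\hat s)) \mid \hat s \in \SS\}$ and take
\[
\Bi := \{\, l \mid l \subseteq {\cal B} \text{ and } \pi_1(l) \text{ is a reachable marking of } \NN \,\}.
\]
Since $\rho$ is a folding, $\AA = A$, so the two nets have equal type, and $\rho$ maps $M_0^{\NN}$ onto $M_0^{N}$; hence $l_0 := \{(\hat s,\rho(\hat s)) \mid \hat s \in M_0^{\NN}\}$ lies in $\Bi$ with $\pi_1(l_0)=M_0^{\NN}$ and $\pi_2(l_0)=\rho(M_0^{\NN})=M_0^{N}$. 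Thus, once $\Bi$ is shown to be an sp-bisimulation, $\NN \bis{sp} N$ follows. Note every $l\in\Bi$ is a plain set of links, because $\pi_1(l)$ is a reachable marking of the safe net $\NN$ and $l\subseteq{\cal B}$ is determined by its first coordinate.

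First I would treat the clause where a transition of $\NN$ fires. Suppose $c \le l \in \Bi$ and $\pi_1(c) = \precond{\hat t}$ for some $\hat t \in \TT$, and put $t := \rho(\hat t)$. As $\rho$ is a folding, $\ell(t)=\ell_\NNs(\hat t)$ and $\rho(\precond{\hat t})=\precond{t}$, so $c\subseteq{\cal B}$ gives $\pi_2(c)=\rho(\pi_1(c))=\precond{t}$; taking $\bar c := \{(\hat s,\rho(\hat s)) \mid \hat s \in \postcond{\hat t}\}$ yields $\pi_1(\bar c)=\postcond{\hat t}$ and $\pi_2(\bar c)=\rho(\postcond{\hat t})=\postcond{t}$. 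Finally $\bar l := l - c + \bar c \subseteq {\cal B}$, and $\pi_1(\bar l) = \pi_1(l) - \precond{\hat t} + \postcond{\hat t}$ arises from the reachable marking $\pi_1(l)$ by firing $\hat t$, hence is reachable, so $\bar l \in \Bi$. This direction uses nothing beyond the folding conditions.

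The opposite clause is where the argument has content. Suppose $c \le l \in \Bi$ and $\pi_2(c)=\precond{t}$ for some $t \in T$; let $\MM := \pi_1(l)$ and $C := \pi_1(c) \le \MM$, so that $\rho(C)=\pi_2(c)=\precond{t}$ and therefore $\precond{t}=\rho(C)\le\rho(\MM)$. The key step will be to produce a transition $\hat t \in \TT$ \emph{enabled at $\MM$ through exactly the tokens $C$}, i.e.\ with $\precond{\hat t}=C$ and $\rho(\hat t)=t$. Granting such $\hat t$, the rest mirrors the previous paragraph: $\ell_\NNs(\hat t)=\ell(t)$ and $\pi_1(c)=C=\precond{\hat t}$, and with $\bar c := \{(\hat s,\rho(\hat s)) \mid \hat s \in \postcond{\hat t}\}$ one gets $\pi_1(\bar c)=\postcond{\hat t}$, $\pi_2(\bar c)=\postcond{t}$, and $\bar l := l-c+\bar c \in \Bi$ by the same reachability argument.

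The existence of this $\hat t$ is the main obstacle. Applying the displayed completeness property to the reachable marking $\MM$ and the transition $t$ (legitimate since $\precond{t}\le\rho(\MM)$) yields \emph{some} transition of $\NN$ mapping to $t$; what must be argued in addition is that it can be chosen to consume precisely the tokens $C$ and to be enabled at $\MM$. This is exactly the content of completeness of the unfolding in the individual-token interpretation: the places of $C$ form a co-set, since they are simultaneously marked in the reachable marking $\MM$ of the safe, acyclic net $\NN$, and for every co-set whose $\rho$-image equals a transition's preset the occurrence net contains a matching transition. It is here, and only here, that the occurrence-net structure—safeness, acyclicity, and the backward-deterministic conflict structure keeping distinct tokens distinguishable—is indispensable; the folding conditions by themselves do not guarantee that $\NN$ offers enough transitions to match $N$. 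Accordingly, I would isolate the statement ``for every reachable $\MM$, every $C \le \MM$ and every $t\in T$ with $\rho(C)=\precond{t}$ there is a $\hat t\in\TT$ with $\precond{\hat t}=C$ and $\rho(\hat t)=t$'' as a lemma, derive it from the defining construction of the unfolding (of which the assumed bullet is the reachable-marking shadow), and then conclude that $\Bi$ is an sp-bisimulation and hence $\NN\bis{sp}N$.
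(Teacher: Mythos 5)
Your candidate bisimulation is exactly the paper's: its entire proof consists of the single line defining
$\Bi:=\{\{(\sk,\rho(\sk)) \mid \sk\mathbin\in \MM\}\mid \MM \mbox{ a reachable marking of } \NN\}$---which coincides with your
$\{l\subseteq{\cal B}\mid \pi_1(l)\mbox{ reachable}\}$, since each $l\subseteq{\cal B}$ is determined by $\pi_1(l)$---followed by the words ``Checking this is trivial.''
Your verification of the clause for transitions of $\NN$ is that routine half. Where you genuinely depart from the paper is the clause for transitions of $N$: the paper's ``trivial'' silently assumes that for each sub-linking $c$ with $\pi_2(c)=\precond{t}$ one can produce a $\tk$ with $\rho(\tk)=t$ \emph{and} $\precond{\tk}=\pi_1(c)$, whereas the displayed bullet only promises \emph{some} $\tk$ with $\rho(\tk)=t$. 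You isolate the missing exact-preset completeness property as a lemma to be drawn from the unfolding construction. That is not excessive caution: the proposition really is not derivable from the bullet as stated.

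To see this, let $N$ have places $s,r$, initial marking with two tokens on $s$ and one on $r$, and a single $a$-labelled transition $t$ with $\precond{t}=\{s,r\}$ and $\postcond{t}=\emptyset$; let $\NN$ consist of initially marked places $\sk_1,\sk_2,\sk_3$ with $\rho(\sk_1)=\rho(\sk_2)=s$, $\rho(\sk_3)=r$, and a single transition $\tk$ with $\precond{\tk}=\{\sk_1,\sk_3\}$ and $\rho(\tk)=t$. Then $\NN$ is a causal (hence occurrence) net, $\rho$ is a folding, and the bullet holds at both reachable markings of $\NN$---even when strengthened by additionally demanding $\precond{\tk}\leq\MM$---yet $\NN\not\bis{sp}N$: each of the three linkings of the initial markings contains a sub-linking $c$ with $\pi_2(c)=\{s,r\}$ but $\pi_1(c)\neq\{\sk_1,\sk_3\}$, and such a $c$ cannot be matched. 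So a property beyond the stated bullet---precisely your lemma, which genuine unfoldings do satisfy---is indispensable, and your proof establishes a (correctly) amended proposition whose hypothesis is that $\NN$ \emph{is} an unfolding of $N$; that is the form in which the proposition is actually used in \SSect{occ}. One refinement to your write-up: when $N$ is safe the gap closes without any appeal to the unfolding construction, provided the bullet is read as also requiring $\precond{\tk}\leq\MM$; for then $\rho(\MM)$ is a reachable, hence plain-set, marking of $N$, so $\rho$ is injective on $\MM$, and $\precond{\tk}\leq\MM$ together with $\rho(\precond{\tk})=\precond{t}=\rho(\pi_1(c))$ already forces $\precond{\tk}=\pi_1(c)$. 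Your lemma is therefore needed exactly for unsafe $N$, a case the paper does intend to cover, as it invokes unfoldings of unsafe nets under the individual token interpretation.
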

\begin{proof}
The set of linkings $\Bi:=\{\{(\sk,\rho(\sk)) \mid \sk\mathbin\in \MM\}\mid \MM \mbox{~a reachable
  marking of~}\NN\}$ is an sp-bisimulation between $\NN$ and $N$. Checking this is trivial.
\qed
\end{proof}
Two nets $N_1$ and $N_2$ are \emph{occurrence net equivalent} \cite{GV87} if they have isomorphic
unfoldings. Since isomorphic nets are strongly bisimilar \cite{Old91} and hence structure preserving
bisimilar, it follows that occurrence net equivalence between nets is finer than structure
preserving bisimilarity.

In \cite{ABS91} it is pointed out that the strong bisimilarity of Olderog ``is not compatible with
unfoldings'': they show two nets that have isomorphic unfoldings, yet are not strongly bisimilar.
However, when the net $N$ is safe, the sp-bisimulation displayed in the proof of \Prop{unfolding}
is in fact a strong bisimulation, showing that each net is strongly bisimilar with its unfolding.
This is compatible with the observation of \cite{ABS91} because of the non-transitivity of strong bisimilarity.
\vfill

\subsection{Causal equivalence}

Causal equivalence is coarser than structure preserving bisimilarity.

\begin{theorem}\rm\label{thm:sp-caus}
If $N_1\bis{sp}N_2$ for nets $N_1$ and $N_2$, then $N_1\equiv_{\it caus}N_2$.
\end{theorem}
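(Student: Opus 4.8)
The plan is to prove the two inclusions of causal nets separately; since $\bis{sp}$ is symmetric it suffices to show that every causal net of $N_1$ is also a causal net of $N_2$. The key tool will be the process-based characterisation of \Thm{process-based sp}: because $N_1\bis{sp}N_2$, there is a process-based sp-bisimulation $\Ri$ between them, and a triple $(\rho_1,\NN,\rho_2)\in\Ri$ carries a \emph{single} causal net $\NN$ equipped with foldings into both $N_1$ and $N_2$, so it simultaneously witnesses that $\NN$ is a causal net of $N_1$ and of $N_2$. The whole argument hinges on the fact that, in the second clause defining a process-based sp-bisimulation, the matching process produced on the opposite side carries the very same underlying causal net $\NN'$, so the two sides can be kept aligned over one shared causal net.

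So let $(\NN,\rho_1)$ be an arbitrary process of $N_1$. By \Prop{limit2} it is the limit of a chain of finite approximations $(\NN^i,\rho_1^i)$ with $\NN^i\leq\NN^{i+1}$, each $\NN^i$ finite, and $\bigcup_i\NN^i=\NN$. I would construct, by induction on $i$, a chain of triples $(\rho_1^i,\NN^i,\rho_2^i)\in\Ri$ in which the foldings $\rho_2^i$ into $N_2$ form an increasing chain of processes over these \emph{same} causal nets $\NN^i$. For the base case, the initial triple of $\Ri$ supplies an initial process of $N_1$; since any two initial processes of $N_1$ are isomorphic (both folding bijectively, with matching multiplicities, onto $M_1$), I may---working up to isomorphism of causal nets, which is all that $\equiv_{\it caus}$ distinguishes---assume the initial prefix $\NN^0$ of $\NN$ together with $\rho_1^0$ coincides with that triple, and take $\rho_2^0$ from it. For the induction step, $(\NN^{i+1},\rho_1^{i+1})$ is a finite process of $N_1$ extending $(\NN^i,\rho_1^i)$, so applying the extension clause on the $N_1$-side yields a process $(\NN^{i+1},\rho_2^{i+1})$ of $N_2$ extending $(\NN^i,\rho_2^i)$ with $(\rho_1^{i+1},\NN^{i+1},\rho_2^{i+1})\in\Ri$; the fact that this is a process extension guarantees that $\rho_2^{i+1}$ restricts to $\rho_2^i$ on $\NN^i$, so the $\rho_2^i$ genuinely form a chain.

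Finally I would pass to the limit: by \Prop{limit1} the chain of processes $(\NN^i,\rho_2^i)$ has a limit process $(\NN,\rho_2)$ of $N_2$, whose causal net is $\bigcup_i\NN^i=\NN$ and whose folding is $\bigcup_i\rho_2^i$. Hence $\NN$ is a causal net of $N_2$, which establishes the inclusion, and symmetry then yields $N_1\equiv_{\it caus}N_2$. The main obstacle I anticipate is not any single calculation but the coordination of three facts: that the extension clause keeps the causal net fixed across the two sides, so that the $\NN^i$ can be shared; that the approximations $\NN^i$ are genuinely finite---which relies on bounded parallelism through \Prop{limit2}---so that the triples legitimately live in $\Ri$; and that the $N_2$-side process extensions are compatible enough to admit the limit of \Prop{limit1}. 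A minor wrinkle is the base-case alignment of initial processes, which is harmless precisely because $\equiv_{\it caus}$ identifies isomorphic causal nets.
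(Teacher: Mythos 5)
Your proposal is correct and follows essentially the same route as the paper's own proof: invoke Theorem~\ref{thm:process-based sp} to get a process-based sp-bisimulation, align the initial prefix of the given causal net with the initial triple (the paper phrases this as a renaming WLOG, you as working up to isomorphism---these amount to the same thing), build the matching chain of $N_2$-processes over the \emph{same} finite causal nets by induction using the extension clause, and pass to the limit via Propositions~\ref{prop:limit2} and~\ref{prop:limit1}. The three coordination points you flag (shared causal net across the two sides, finiteness of the approximations, compatibility of the chain) are exactly the ingredients the paper's proof relies on.
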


\begin{proof}
By \Thm{process-based sp} there exists a process-based sp-bisimulation $\Ri$ between $N_1$ and $N_2$.
$\Ri$ must contain a triple $(\rho^0_1,\NN^0,\rho^0_2)$ with $\NN^0$ a causal net containing no transitions.
So \plat{$(\NN^0\!,\rho^0_1)$} and \plat{$(\NN^0\!,\rho^0_2)$} are initial processes of $N_1$ and $N_2$, respectively. 
The net \plat{$\NN^0$} contains isolated places only, as many as the size of the initial markings of $N_1$
and $N_2$.

Let $\NN$ be a causal net of $N_1$. I have to prove that $\NN$ is also a causal net of $N_2$.
Without loss of generality I may assume that $\NN^0$ is a prefix of $\NN$, as being a causal net of
a given Petri net is invariant under renaming of its places and transitions.

So $N_1$ has a process $\PP_1=(\NN,\rho_1)$.
By \Prop{limit2}, $\PP_1$ is the limit of a chain $\PP^0_1 \mathbin\leq \PP^1_1 \mathbin\leq \PP^2_1
\mathbin\leq \dots$ of finite processes of $N_1$. Moreover, for $\PP_1^0$ one can take $(\NN^0\!,\rho^0_1)$.
Let $\PP^i_1=(\NN^i\!,\rho_1^i)$ for $i\mathbin\in\IN$.
By induction on $i\mathbin\in\IN$, it now follows from the properties of a process-based sp-bisimulation that
$N_2$ has processes $\PP_2^{i+1}\mathbin=(\NN^{i+1}\!,\rho_2^{i+1})$, such that
$(\NN^{i}\!,\rho_2^{i})\mathbin\leq(\NN^{i+1}\!,\rho_2^{i+1})$
and $(\rho^{i+1}_1\!,\NN^{i+1}\!,\rho^{i+1}_2)\mathbin\in\Ri$.
Using \Prop{limit1}, the limit $\PP_2=(\NN,\rho_2)$ of this chain is a process of $N_2$,
contributing the causal net $\NN$.
\qed
\end{proof}

\subsection{History preserving bisimilarity}

The notion of \emph{history preserving bisimilarity} was originally proposed in \cite{RT88} under the
name \emph{behavior structure bisimilarity}, studied on event structures in \cite{GG01},
and first defined on Petri nets, under to the individual token interpretation, in \cite{BDKP91},
under the name \emph{fully concurrent bisimulation} equivalence.

\begin{definition}\rm\cite{BDKP91}
Let $\NN_i= (\SS_i, \TT_i, \FF_i, {\MM_0}_i, \AA_i, \ell_i)$ ($i\mathbin=1,2$) be two causal nets.
An \emph{order-isomorphism} between them is a bijection $\beta:\TT_1\rightarrow\TT_2$ such that
$\AA_1=\AA_2$,
$\ell_2(\beta(t))=\ell_1(t)$ for all $t\mathbin\in \TT_1$,
and
$t \mathrel{\FF_1^+} u$ iff $\beta(t) \mathrel{\FF_2^+} \beta(u)$ for all $t,u\in T_1$.
\end{definition}

\begin{definition}\rm\cite{BDKP91}
A \emph{fully concurrent bisimulation} between two nets $N_1$ and
$N_2$ is a set $\Ri$ of triples $((\rho_1,\NN_1),\beta,(\NN_2,\rho_2))$ with $(\NN_i,\rho_i)$
a finite process of $N_i$, for $i\mathbin = 1,2$, and $\beta$ an order-isomorphism between $\NN_1$
and $\NN_2$, such that
\vspace{-1ex}
\begin{itemize}
\item $\Ri$ contains a triple $((\rho_1,\NN_1),\beta,(\NN_2,\rho_2))$ with $\NN_1$ containing no transitions,
\item if $(\PP_1,\beta,\PP_2) \mathbin\in \Ri$ and $\PP'_i$ with $i\mathbin\in\{1,2\}$
  is a fin.\ proc.\ of $N_i$ extending $\PP_i$,
  then $N_j$ with $j\mathbin{:=}3\mathord-i$ has a process $\PP'_j \geq \PP_j$
  such that $(\PP'_1,\beta',\PP'_2)\in \Ri$ for some $\beta'\supseteq\beta$.%
\vspace{-1ex}
\end{itemize}
Write $N_1 \approx_{h} N_2$ or $N_1 \approx_{\it fcb} N_2$ iff such a bisimulation exists.
\end{definition}
It follows immediately from the process-based characterisation of sp-bisimilarity
in \Sect{process-based} that fully concurrent bisimilarity (or history preserving bisimilarity based
on the individual token interpretation of nets) is coarser than sp-bisimilarity.

\begin{theorem}\rm\label{thm:sp-fcb}
If $N_1\bis{sp}N_2$ for nets $N_1$ and $N_2$, then $N_1\approx_{\it fcb}N_2$.
\end{theorem}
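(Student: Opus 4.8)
The plan is to reduce everything to the process-based characterisation of sp-bisimilarity (\Thm{process-based sp}) and then observe that a process-based sp-bisimulation is, up to a trivial reformatting, already a fully concurrent bisimulation. The decisive structural feature is that a process-based sp-bisimulation pairs two processes $(\NN,\rho_1)$ and $(\NN,\rho_2)$ built on \emph{one and the same} causal net $\NN$; hence the order-isomorphism demanded by a fully concurrent bisimulation can be taken to be the identity on the transitions of $\NN$ throughout.

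Concretely, first I would use \Thm{process-based sp} to turn the hypothesis $N_1\bis{sp}N_2$ into a process-based sp-bisimulation $\Ri$ between $N_1$ and $N_2$. I would then set
\[
\Ri' := \{\,((\rho_1,\NN),\mathit{id}_\TT,(\NN,\rho_2)) \mid (\rho_1,\NN,\rho_2)\in\Ri\,\},
\]
where $\TT$ is the transition set of $\NN$ and $\mathit{id}_\TT$ the identity on $\TT$, and claim that $\Ri'$ is a fully concurrent bisimulation. Because the two copies of $\NN$ are literally the same causal net, with the same flow relation $\FF$ and labelling $\ell_\NNs$, the map $\mathit{id}_\TT$ satisfies the three clauses of an order-isomorphism vacuously, so each member of $\Ri'$ has the correct shape $((\rho_1,\NN_1),\beta,(\NN_2,\rho_2))$.

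It then remains to verify the two closure conditions of a fully concurrent bisimulation. The first is immediate: the triple of $\Ri$ whose causal net $\NN$ has no transitions yields a triple of $\Ri'$ with $\NN_1=\NN$ transition-free. For the second, suppose $((\rho_1,\NN),\mathit{id}_\TT,(\NN,\rho_2))\in\Ri'$ and a finite process $(\NN',\rho'_i)$ of $N_i$ extends $(\NN,\rho_i)$. The defining clause of the process-based sp-bisimulation $\Ri$ produces a process $(\NN',\rho'_j)\geq(\NN,\rho_j)$ of $N_j$ (with $j=3-i$) on the \emph{same} extended causal net $\NN'$, such that $(\rho'_1,\NN',\rho'_2)\in\Ri$; I would then take $\beta':=\mathit{id}_{\TT'}$ for $\TT'$ the transitions of $\NN'$, which restricts to $\mathit{id}_\TT=\beta$ on $\TT$, so $\beta'\supseteq\beta$ and $((\rho'_1,\NN'),\beta',(\NN',\rho'_2))\in\Ri'$.

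I expect no genuine obstacle here; the entire content is carried by \Thm{process-based sp}. The one point to state carefully is exactly the one that makes the argument trivial: sp-bisimilarity, in its process-based guise, matches runs of $N_1$ and $N_2$ along a \emph{common} causal net, so that the additional data of a fully concurrent bisimulation---an order-isomorphism between two a priori distinct causal nets---collapses to the identity and never has to be constructed.
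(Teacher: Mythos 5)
Your proposal is correct and takes exactly the route of the paper, whose entire proof is the observation that a process-based sp-bisimulation is simply a fully concurrent bisimulation with the extra requirement that $\beta$ be the identity; your construction of $\Ri'$ with $\beta := \mathit{id}_\TT$ just spells this out, including the easy check that the identity is an order-isomorphism and that the closure conditions transfer. Nothing is missing.
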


\begin{proof}
A process-based sp-bisimulation is simply a fully concurrent bisimulation with the extra requirement
that $\beta$ must be the identity relation.
\qed
\end{proof}

\section{Inevitability for non-reactive systems}\label{sec:inevitability}

A run or execution of a system modelled as Petri net $N$ can be formalised as a path of $N$
(defined in \Sect{nets}) or a process of $N$ (defined in \Sect{processes}). A path or process
representing a complete run of the represented system---one that is not just
the first part of a larger run---is sometimes called a \emph{complete} path or process.
Once a formal definition of a complete path or process is agreed upon, an action $b$ is
\emph{inevitable} in a net $N$ iff each complete path (or each complete process) of $N$ contains a
transition labelled $b$. In case completeness is defined both for paths and processes, the
definitions ought to be such that they give rise to the same concept of inevitability.

The definition of which paths or processes count as being complete depends on two factors:
(1) whether actions that a net can perform by firing a transition are fully under control of the
represented system itself or (also) of the environment in which it will be running, and
(2) what type of progress or fairness assumption one postulates to guarantee that actions that are
due to occur will actually happen. In order to address (2) first, in this section I deal only with
nets in which all activity is fully under control of the represented system. In \Sect{reactive}
I will generalise the conclusions to reactive systems.

When making no progress or fairness assumptions, a system always has
the option not to progress further, and all paths and all processes are complete---in
particular initial paths and processes, containing no transitions. Consequently, no action is
inevitable in any net, so each semantic equivalence respects inevitability.

When assuming progress, but not justness or fairness, any infinite path or process is complete, and
a finite path or process is complete iff it is maximal, in the sense that it has no proper extension.
In this setting, interleaving bisimilarity, and hence also each finer equivalence, respects
inevitability. The argument is that an interleaving bisimulation induces a relation between the
paths of two related nets $N_1$ and $N_2$, such that
\begin{itemize}
\vspace{-1ex}
\item each path of $N_1$ is related to a path of $N_2$ and vice versa,
\item if two paths are related, either both or neither contain a transition labelled $b$,
\item if two paths are related, either both or neither of them are complete.
\vspace{-1ex}
\end{itemize}

In the rest of this paper I will assume justness, and hence also progress, but not (weak or strong)
fairness, as explained in \SSect{inevitability}. In this setting a process is \emph{just} or \emph{complete}%
\footnote{The term ``complete'' is meant to vary with the choice of a progress or fairness
  assumption; when assuming only justness, it is set to the value ``just''.}
iff it is maximal, in the sense that it has no proper extension.
\vspace{-2ex}

\noindent
\begin{wrapfigure}[4]{r}{0.385\textwidth}
\input{inevitable}
\centerline{\box\graph}
\end{wrapfigure}
\begin{example}
The net depicted on the right has a complete process performing the action $a$ infinitely often,
but never the action $b$. It consumes each token that is initially present or stems from any
firing of the transition $t^a$. Hence $b$ is not inevitable. This fits with the intuition that if a
transition occurrence is perpetually enabled it will eventually happen---but only when strictly
adhering to the individual token interpretation of nets.  Under this interpretation, each firing of
$t^b$ using a particular token is a different transition occurrence. It is possible to
schedule an infinite sequence of $a$s in such a way that none such transition occurrence is
perpetually enabled from some point onwards.

When adhering to the collective token interpretation of nets, the
action $b$ could be considered inevitable, as
in any execution scheduling $a$s only, transition $t^b$ is perpetually enabled.
Since my structure preserving bisimulation fits within the individual token interpretation,
here one either should adhere to that interpretation, or restrict attention to safe nets, where
there is no difference between both interpretations.
\end{example}

\section{History preserving bisimilarity does not respect inevitability}\label{sec:counterexample}

\begin{figure}
\vspace{-4ex}
\input{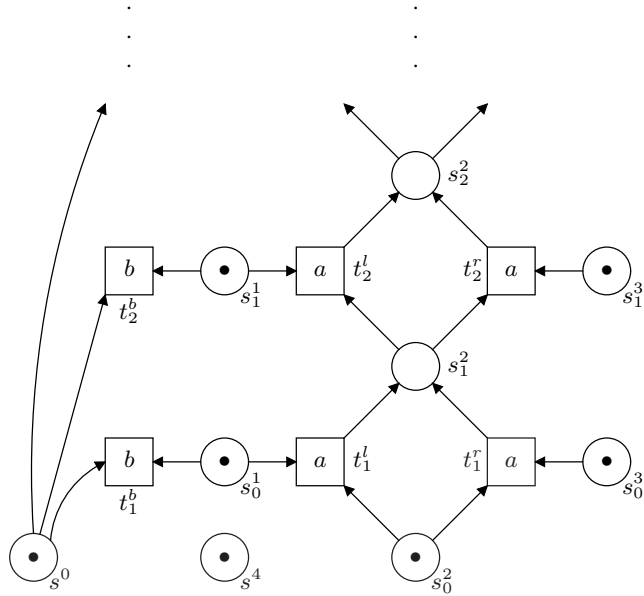}
\centerline{\box\graph}
\caption{A net in which the action $b$ is not inevitable}\label{fig:evitable}
\end{figure}

\noindent
Consider the safe net $N_1$ depicted in \Fig{evitable}, and the net $N_2$
obtained from $N_1$ by exchanging for any transition $t^b_i$
($i\mathord>0$) the preplace $s^1_{i-1}$ for $s^4$.
The net $N_2$ performs in parallel an infinite sequence of $a$-transitions
(where at each step $i{>}0$ there is a choice between $t^l_i$ and $t^r_i$)
and a single $b$-transition (where there is a choice between $t^b_i$ for $i{>}0$).
In $N_2$ the action $b$ is inevitable. In $N_1$, on the other hand, $b$ is not inevitable, for the
run of $N_1$ in which $t^l_i$ is chosen over $t^r_i$ for all $i{>}0$ is complete, and cannot be
extended which a $b$-transition. Thus, each semantic equivalence that equates $N_1$ and $N_2$ fails
to respect inevitability.

\begin{theorem}\rm
Causal equivalence does not respect inevitability.
\end{theorem}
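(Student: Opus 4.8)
The plan is to reduce the theorem to a single equivalence. The discussion preceding the statement already shows that the action $b$ is inevitable in $N_2$ but not in $N_1$, so it suffices to prove that $\equiv_{\it caus}$ \emph{equates} the two nets, i.e.\ that $N_1 \equiv_{\it caus} N_2$. By definition this amounts to showing that $N_1$ and $N_2$ have the same causal nets, up to isomorphism (which, as noted in the proof of \Thm{sp-caus}, is harmless since being a causal net is invariant under renaming). I would do this by characterising, once and for all, which causal nets occur as first components of processes of these nets, and then observing that the characterisation yields literally the same set for $N_1$ and for $N_2$.

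For the characterisation I would read off three structural facts valid in both nets. First, each $a$-transition $t^l_i$ or $t^r_i$ consumes the spine place $s^2_{i-1}$ and produces $s^2_i$, and $s^2_i$ is produced only by the $a$-transitions of level $i$; hence in any process the $a$-transitions are linearly ordered by $\FF^+$, forming a single chain. Second, the single token in $s^0$ is a preplace of every $t^b_i$, so at most one $b$-transition fires in any process. Third, and crucially, in both nets every preplace of $t^b_i$ sits in a place that is initially marked and produced by no transition ($s^0$ and $s^1_{i-1}$ in $N_1$; $s^0$ and $s^4$ in $N_2$), so a $b$-transition is always causally minimal, i.e.\ concurrent with the whole $a$-chain. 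Consequently the causal net of any process of either net is an $a$-chain of some length $n\in\IN\cup\{\infty\}$, optionally together with one minimal $b$-transition concurrent to it, and every such shape is a candidate (the infinite chains being obtained as limits of their finite prefixes, using \Prop{limit1}).

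It then remains to realise every candidate in both nets, which is where they differ and where the real work lies. The sole structural difference is the second preplace of $t^b_i$: in $N_1$ it is $s^1_{i-1}$, shared with $t^l_i$ so that $t^b_i$ conflicts with $t^l_i$, whereas in $N_2$ it is the independent place $s^4$, so that $t^b_i$ is concurrent with every spine transition. To see that each causal net of $N_2$ is one of $N_1$, I would take a shape consisting of an $a$-chain with a concurrent $b$ and realise it in $N_1$ by firing $t^b_1$ on the initial tokens $s^0,s^1_0$ while settling the level-$1$ choice with the right move $t^r_1$, whose preplaces $s^2_0,s^3_0$ are disjoint from those of $t^b_1$; the spine may then proceed with arbitrary left/right choices. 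This dodges the $t^l_1$/$t^b_1$ conflict and, since $t^l_1$ and $t^r_1$ are equally labelled $a$ and advance the spine identically, produces exactly the intended causal net. The reverse inclusion is immediate: any causal net of $N_1$ is realised in $N_2$ by firing $t^b_1$ on $s^0,s^4$, which is concurrent with the spine.

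The step I expect to be the main obstacle is precisely this realisation argument: verifying that the differing conflict structures neither add nor remove causal nets. One must confirm that in $N_1$ the competition between $t^b_i$ and the left move can always be avoided by the same-labelled right move at that level, and that in $N_2$ the additional concurrency of $t^b_i$ with $t^l_i$ creates nothing new --- it cannot, since a fired $b$ still appears only as a minimal transition beside the $a$-chain and a second $b$ is barred by $s^0$. With both inclusions established, $N_1$ and $N_2$ have the same causal nets, so $N_1\equiv_{\it caus}N_2$; combined with the inevitability of $b$ in $N_2$ but not in $N_1$, this shows that $\equiv_{\it caus}$ does not respect inevitability.
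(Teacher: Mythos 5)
Your proposal is correct and takes essentially the same route as the paper: the paper's own proof also reduces the theorem to showing $N_1 \equiv_{\it caus} N_2$ for the two nets of \Fig{evitable}, asserting that both have exactly the causal nets you characterise (an $a$-chain, finite or infinite, optionally accompanied by one causally minimal $b$-transition concurrent with it). Your explicit realisation argument---dodging the $t^l_k$/$t^b_k$ conflict by using the equally labelled $t^r_k$, and noting that left/right choices leave the causal-net shape unchanged---simply spells out what the paper delegates to its figure and to the reader.
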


\begin{proof}
$N_1 \equiv_{\it caus} N_2$, because both nets have the same causal nets.
One of these nets is depicted in \Fig{causal nets}; the others are obtained by omitting the
$b$-transition, and/or omitting all but a finite prefix of the $a$-transitions.
\qed
\end{proof}

\begin{figure}
\input{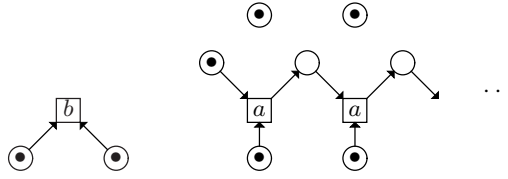}
\centerline{\box\graph}
\caption{A causal net of $N_1$ and $N_2$}\label{fig:causal nets}
\end{figure}

\begin{theorem}\rm
History preserving bisimilarity does not respect inevitability.
\end{theorem}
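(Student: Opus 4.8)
The plan is to \emph{strengthen} the previous theorem: rather than merely $N_1 \equiv_{\it caus} N_2$, I will show that the two nets are history preserving bisimilar, $N_1 \approx_{\it fcb} N_2$. Since $b$ is inevitable in $N_2$ but not in $N_1$, this immediately yields that $\approx_h$ does not respect inevitability. Note that $N_1 \equiv_{\it caus} N_2$ is \emph{not} sufficient for this purpose, as $\approx_h$ is strictly finer than $\equiv_{\it caus}$; the content of the proof is therefore to supply the branching-time information that $\equiv_{\it caus}$ ignores, by exhibiting an explicit fully concurrent bisimulation.

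First I would describe the common causal structure of the runs of both nets. In either net the $a$-transitions fired along any process form a single linear causal chain (each $t^l_i$ or $t^r_i$ consumes the token left by the preceding step), whereas the $b$-transition that occurs---at most one, by $N_1\equiv_{\it caus}N_2$---consumes only initially marked places ($s^1_{i-1}$ together with the shared ``mutex'' in $N_1$, and $s^4$ in $N_2$) and is hence causally minimal, i.e.\ concurrent with the entire $a$-chain. Consequently I would take $\Ri$ to consist of the triples $((\rho_1,\NN_1),\beta,(\NN_2,\rho_2))$ in which $(\NN_i,\rho_i)$ is a finite process of $N_i$ performing the same number of $a$-transitions and agreeing on whether a $b$ has occurred, with $\beta$ the canonical order-isomorphism sending the $k$-th $a$ of $\NN_1$ to the $k$-th $a$ of $\NN_2$ and the $b$ (if present) to the $b$. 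The structural observation above guarantees that matched processes really do carry isomorphic causal orders, so each $\beta$ is a genuine order-isomorphism.

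The bulk of the work is the back-and-forth check, and the backward direction is the main obstacle. A forward extension of the $N_1$-process by an $a$ or by its $b$ is mirrored by firing the corresponding $a$, resp.\ the $b$, in $N_2$, extending $\beta$ accordingly. For the backward direction, an $a$-extension of the $N_2$-process is mirrored in $N_1$ by always choosing the \emph{right} transition $t^r_{k+1}$, which never destroys a future $b$-option; and a $b$-extension of the $N_2$-process is mirrored in $N_1$ by firing some $t^b_i$ at a level $i$ so large that $t^l_i$ has not yet fired in the finite $N_1$-process, so that $s^1_{i-1}$ is still marked---such an $i$ exists precisely because only finitely many transitions have occurred. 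The conceptual crux, which is exactly why $\approx_h$ fails to respect inevitability, is that this matching only ever concerns \emph{finite} processes: the complete run of $N_1$ that chooses $t^l_i$ for every $i$ and never performs $b$ is an infinite process, yet every one of its finite approximations can still be extended by a $b$ at a fresh level, so the finitary back-and-forth condition of fully concurrent bisimulation never detects it. Verifying that the order-isomorphisms remain consistent under these extensions (one only ever adjoins the image of the freshly added transition, never re-mapping old ones) then completes the construction.
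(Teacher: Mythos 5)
Your proposal is correct, and it follows the same overall strategy as the paper: upgrade the counterexample of the preceding theorem from causal equivalence to an explicit fully concurrent bisimulation between $N_1$ and $N_2$, so that inevitability of $b$ in $N_2$ but not in $N_1$ gives the result. Where you genuinely differ is in the choice of the relation $\Ri$. The paper's relation tracks transition identities: its first family pairs each process of $N_1$ with the unique process of $N_2$ inducing the \emph{same set} of transitions, and a second family handles those $N_2$-processes containing both $t^l_k$ and $t^b_k$ (a combination impossible in $N_1$) by pairing them with $N_1$-processes in which the $b$ is shifted to a level $h \geq k$ and $t^l_h$, if present, is replaced by $t^r_h$. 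Your relation is coarser: it records only the number of $a$-transitions and whether a $b$ has occurred, and relies on the structural observation that every process of either net has the same order type---a causal chain of $a$'s plus at most one transition $b$ concurrent with all of them---so that the canonical index-preserving map is automatically a label-preserving order-isomorphism. This buys a more uniform verification (answer every $a$-extension of $N_2$ with $t^r$, every $b$-extension with a fresh $t^b_h$ at a level beyond the current finite frontier), at the price of a much larger relation; the paper's tighter relation instead exhibits more precisely which processes are forced to correspond. Both arguments turn on the same crux, which you articulate explicitly: the back-and-forth condition quantifies only over finite processes, each of which can still defer $N_1$'s $b$ to an untouched level, so the infinite $b$-avoiding run of $N_1$ is never detected. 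Your verification sketch elides the reduction of arbitrary finite extensions to single-transition extensions, but the paper's ``it is trivial to check'' elides the same routine step, so this is not a gap.
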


\begin{proof}
Recall that $N_1$ and $N_2$ differ only in their flow relations, and have the same set of transitions.
I need to describe a fully concurrent bisimulation $\Ri$ between $N_1$ and $N_2$.
$\Ri$ consists of a set of triples, each consisting of a process of $N_1$, a related process of
$N_2$, and an order isomorphism between them.
First of all I include all triples $(\PP_1,\beta,\PP_2)$ where $\PP_1$ is an arbitrary process of
$N_1$, $\PP_2$ is the unique process of $N_2$ that induces the same set of transitions as $\PP_1$,
and $\beta$ relates transition of $\PP_1$ and $\PP_2$ when they map to the same transition of $N_i$
($i{=}1,2$).
Secondly, I include all triples $(\PP_1,\beta,\PP_2)$ where $\PP_2$ is an arbitrary process of
$N_2$ inducing both $t^b_k$ and $t^l_k$ for some $k{>}0$, and $\PP_1$ is any process of $N_1$
that induces the same transitions as $\PP_2$ except that, for some $h{\geq}k$ the induced transition
$t^l_h$, if present, is replaced by $t^r_h$, and $t^b_k$ is replaced by $t^b_h$.
($\beta$ should be obvious.)
It is trivial to check that the resulting relation is a fully concurrent bisimulation indeed.
\qed
\end{proof}

\section{Structure preserving bisimilarity respects inevitability}\label{sec:respects}

\begin{definition}\rm
A net $\NN$ is called a \emph{complete} causal net of a net $N$ if it is the
first component of a maximal process $(\NN,\rho)$ of $N$.
Two nets $N_1$ and $N_2$ are \emph{complete causal net equivalent}, notation $\equiv_{\it cc}$, if
they have the same complete causal nets.
\end{definition}
Since the causal nets of a net $N$ are completely determined by the complete causal nets of $N$,
namely as their prefixes, $N_1\equiv_{\it cc}N_2$ implies $N_1\equiv_{\it caus.}N_2$.
It follows immediately from the definition of inevitability that $\equiv_{\it cc}$ respects inevitability.
Thus, to prove that $\bis{sp}$ respects inevitability it suffices to show that $\bis{sp}$ is finer
than $\equiv_{\it cc}$.

\begin{theorem}\rm
If $N_1\bis{sp}N_2$ for nets $N_1$ and $N_2$, then $N_1\equiv_{\it cc}N_2$.
\end{theorem}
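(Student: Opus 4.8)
The plan is to prove directly that $N_1\bis{sp}N_2$ implies $N_1\equiv_{\it cc}N_2$; by symmetry of $\bis{sp}$ it suffices to show that every complete causal net of $N_1$ is also a complete causal net of $N_2$. So let $\PP_1=(\NN,\rho_1)$ be a maximal process of $N_1$. I would first reuse the construction in the proof of \Thm{sp-caus} to manufacture a matching process of $N_2$ on the very same causal net: writing $\PP_1$ as the limit of a chain $\PP_1^0\le\PP_1^1\le\cdots$ of finite approximations (\Prop{limit2}) with $\PP_1^0$ the initial prefix, the process-based sp-bisimulation $\Ri$ obtained from $\Bi$ via \Thm{process-based sp} lifts each $\PP_1^i=(\NN^i,\rho_1^i)$ to a finite process $\PP_2^i=(\NN^i,\rho_2^i)$ of $N_2$ with $(\rho_1^i,\NN^i,\rho_2^i)\in\Ri$ and $\PP_2^i\le\PP_2^{i+1}$; the limit (\Prop{limit1}) is a process $\PP_2=(\NN,\rho_2)$ of $N_2$ sharing the causal net $\NN$. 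The extra bookkeeping I would carry along is that, by the correspondence in \Thm{process-based sp}, each triple $(\rho_1^i,\NN^i,\rho_2^i)\in\Ri$ records a linking $l^i:=\{(\rho_1^i(\sk),\rho_2^i(\sk))\mid\sk\in(\NN^i)^\circ\}\in\Bi$.

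It then remains to show that $\PP_2$ is maximal, which I expect to be the main obstacle. I would argue by contradiction: if $\PP_2$ had a proper extension, then (as in the proof of \Thm{process-based sp}) it would have one adding a single fresh transition $\tk$ with $\precond{\tk}\subseteq\NN^\circ$ and $\rho_2'(\tk)=:t_2\in T_2$, so that $\rho_2(\precond{\tk})=\precond{t_2}$. The key point is that $\precond{\tk}$ is finite and, since every place of $\NN^\circ$ has empty postset and hence is never consumed, each such place already lies in $(\NN^i)^\circ$ for all sufficiently large $i$; thus for some $i$ the finite sub-linking $c:=\{(\rho_1(\sk),\rho_2(\sk))\mid\sk\in\precond{\tk}\}$ satisfies $c\le l^i\in\Bi$ and $\pi_2(c)=\rho_2(\precond{\tk})=\precond{t_2}$. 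Now the second clause of the sp-bisimulation $\Bi$ delivers a transition $t_1\in T_1$ with $\ell(t_1)=\ell(t_2)$ and $\precond{t_1}=\pi_1(c)=\rho_1(\precond{\tk})$. Since $\precond{\tk}\subseteq\NN^\circ$, I can then extend $\PP_1$ by a fresh transition consuming exactly $\precond{\tk}$ and mapping to $t_1$ (with fresh postplaces realising $\postcond{t_1}$), which is a legitimate process of $N_1$ properly extending $\PP_1$---contradicting the maximality of $\PP_1$. Hence $\PP_2$ is maximal and $\NN$ is a complete causal net of $N_2$.

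The crux is really the maximality transfer of the previous paragraph: the construction of \Thm{sp-caus} only guarantees that $\NN$ is \emph{a} causal net of $N_2$, and one must rule out that the computation on the $N_2$ side can run further than on the $N_1$ side. The two ingredients that make this work are (i) that a single extending transition has a finite preset whose places are permanently present at the frontier, so a finite witness can be found inside one finite approximation $\NN^i$, and (ii) that the backward clause of the sp-bisimulation converts an enabled transition of $N_2$ at that witness into an enabled transition of $N_1$ with the same label and preset---exactly the information $\equiv_{\it caus}$ lacks and $\bis{sp}$ supplies. Applying the whole argument with the roles of $N_1$ and $N_2$ interchanged yields the reverse inclusion, so $N_1$ and $N_2$ have the same complete causal nets, i.e.\ $N_1\equiv_{\it cc}N_2$.
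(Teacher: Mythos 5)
Your proof is correct, and its first half---realising $\PP_2=(\NN,\rho_2)$ as the limit (\Prop{limit1}) of a chain of finite processes of $N_2$ matched by $\Ri$ to the finite approximations of $\PP_1$ given by \Prop{limit2}---is exactly the paper's. Where you genuinely diverge is the maximality-transfer step. Both arguments reduce to a one-transition extension on the $N_2$ side and localise its preset in the frontier of a finite approximation, but the paper then stays at the process level: it extends every $\PP^i_2$ ($i\geq k$) by the new transition, pulls each of these extensions back to $N_1$ by a second induction along the process-based sp-bisimulation $\Ri$, and invokes \Prop{limit1} a second time to obtain a process of $N_1$ contradicting maximality of $\PP_1$. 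You instead drop down to the linking level: because your $\Ri$ is the specific one manufactured from $\Bi$ in the proof of \Thm{process-based sp}, each matched pair of approximations records a frontier linking $l^i\in\Bi$, and a single application of the backward clause of $\Bi$ to the sub-linking $c\leq l^i$ over $\precond{\tk}$ yields $t_1\in T_1$ with $\ell(t_1)=\ell(t_2)$ and $\precond{t_1}=\rho_1(\precond{\tk})$; the contradicting process is then built by hand, adjoining one fresh transition with fresh postplaces, exactly as in the forward half of the proof of \Thm{process-based sp}. This buys you something real: you avoid the second chain-and-limit construction entirely, and your contradicting process is an extension of $\PP_1$ \emph{by construction}, whereas in the paper's route the pulled-back chain is only pinned down to agree with $\rho_1$ on $\NN^k$ (the backward clause of $\Ri$ does not let one prescribe the folding on the newly matched transitions), so that seeing the limit as an extension of $\PP_1$ itself, rather than of some other process over $\NN$, takes additional care. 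The trade-off is that the paper's argument works verbatim for an arbitrary process-based sp-bisimulation, while yours needs the bookkeeping tying $\Ri$ back to a linking-level $\Bi$---which is, however, always available.

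One small repair is needed: you assert that $\precond{\tk}$ is finite, but the definition of nets does not guarantee this (presets of transitions need not be finite, and bounded parallelism does not exclude infinite ones). What you actually need---that $\precond{\tk}\subseteq(\NN^i)^\circ$ for a \emph{single} $i$---follows instead from the third clause of \Def{process} applied to the extended causal net: $\tk$ has only finitely many causes, these all lie in some finite approximation $\NN^k$, and every place of $\precond{\tk}$ is either initial or a postplace of one of these causes, hence lies in $(\NN^k)^\circ$. This is precisely how the paper localises the preset; with that substitution your argument goes through unchanged.
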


\begin{proof}
Suppose $N_1\mathbin{\bis{sp}}N_2$.
By \Thm{process-based sp} there exists a process-based sp-bisimulation $\Ri$ between $N_1$ and $N_2$.
$\Ri$ must contain a triple \plat{$(\rho^0_1,\NN^0,\rho^0_2)$} with \plat{$\NN^0$} a causal net containing no transitions.
So \plat{$(\NN^0\!,\rho^0_1)$} and \plat{$(\NN^0\!,\rho^0_2)$} are initial processes of $N_1$ and $N_2$, respectively. 
The net \plat{$\NN^0$} contains isolated places only.

Let $\NN$ be a complete causal net of $N_1$. I have to prove that $\NN$ is also a complete causal net of $N_2$.
Without loss of generality I may assume that $\NN^0$ is a prefix of $\NN$, as being a complete
causal net of a given Petri net is invariant under renaming of its places.

So $N_1$ has a complete process $\PP_1=(\NN,\rho_1)$.
By \Prop{limit2}, $\PP_1$ is the limit of a chain $\PP^0_1 \mathbin\leq \PP^1_1 \mathbin\leq \PP^2_1
\mathbin\leq \dots$ of finite processes of $N_1$. Moreover, for $\PP_1^0$ one can take $(\NN^0\!,\rho^0_1)$.
Let $\PP^i_1=(\NN^i\!,\rho_1^i)$ for $i\mathbin\in\IN$.
By induction on $i\mathbin\in\IN$, it now follows from the properties of a process-based sp-bisimulation that
$N_2$ has processes $\PP_2^{i+1}\mathbin=(\NN^{i+1}\!,\rho_2^{i+1})$, such that
$(\NN^{i}\!,\rho_2^{i})\mathbin\leq(\NN^{i+1}\!,\rho_2^{i+1})$
and $(\rho^{i+1}_1\!,\NN^{i+1}\!,\rho^{i+1}_2)\mathbin\in\Ri$.
Using \Prop{limit1}, the limit $\PP_2=(\NN,\rho_2)$ of this chain is a process of $N_2$.
It remains to show that $\PP_2$ is complete.

Towards a contradiction, let $\PP_{2u}\mathbin=(\NN_{u},\rho_{2u})$ be a proper extension of $\PP_2$, say
with just one transition, $u$. Then $\precond{u} \subseteq \NN^\circ$.
By the third requirement on occurrence nets of \Def{process}, their are only finitely many
transitions $t$ with $(t,u)\mathbin\in \FF^+_{2u}$. Hence one of the finite approximations \plat{$\NN^k$} of
$\NN$ contains all these transitions. So \plat{$\precond{u} \subseteq (\NN^k)^\circ$}. Let, for all
$i\mathbin\geq k$, $\PP^i_{2u}\mathbin=(\NN_u^i,\rho^i_{2u})$ be the finite prefix of $\PP_{2u}$ that extends
$\PP^i_2$ with the single transition $u$. Then $\PP^i_{2u} \leq \PP^{i+1}_{2u}$ for all $i\mathbin\geq k$,
and the limit of the chain $\PP^k_{2u} \mathbin\leq \PP^{k+1}_{2u} \mathbin\leq \dots$ is $\PP_{2u}$.
By induction on $i\mathbin\in\IN$, it now follows from the properties of a process-based sp-bisimulation that
$N_1$ has processes \plat{$\PP_{1u}^{i}\mathbin=(\NN^{i}_u,\rho_{1u}^{i})$} for all $i\mathbin\geq k$, such that
$(\rho^{i}_{1u},\NN^{i}_u,\rho^{i}_{2u})\mathbin\in\Ri$,
$(\NN^{k}\!,\rho_1^{k})\mathbin\leq(\NN^{k}_u,\rho_{1u}^{k})$ and
$(\NN^{i}_u,\rho_{1u}^{i})\mathbin\leq(\NN_u^{i+1},\rho_{1u}^{i+1})$.
Using \Prop{limit1}, the limit $\PP_{1u}=(\NN_u,\rho_{1u})$ of this chain is a process of $N_1$.
It extends $\PP_1$ with the single transition $u$, contradicting the maximality of $\PP_1$.
\qed
\end{proof}

\section{Inevitability for reactive systems}\label{sec:reactive}

In the modelling of reactive systems, an action performed by a net is typically a synchronisation
between the net itself and its environment. Such an action can take place only when the net is ready
to perform it, as well as its environment. In this setting, an adequate formalisation of the concepts of
justness and inevitability requires keeping track of the set of actions that from some point
onwards are blocked by the environment---e.g.\ because the environment is not ready to partake in
the synchronisation. Such actions are not required to occur eventually, even when they are
perpetually enabled by the net itself. Let's speak of a \emph{$Y$-environment} if $Y$ is this set of
actions. In \Sect{inevitability} I restricted attention to $\emptyset$-environments, in which an
action can happen as soon as it is enabled by the net in question.
In \cite{GH15a} a path is called $Y$-just iff, when assuming justness, it models a
complete run of the represented system in a $Y$-environment. The below is a formalisation for this
concept for Petri nets under the individual token interpretation.

\begin{definition}\rm
A process of a net is \emph{$Y$-just} or \emph{$Y$-complete} it each of its proper extensions adds a
transition with a label in $Y$.
\end{definition}
Note that a just or complete process as defined in \Sect{inevitability} is a $\emptyset$-just or
$\emptyset$-complete process.
In applications there often is a subset of actions that are known to be fully controlled by the
system under consideration, and not by its environment. Such actions are often called \emph{non-blocking}.
A typical example from process algebra \cite{Mi90ccs} is the internal action $\tau$.
In such a setting, $Y$-environments exists only for sets of actions $Y\subseteq \HC$,
where $\HC$ is the set of all non-non-blocking actions.

A process of a net is \emph{complete} if it models a complete run of the represented
system in some environment. This is the case iff it is $Y$-complete for some set $Y\subseteq\HC$,
which is the case iff it is $\HC$-complete.

In \cite{Re13}, non-blocking is a property of transitions rather than actions, and non-blocking
transitions are called \emph{hot}. Transitions that are not hot are \emph{cold}, which inspired my
choice of the latter $\HC$ above.  In this setting, a process $\PP=(\NN,\rho)$ is complete iff
the marking $\rho(\NN^\circ)$ enables cold transitions only \cite{Re13}.

\begin{definition}\rm
A action $b$ is \emph{$Y$-inevitable} in a net if each $Y$-complete process contains a transition
labelled $b$.
A semantic equivalence $\approx$ \emph{respects $Y$-inevitability} if whenever $N_1\approx N_2$ and
$b$ is $Y$-inevitable in $N_1$, then $b$ is $Y$-inevitable in $N_2$.
It \emph{respects inevitability} iff it respects $Y$-inevitability for each $Y\subseteq\HC$.
\end{definition}

\noindent
In \Sect{counterexample} it is shown that $\equiv_{\it caus}$ and $\approx_h$ do not respect
$\emptyset$-inevitability. From this it follows that they do not respect inevitability.
In \Sect{respects} it is shown that $\bis{sp}$ does respect $\emptyset$-inevitability.
By means of a trivial adaptation the same proof shows that $\bis{sp}$ respects
$Y$-inevitability, for arbitrary $Y$. All that is needed is to assume that the transition $u$
in that proof has a label $\notin Y$. Thus $\bis{sp}$ respects inevitability.

\section{Conclusion}

This paper proposes a novel semantic equivalence for current systems
represented as Petri nets: \emph{structure preserving bisimilarity}.
As a major application---the one that inspired this work---it is used
to establish the agreement between the operational Petri net semantics
of the process algebra CCSP as proposed by Olderog, and its
denotational counterpart. An earlier semantic relation used for this
purpose was Olderog's \emph{strong bisimilarity} on safe Petri nets,
but that relation failed to be transitive. I hereby conjecture that on
the subclass of occurrence nets, strong bisimilarity and structure
preserving bisimilarity coincide. If this it true, it follows,
together with the observations of \Sect{strong} that strong
bisimilarity is included in structure preserving bisimilarity, and of
\SSect{occ} that each safe net is strongly bisimilar with its
unfolding into an occurrence net, that on safe nets structure
preserving bisimilarity is the transitive closure of strong bisimilarity.

\SSect{criteria} proposes nine requirements on a semantic equivalence
that is used for purposes like the one above. I have shown that
structure preserving bisimilarity meets eight of these requirements
and conjecture that it meets the remaining one as well.
\begin{list}{\normalfont\bfseries --}{\leftmargin 15pt}
\vspace{-1ex}
\item It meets Requirement 1, that it respects branching time, as a
  consequence of \Thm{sp-fcb}, saying that it is finer than history
  preserving bisimilarity, which is known to be finer than
  interleaving bisimilarity.
\item It meets Requirement 2, that it fully captures causality and
  concurrency (and their interplay with branching time),\footnote{when taking the
  individual token interpretation of nets, or restricting\label{fn}
  attention to safe ones} also as a consequence of \Thm{sp-fcb}.
\item It meets Requirement 3, that it respects inevitability (under
  the standard interpretation of Petri nets that assumes justness but
  not fairness),$^{\mbox{\scriptsize \ref{fn}}}$ as shown in \Sect{respects}.
\item It meets Requirement 4, that it is real-time consistent, as a
  result of \Thm{sp-fcb}.
\item I conjecture that it meets Requirement 5, that it is preserved
  under action refinement.
\item It meets Requirement 6, that it is finer than causal equivalence,
  by \Thm{sp-caus}.
\item It meets Requirement 7, that it is coarser than $\equiv_{\it occ}$,
  as shown in \SSect{occ}.
\item It meets Requirement 8, that it is a congruence for the
  CCSP operators, by Thm.~\ref{thm:congruence}.
\item It meets Requirement 9, that it allows to establish agreement between the
  operational and denotational interpretations of CCSP operators,
  since it is coarser than Olderog's strong bisimilarity, as shown in \Sect{strong}.
\pagebreak[2]
\end{list}
Moreover, structure preserving bisimilarity is the first known
equivalence that meets these requirements. In fact, it is the first
that meets the key Requirements 3, 4, 7 and 9.

\paragraph{Acknowledgement}
My thanks to Ursula Goltz for proofreading and valuable feedback.


\end{document}